\documentclass[11pt,letterpaper]{article}
  \bibliographystyle{alpha}
\usepackage{hdx}


\newcommand{\Adj}[1]{\mathrm{Adj}\parens{#1}}

\newcommand{\dT}{T}
\newcommand{\dH}{H}
\newcommand{\LD}{\mathsf{LocalDensifier}}

\newcommand{\SimpComp}{\calS}

\newcommand{\kface}{k\text{-}\mathsf{faces}}
\newcommand{\Face}[1]{#1\text{-}\mathsf{faces}}

\newcommand{\Link}{\mathsf{link}}
\newcommand{\Expansion}{\mathsf{TwoSidedGap}}
\newcommand{\GlExp}{\mathsf{GlobalExp}}
\newcommand{\LocExp}{\mathsf{LocalExp}}
\newcommand{\tensor}{\otimes}
\newcommand{\Spec}{\mathsf{Spectrum}}
\newcommand{\dir}{\mathcal{E}}

\newcommand{\im}{\mathsf{image}}
\newcommand{\finalHDX}{\mathcal{Q}}
\newcommand{\baseHDX}{\mathcal{B}}
\newcommand{\baseExp}{G}
\newcommand{\up}{\uparrow}
\newcommand{\down}{\downarrow}

\newcommand{\duwalk}{\finalHDX_k^{\down\up}}
\newcommand{\duwalksp}{\wt{\finalHDX}_k^{\down\up}}
\newcommand{\duwalkbase}{\baseHDX_{\down\up}}
\newcommand{\oproj}{P_{o}}
\newcommand{\orest}[1]{R_{o,#1}}
\newcommand{\iproj}{P_{I}}
\newcommand{\irest}{R_{I}}

\newcommand{\States}{\mathsf{States}}

\newcommand{\gap}{\mathsf{OneSidedGap}}
\renewcommand{\tilde}{\wt}
\renewcommand{\Tilde}{\wt}

\begin{document}

\title{High-Dimensional Expanders from Expanders}
\author{Siqi Liu\thanks{EECS Department, UC Berkeley.  \texttt{sliu18@berkeley.edu}. This research was supported in part by a Google Faculty Award and donations from the Ethereum Foundation and the Interchain Foundation.} \and Sidhanth Mohanty\thanks{EECS Department, UC Berkeley.   \texttt{sidhanthm@cs.berkeley.edu}.  Supported by NSF Grant CCF-1718695.} \and Elizabeth Yang\thanks{EECS Department, UC Berkeley.   \texttt{elizabeth\_yang@berkeley.edu}.  Supported by NSF Grant DGE 1752814.}}
\date{\today}
\maketitle

\begin{abstract}
    We present an elementary way to transform an expander graph into a simplicial complex where all high order random walks have a constant spectral gap, i.e., they converge rapidly to the stationary distribution.  As an upshot, we obtain new constructions, as well as a natural probabilistic model to sample constant degree high-dimensional expanders.

    In particular, we show that given an expander graph $G$, adding self loops to $G$ and taking the tensor product of the modified graph with a high-dimensional expander produces a new high-dimensional expander.  Our proof of rapid mixing of high order random walks is based on the decomposable Markov chains framework introduced by \cite{jerrum}.
\end{abstract}

\thispagestyle{empty}
\setcounter{page}{0}
\newpage
\thispagestyle{empty}
\tableofcontents
\setcounter{page}{0}
\thispagestyle{empty}
\newpage

\section{Introduction}
Expander graphs are graphs which are sparse, yet well-connected. They play important roles in applications such as the construction of pseudorandom generators and error-correcting codes \cite{expander-code}. Motivated by both purely theoretical questions, such as the topological overlapping problem, and applications in computer science, such as PCPs, a generalization of expansion to high dimensional complexes has recently emerged. We work with $d$--dimensional complexes, which not only have vertices and edges, but also hyperedges of $k$ vertices, for any $k \leq d + 1$.  Whereas in the one-dimensional world of graphs, the properties of edge expansion, spectral expansion and rapid mixing of random walks are equivalent, their generalization to   several different characterizations of ``expansion'' have been developed for these high--dimensional complexes. In particular, the high-dimensional extension of spectral expansion is simple to state, and implies rapid mixing of high order walks \cite{kaufman-oppenheim} and agreement expanders \cite{dinur-kaufman}. 

We construct bounded-degree high--dimensional expanders of all constant--sized dimensions, where the high order random walks have a constant spectral gap, and thus mix rapidly. We base our HDX's from existing $\dT$-regular one-dimensional constructions, which can be sampled readily from the space of all $\dT$-regular graphs. This endows a natural distribution from which we can sample HDX's of our construction as well.  After the first version of this paper was written, it was brought to our notice by a reviewer that the construction in this paper has been previously discussed in the community.  Nevertheless, a contribution of our work is a rigorous analysis of the expansion properties of this construction.

One sufficient, but not necessary criterion that implies rapid mixing is \emph{spectral}, which comes from the graph theoretic notion below.

\begin{definition}[Informal] \label{local-spectral-def}
    A $d$--dimensional $\lambda$--spectral expander is a $d$--dimensional simplicial complex (i.e. a hypergraph whose faces satisfy downward closure) such that 
    \begin{itemize}
        \item (Global Expansion) The vertices and edges (sets of two vertices) of the complex constitute a $\lambda$--spectral expander graph,
        \item (Local Expansion) For every hyperedge $E$ of size $\leq d-1$ in the hypergraph, the vertices and edges in the "neighborhood" of $E$ also constitute a $\lambda$--spectral expander. (The precise definition of "neighborhood" will be discussed later.)
    \end{itemize}
\end{definition}

Most known constructions of bounded-degree high--dimensional spectral expanders are heavily algebraic, rather than combinatorial or randomized. In contrast, there are a wealth of different constructions for bounded-degree (one-dimensional) expander graphs \cite{expander-survey}. Some of these are also algebraic, such as the famous LPS construction of Ramanujan graphs \cite{lps}, but there are also many simple, probabilistic constructions of expanders. In particular, Friedman's Theorem says that with high probability, random $d$-regular graphs are excellent expanders \cite{friedman}. 

Unfortunately, random $d$--dimensional hypergraphs with low degrees are not $d$--dimensional expander graphs. For a hypergraph with $n$ vertices, we need a roughly $n\left(\frac{\log{n}}{n}\right)^{1/d}$-degree Erdos-Renyi graph to make the neighborhood of every hyperedge of size $\leq d-1$ to be connected with high probability. While random low degree hypergraphs are not high--dimensional expanders, our construction provides simple probabilistic high--dimensional expanders of all dimensions.

\subsection{Summary of our results}

\paragraph{Construction.}  We construct an $\dH$--dimensional simplicial complex $\finalHDX$ on $n\cdot s$ vertices, from a graph $G$ of $n$ vertices and a (small) $\dH$--dimensional complete simplicial complex $\baseHDX$ on $s$ vertices.  To construct $\finalHDX$, we replace each vertex $v$ of $G$ with a copy of $\baseHDX$ which we denote $\baseHDX_v$. Denote the copy of a vertex $w\in \baseHDX$ in $\baseHDX_v$ by $(v,w)$.  The faces of $\finalHDX$ are chosen in the following way: for every face $\{w_1,w_2,\dots,w_k\}$ in $\baseHDX$, add it $\{(v_1,w_1),(v_2,w_2),\dots,(v_k,w_k)\}$ to the complex, where for some edge $e$ in $G$, the vertices $v_1,\dots v_k$ are each one of the endpoints of $e$; in particular there are $2$ choices for each $v_i$.   The main punchline of our work is that when $G$ is a (triangle-free) expander graph, the high order random walks on $\finalHDX$ mix rapidly.  Specifically, we prove:
\begin{theorem}[Main theorem, informal version of \pref{thm:mixing-main}]
    Suppose $G$ is a triangle-free expander graph with two-sided spectral gap $\rho$.  For every $k$ such that $1\le k < H$, there is a constant $C$ depending on $k,H,s,\rho$, but independent of $n$ such that the Markov transition matrix for the up-down walk on the $k$-faces of $\finalHDX$ has two-sided spectral gap $C$.
\end{theorem}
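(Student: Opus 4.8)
The plan is to lower-bound $1-\lambda_2$ of the up-down walk $P:=\udwalk$ using the decomposable-Markov-chains framework of \cite{jerrum}: cover the $k$-faces by pieces on which $P$ mixes rapidly, show the induced walk between pieces inherits the expansion of $\baseExp$, and recombine. Two reductions come first. On any simplicial complex the up-down walk is self-adjoint and positive semidefinite with respect to its stationary measure, so all its eigenvalues lie in $[0,1]$ and its two-sided spectral gap coincides with $1-\lambda_2$; thus it suffices to bound $\lambda_2(P)$ away from $1$. Next, write $\finalHDX=\baseExp'\tensor\baseHDX$, where $\baseExp'$ is $\baseExp$ with a self-loop attached at every vertex; since $\baseExp$ is triangle-free, the cliques of $\baseExp'$ are exactly the single vertices and the single edges of $\baseExp$, so each $k$-face $\sigma$ of $\finalHDX$ has a \emph{footprint} $\varphi(\sigma)\subseteq V(\baseExp)$ --- the set of $\baseExp$-coordinates it uses --- with $|\varphi(\sigma)|\in\{1,2\}$; faces with $|\varphi(\sigma)|=1$ live inside a single copy $\baseHDX_v$ (``monochromatic'' faces).

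For the decomposition I would take, for each $v\in V(\baseExp)$, the block $B_v=\{\sigma:v\in\varphi(\sigma)\}$; this is a cover in which every face lies in at most two blocks, and one passes to the associated disjoint, reversible lift on $\{(\sigma,v):v\in\varphi(\sigma)\}$ --- equivalently splits the mass of a two-footprint face between its two blocks --- so that the overlapping-decomposition form of the framework applies and the resulting lower bound transfers back to $P$. The case $k=1$ I would dispatch by hand: the $1$-skeleton of $\finalHDX$ has adjacency matrix $(A_{\baseExp}+I)\tensor(J_s-I_s)$, whose normalized second eigenvalue is at most $1-\Omega(\rho)$ (a direct computation from the spectrum of $\baseExp$), so the up-down walk on $1$-faces, being $\tfrac12(I+\text{normalized }1\text{-skeleton adjacency})$, already has gap $\Omega(\rho)$. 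For $k\ge 2$, the structural fact that makes the blocks usable is that a single up-down step from a face in $B_v$ can leave $B_v$ only by deleting $v$ from the footprint, which forces it to delete the \emph{unique} $v$-colored vertex of the face; this has probability at most $\tfrac1{2(k+1)}$, bounded away from $1$ uniformly in $n$ and in $\deg_\baseExp(v)$, so no restriction chain is close to the identity.

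I then need: (a) each restriction chain $P_{[B_v]}$ has gap $\Omega_{s,k,H}(1)$, independent of $n$ and of $\deg_\baseExp(v)$; (b) the projection chain has gap $\Omega_{s,k}(\rho)$. For (a), the weighted graph of $P_{[B_v]}$ (for $k\ge2$) is a ``star'': its hub is the set $S_v$ of monochromatic faces on $\baseHDX_v$, on which the walk is a slowed copy of the up-down walk of the complete complex $\baseHDX$; and for each neighbor $x$ of $v$ there is a ``blade'' of faces with footprint $\{v,x\}$, on which the walk is a restriction of the up-down walk of $\baseHDX\tensor K_2$, attached to the hub through $S_v$, with distinct blades otherwise disconnected. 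I would bound the conductance of this graph from below by a constant depending only on $s,k,H$ --- crucially not on the number $\deg_\baseExp(v)$ of blades --- by checking that each blade escapes into $S_v$ with constant probability, that $S_v$ escapes into the blades with constant probability (here it is essential that $S_v$-vertices have $\approx\deg_\baseExp(v)$ edges into the blades, matching their stationary mass, so the naive $1/\deg_\baseExp(v)$ loss is avoided), and that within $S_v$ and within each blade the pertinent walk has constant conductance (for $\baseHDX$ this is because it is a complete complex; for the blade this reduces to the same statement for $\baseHDX\tensor K_2$, obtained from a smaller instance of the whole decomposition with base graph the single edge $K_2$, or from any available spectral bound for tensor products of complete complexes). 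Cheeger's inequality then gives (a); note $P_{[B_v]}$ is itself positive semidefinite (a principal submatrix of $P$ plus a nonnegative diagonal), so negative eigenvalues are not an issue. For (b), computing the transition probabilities of the projection chain on $V(\baseExp)$ shows it is, up to laziness bounded away from $1$ in terms of $k$ alone, the simple random walk on $\baseExp'$, whose gap is $\Omega(\rho)$ by the two-sided expansion hypothesis on $\baseExp$ --- attaching one self-loop per vertex to a $\dT$-regular graph with gap $\rho$ leaves gap $\ge\rho/2$. Feeding (a) and (b) into the decomposition theorem of \cite{jerrum} yields $\lambda_2(P)\le 1-\Omega_{s,k,H,\rho}(1)$ with the hidden constant independent of $n$, which is the assertion.

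The step I expect to be the main obstacle is (a): proving that the restriction chains' gaps do not degrade with $\deg_\baseExp$. A generic ``glue of $\deg_\baseExp(v)$ expanders joined at a common part'' bound costs a factor $1/\deg_\baseExp(v)$, which would be fatal; avoiding it requires using that the hub $S_v$ simultaneously carries a constant fraction of the block's stationary mass and is uniformly well-connected to every one of the blades, so that its vertices' large degree is matched by an equally large boundary. A secondary, more bookkeeping-heavy point is the correct set-up of the overlapping decomposition and its reversible lift, together with the small recursion into $\baseHDX\tensor K_2$; the projection-chain estimate and the $k=1$ base case are comparatively routine.
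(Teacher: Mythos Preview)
Your approach uses the same core tool as the paper --- the decomposable-Markov-chain theorem of \cite{jerrum} --- but the decomposition itself is genuinely different. The paper partitions the state space by the \emph{edge} of $\baseExp$ a face sits on: it splits each $0$-offset (monochromatic) face into $\dT$ copies, one per incident edge, so that every state carries a unique edge label, and the outer projection chain becomes a lazy walk on the \emph{line graph} of $\baseExp$. You instead partition by \emph{vertex}, splitting two-footprint faces into two copies, with projection chain a lazy walk on $\baseExp$ itself. For the restriction chains the paper applies \cite{jerrum} a \emph{second} time: within a fixed edge it partitions again by the base face $F\in\kface(\baseHDX)$, obtaining an inner projection that is the down-up walk on $\baseHDX$ and inner restrictions that are weighted $(k{+}1)$-cubes; this nested decomposition yields the explicit bound $\Expansion(\duwalk)\ge \Expansion(\baseExp)\big/\bigl(64\dT^2(k{+}1)^2(s{-}k)(2^k{-}1)\bigr)$. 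Your single-level decomposition with a Cheeger argument on the star-shaped restriction chains would also succeed but delivers only an existential constant. Two smaller differences: the paper works with the down-up walk and controls the bottom of the spectrum via a self-loop lower bound, whereas your observation that $\udwalk$ is positive semidefinite dispatches the lower tail in one line; and the paper does not treat $k=1$ separately.

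One correction worth making: the obstacle you flag --- that the restriction-chain gap might degrade with $\deg_\baseExp(v)$ --- is not an obstacle in this setting. Since $\baseExp$ is $\dT$-regular with $\dT$ a fixed parameter of the construction, each of your blocks $B_v$ is a connected, reversible, lazy chain on $O\!\bigl(\dT\binom{s}{k+1}2^{k+1}\bigr)$ states whose nonzero transition probabilities are bounded below by functions of $s,k,H,\dT$ alone; its spectral gap is therefore automatically a positive constant in those parameters, and even a crude $1/\dT$ loss would be harmless for the informal theorem. The paper's edge-based partition sidesteps this entire discussion because its restriction chains live on a \emph{single} edge and carry no $\dT$-fold branching --- which is precisely what enables the clean second application of \cite{jerrum} and the explicit quantitative bound.
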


\paragraph{First attempt at proving rapid mixing of high order random walks.}  \cite{kaufman-mass}, which introduced the notions of up-down and down-up random walks, and subsequent works \cite{dinur-kaufman,kaufman-oppenheim,kaufman-oppenheim-construction,bases-sampling} developed and followed the ``local-to-global paradigm'' to prove rapid mixing of high order random walks.  In particular, each of these works would:
\begin{enumerate}[A.]
\item \label{step:links-small-eigen} Establish that all the links of a relevant simplicial complex have ``small'' second eigenvalue.
\item \label{step:local-exp-to-rapid-mix} Prove or cite a statement about how rapid mixing follows from small second eigenvalues of links (such as \pref{thm:informal-KO17}).
\end{enumerate}
Then, \pref{step:links-small-eigen} and \pref{step:local-exp-to-rapid-mix} together would imply that the up-down and down-up random walks on the simplicial complexes they cared about mixed rapidly.  This immediately motivates first bounding the second eigenvalue of the links of our construction, and applying the quantitatively strongest known version of the type of theorem alluded to in \pref{step:local-exp-to-rapid-mix}.  Thus, in \pref{sec:local-exp} we analyze the second eigenvalue of all links of $\finalHDX$ and prove:
\begin{theorem}[Informal version of \pref{thm:local-exp-main}]\label{thm:informal-local-exp} 
    The two-sided spectral gap of every link in $\finalHDX$ is bounded by approximately $\frac{1}{2}$.
\end{theorem}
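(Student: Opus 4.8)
The plan is to compute, for each non-empty face $\tau$ of $\finalHDX$, the spectrum of the $1$-skeleton of $\Link(\tau)$ directly; the link of the empty face is the global $1$-skeleton $G' \tensor K_s$ (where $G'$ is $G$ with a self-loop added at every vertex), whose expansion parameter is essentially that of $G$ and is bounded separately. Fix such a $\tau$, and write each vertex of $\finalHDX$ as a pair $(v,w)$ with $v\in V(G)$, $w\in V(\baseHDX)$. Then $\tau$ is determined by its $\baseHDX$-shadow $\phi=\{w:(v,w)\in\tau\}\subseteq V(\baseHDX)$ together with the map sending each such $w$ to its $G$-coordinate; let $S\subseteq V(G)$ be the set of $G$-coordinates appearing in $\tau$. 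By construction $S$ is a clique in $G'$, so $|S|\in\{1,2\}$, with $|S|=1$ precisely when $\tau\subseteq\baseHDX_v$ for a single $v$.

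The first step is to identify the weighted $1$-skeleton of $\Link(\tau)$ with a tensor product $\Gamma_S\tensor K_{s-|\tau|}$. Here $K_{s-|\tau|}$ is the complete graph on $V(\baseHDX)\setminus\phi$, i.e.\ the $1$-skeleton of $\Link_{\baseHDX}(\phi)$ (itself a complete complex), and $\Gamma_S$ is the small weighted graph on the vertex set $\{x\in V(G): S\cup\{x\}\text{ is a clique in }G'\}$ in which $x_1$ and $x_2$ (possibly equal, i.e.\ allowing self-loops) are joined whenever $S\cup\{x_1,x_2\}$ is a clique in $G'$. The content of this step is a routine count: the weight of a link-edge $\{(x_1,w_1),(x_2,w_2)\}$ is the number of top faces of $\finalHDX$ containing $\{(x_1,w_1),(x_2,w_2)\}\cup\tau$, and one checks this count depends only on $(x_1,x_2)$ and not on $w_1,w_2$ — which is exactly what makes the $1$-skeleton a genuine weighted tensor product. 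Consequently its random-walk matrix is $\NormAdj{\Gamma_S}\tensor\NormAdj{K_{s-|\tau|}}$, so its spectrum is the product of the two spectra.

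It then remains to bound the two factors. The nontrivial eigenvalues of $\NormAdj{K_{s-|\tau|}}$ all equal $-\tfrac1{s-|\tau|-1}$, which has modulus at most $\tfrac12$ as soon as $s\ge\dH+2$ (using $|\tau|\le\dH-1$). For $\NormAdj{\Gamma_S}$ this is where triangle-freeness of $G$ enters. If $|S|=2$, say $S=\{u,v\}$ with $uv\in E(G)$, then $u$ and $v$ have no common neighbor, so $\Gamma_S$ consists only of $u$ and $v$ and is (after the weight computation) the complete graph on two vertices with self-loops and uniform weights, whose unique nontrivial eigenvalue is $0$. If $|S|=1$, say $S=\{u\}$, then $G[N_G(u)]$ is edgeless, so $\Gamma_{\{u\}}$ is a weighted ``star with loops'': a center $u$ carrying a heavy self-loop (of weight roughly $\deg_G(u)$ times that of an edge) and each leaf $v'\in N_G(u)$ carrying a self-loop of the \emph{same} weight as its single edge to $u$. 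Decomposing into symmetric and leaf-antisymmetric eigenvectors shows that the spectrum of $\NormAdj{\Gamma_{\{u\}}}$ is $1$, a single eigenvalue of negligible modulus, and the value $\tfrac12$ with multiplicity $\deg_G(u)-1$. Multiplying the two spectra, the nontrivial eigenvalues of every non-empty link have modulus at most $\tfrac12$ — equivalently, its two-sided spectral gap is at least $\tfrac12$ — with the value $\tfrac12$ attained exactly by links of faces contained in a single copy $\baseHDX_v$, which gives the stated bound of approximately $\tfrac12$.

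The main obstacle is the middle step: verifying that the top-face count defining the link weights is genuinely independent of the $\baseHDX$-labels $w_1,w_2$, so that the tensor decomposition is exact rather than approximate. This is also where the two cases $|S|\in\{1,2\}$ really diverge — when $|S|=1$ the extension of a partial assignment to the rest of a top face may route the new $\baseHDX$-vertices to any single neighbor of $u$ (but, by triangle-freeness, not to two different neighbors simultaneously), and this asymmetry between the center and the leaves of $\Gamma_{\{u\}}$ is precisely what produces the heavier self-loop weight at $u$ and, ultimately, the eigenvalue $\tfrac12$.
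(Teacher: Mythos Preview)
Your proposal is correct and follows essentially the same argument as the paper: you tensor the link's $1$-skeleton into a small $G$-side graph $\Gamma_S$ (an edge-with-loops when $|S|=2$, a star-with-loops when $|S|=1$) and a complete-graph factor $K_{s-|\tau|}$, then bound each spectrum separately, with the star's leaf-antisymmetric eigenspace producing the eigenvalue $\tfrac12$. The paper's presentation differs only cosmetically --- it names the two cases $\LD(\mathsf{edge},\Link(S'))$ and $\LD(\mathsf{star},\Link(S'))$ and records the explicit weights $w_{C,k},w_{S,k}$ up front --- but the decomposition, the case split, and the eigenvector analysis of the star are identical to yours.
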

And the `quantitatively strongest' known ``local-to-global'' theorem is
\begin{theorem}[Informal statement of {\cite[Theorem 5]{kaufman-oppenheim}}] \label{thm:informal-KO17}
    If the second eigenvalue of every link of a simplicial complex $\SimpComp$ is bounded by $\lambda$, then the up-down walk on $k$-faces of $\SimpComp$, $\SimpComp_k^{\up\down}$ satisfies:
    $$
        \lambda_2(\SimpComp_k^{\up\down}) \leq \left(1 - \frac{1}{k + 1} \right) + k\lambda.
    $$
\end{theorem}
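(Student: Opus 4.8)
The plan is to prove the bound by Garland's ``local-to-global'' method, inducting on $k$ (equivalently, on the dimension of $\SimpComp$). First I would present $\SimpComp_k^{\up\down}$ as the self-adjoint operator $D_{k+1}U_k$ on $\ell^2$ of the $k$-faces with their stationary weights, where $U_j$ averages a function on $j$-faces up to $(j+1)$-faces and $D_{j+1}$ averages back down; its top eigenvalue is $1$ with the all-ones eigenvector, so it suffices to bound the Rayleigh quotient $\langle f, \SimpComp_k^{\up\down} f\rangle / \langle f, f\rangle$ over $f$ orthogonal to the constants. The base case is $k$ minimal (think $k=1$): there $\SimpComp_k^{\up\down}$ is the $\tfrac12$-lazy random walk on the $1$-skeleton graph, whose second eigenvalue is $\tfrac12 + \tfrac12\lambda \le (1-\tfrac12)+\lambda$ by Global Expansion, which is precisely the ``$\tau=\emptyset$'' instance of the hypothesis that every link has second eigenvalue at most $\lambda$.

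For the inductive step, the key ingredient is a \emph{localization identity}. A single step of $\SimpComp_k^{\up\down}$ out of a $k$-face $\sigma$ adds a random vertex to form a $(k+1)$-face and then removes a uniformly random vertex of it; with probability exactly $\tfrac{1}{k+1}$ it removes the vertex it just added and stays at $\sigma$, and otherwise it removes a uniform original vertex $v\in\sigma$ and takes one step ``inside the link of $\sigma\setminus\{v\}$''. Averaging over $v$ and regrouping leads to a decomposition of the shape
$$ \SimpComp_k^{\up\down} \;=\; \frac{1}{k+1}\,I \;+\; \frac{k}{k+1}\,\mathbb{E}_{v}\bigl[\,\text{up-down walk on the }(k-1)\text{-faces of }\Link(v)\text{, lifted to }k\text{-faces}\,\bigr] \;+\; E, $$
where the expectation over vertices $v$ is taken with respect to the stationary weights and $E$ is an error term that is trivial on the ``link-localized'' part of $\ell^2$ and, on the complementary part, factors through functions on the $1$-skeleton and is therefore controlled by $\lambda$. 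Equivalently, one phrases this as a Garland identity relating the global Dirichlet form $\langle f,(I-\SimpComp_k^{\up\down})f\rangle$ to the stationary average of the local Dirichlet forms $\langle f_v,(I-M_v)f_v\rangle$, with $f_v$ the restriction of $f$ to $\Link(v)$ and $M_v$ its up-down walk on $(k-1)$-faces, up to lower-order terms governed by the $1$-skeleton's second eigenvalue.

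Granting such an identity, the induction closes quickly: every link of $\Link(v)$ is itself a link of $\SimpComp$, hence a $\lambda$-spectral expander, so by the inductive hypothesis $\lambda_2(M_v)\le(1-\tfrac1k)+(k-1)\lambda$, and the lift to $k$-faces preserves this on the relevant subspace. Substituting this together with the $O(\lambda)$ control on $E$ and maximizing over $f\perp\mathbf 1$ yields a recursion which unwinds to $\lambda_2(\SimpComp_k^{\up\down})\le(1-\tfrac{1}{k+1})+k\lambda$. The two features of the final bound mirror the two features of the recursion: the laziness coefficients $\tfrac1{k+1},\tfrac1k,\dots,\tfrac12$ telescope to the $1-\tfrac1{k+1}$ term, while the Global-Expansion error $\lambda$ is picked up \emph{additively}, at most once per level and never amplified, giving $k\lambda$ rather than anything exponential in $k$.

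The main obstacle is precisely the localization identity and the bound on the error term $E$: the vertex freshly added by one up-down step is \emph{not} distributed according to the stationary measure of the link it lands in, so the middle term above is not literally a convex combination of the links' up-down walks, and one must carefully peel off the discrepancy, check that it acts only on the part of $\ell^2$ that descends to functions on the $1$-skeleton, and show that its operator norm there is at most a constant times $\lambda$. This is exactly the computation underlying Garland's method, and essentially all of the technical work lives there. An alternative that trades the operator identity for eigenspace bookkeeping is to diagonalize $\SimpComp_k^{\up\down}$ through the Hodge-type decomposition of functions on $k$-faces into pieces ``new at level $j$'' for $1\le j\le k$, express its eigenvalue on each piece in terms of the level-$j$ local spectra, and take the maximum; the same additive accumulation of $\lambda$ reappears, but so does a comparably delicate accounting.
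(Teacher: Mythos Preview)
This theorem is not proved in the present paper at all: it is stated as an informal citation of \cite[Theorem~5]{kaufman-oppenheim} and is invoked only as a black box (here and again as \pref{gap-up-down} in the appendix). There is therefore no ``paper's own proof'' to compare against. Your sketch follows the standard Garland/local-to-global route that underlies the original Kaufman--Oppenheim argument, so in spirit you are reproducing the cited proof rather than diverging from it.

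One concrete indexing point to tighten: in your inductive step you write that the up-down walk on $k$-faces returns to $\sigma$ with probability exactly $\tfrac{1}{k+1}$. Under the paper's convention a $k$-face has $k+1$ vertices, so after going up to a $(k+1)$-face (with $k+2$ vertices) and deleting a uniform vertex, the self-loop probability is $\tfrac{1}{k+2}$, not $\tfrac{1}{k+1}$. This off-by-one propagates through your recursion and would not telescope to the stated bound $(1-\tfrac{1}{k+1})+k\lambda$ as written. The fix is either to run the localization over \emph{vertices of the link} (decomposing over the $(k-1)$-faces one level down, which is how the original argument is organized) or to be explicit that you are bounding the down-up walk on $(k+1)$-faces and then invoking \pref{fact:ud-du-same-spectra}. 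Apart from this bookkeeping, the structure of your argument---localize to links, apply the inductive hypothesis there, control the discrepancy by the $1$-skeleton expansion---is the right one.
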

Observe that the upper bound on the second eigenvalue of all links must be strictly less than $\frac{1}{k(k+1)}$ to conclude any meaningful bounds on the mixing time of the up-down random walk.  Thus, unfortunately, \pref{thm:informal-local-exp} in conjunction with \pref{thm:informal-KO17} fails to establish rapid mixing.

Hence, we depart from the local-to-global paradigm and draw on alternate techniques.

\paragraph{Decomposing Markov chains.}  Each $k$-face of $\finalHDX$ is either completely contained in a \emph{cluster} $\{(v,?)\}$ for a single vertex $v$ in $G$, or straddles two clusters corresponding to vertices connected by an edge, i.e., is contained in $\{(v,?)\} \cup \{(u,?)\}$.  Consider performing an up-down random walk on the space of $k$-faces of $\finalHDX$ (henceforth $\finalHDX_k^{\up\down}$).  If we record the single cluster or pair of clusters containing the $k$-face the random walk visits at each timestep, it would resemble:
\begin{displayquote}
    $\{17,19\}$ $\to$ $\{17,19\}$ $\to$ $\{17,19\}$ {\color{Fuchsia} $\to$ $\{17\}$} $\to$ $\{17\}$ {\color{Fuchsia}$\to$ $\{17, 155\}$} $\to$ $\{17, 155\}$ $\to$ $\{17, 155\}$ {\color{Fuchsia}$\to$ $\{155\}$} {\color{Fuchsia} $\to$ $\{155,203\}$} $\to$ $\{155,203\}$ $\to$ $\{155,203\}$ $\to$ $\{155,203\}$ $\to$ $\{155,203\}$ {\color{Fuchsia}$\to$ $\{203\}$ $\to$ $\{6,203\}$} $\to$ $\{6,203\}$ $\to$ $\cdots$
\end{displayquote}
In the above illustration of a random walk, let us restrict our attention to the segment of the walk where the $k$-faces are all contained in, say, the pair of clusters $\{155,203\}$.  Intuitively, we expect the random walk restricted to those $k$-faces to mix rapidly and also exit the set quickly by virtue of the state space being constant-sized.  In particular, if we keep the random walk running for $t \approx C\log n$ steps for some large constant $C$, it would seem that the number of ``exit events''\footnote{Transitions like {$\{17,19\}\to\{17\}$}, $\{155\}\to\{155,203\}$, and so on.} is roughly $\alpha\cdot C\log n$ for some other constant $\alpha$.  The sequence of ``exit events'' can be viewed as a random walk on the space of edges and vertices of $G$, and since there are \emph{many} steps in this walk, the expansion properties of $G$ tell us that the location of the random walk after $t$ steps is distributed according to a relevant stationary distribution.  In light of these intuitive observations of rapidly mixing in the walks within cluster pairs and also rapidly mixing in a walk on the space of cluster pairs, one would hope that the up-down walk on $k$-faces mixes rapidly.

This hope is indeed fulfilled and is made concrete in a framework of Jerrum et al. \cite{jerrum}.  In their framework, there is a Markov chain $M$ on state space $\Omega$.  They show that if $\Omega$ can be partitioned into $\Omega_1,\dots,\Omega_{\ell}$ such that the chain ``restricted'' (for some formal notion of restricted) to each $\Omega_i$, and an appropriately defined ``macro-chain'' (where each partition $\Omega_i$ is a state) each have a constant spectral gap, then the original Markov chain $M$ has a constant spectral gap as well.  Our proof of the fact that $\finalHDX^{\up\down}_k$ has a constant spectral gap utilizes this result of \cite{jerrum}.  This framework of decomposable Markov chains is detailed in \pref{sec:decompose}, and the analysis of the spectral gap of the down-up random walk\footnote{Which is actually equivalent to proving a spectral gap on the up-down random walk but is more technically convenient.  See \pref{fact:ud-du-same-spectra}.} is in \pref{sec:spec-gap-HO}.


\subsection{Related Work}

While high--dimensional expanders have been of relatively recent interest, already many different (non-equivalent) notions of high--dimensional expansion have emerged, for a variety of different applications.

The earliest notions of high--dimensional expansion were topological. In this vein of work, \cite{coboundary, coboundary-ii} introduced coboundary expansion, \cite{evra-kaufman} defined cosystolic expansion, and \cite{evra-kaufman, kkl} defined skeleton expansion. To our knowledge, most existing constructions of these types of expanders rely on the Ramanujan complex. We refer the reader to a survey by Lubotzky \cite{lubotzky-survey} for more details on these alternate notions of high dimensional expansion and their uses. 

To describe notions of high dimensional expansion that are relevant to computer scientists, we need to first highlight a key property of (one-dimensional) expander graphs--that random walks on them mix rapidly to their stationary distribution.  The notion of a random walk on graphs was generalized to simplicial complexes in the work of Kaufman and Mass \cite{kaufman-mass} to the ``up-down'' and ``down-up'' random walks, whose states are $k$-faces of a simplicial complex. They were interested in bounded--degree simplicial complexes where the up-down random walk mixed to its stationary distribution rapidly.  They then proceed to show that the known construction of Ramanujan complexes from \cite{lubotzky-ramanujan} indeed satisfy this property. 

A key technical insight in their work that the rapid mixing of up-down random walks follows from certain notions of \emph{local spectral expansion}, i.e., from sufficiently good \emph{two-sided} spectral expansion of the underlying graph of every link. A quantitative improvement between the relationship between the two-sided spectral expansion of links and rapid mixing of random walks was made in \cite{dinur-kaufman}, and this improvement was used to construct agreement expanders based on the Ramanujan complex construction.  Later, \cite{kaufman-oppenheim} showed that one-sided spectral expansion of links actually sufficed to derive rapid mixing of the up-down walk on $k$-faces. 



\subsubsection{HDX Constructions}

Although this combinatorial characterization of high--dimensional expansion is slightly weaker than some of the topological characterizations mentioned above, few constructions are known for \emph{bounded degree} HDX's with dimension $\geq 2$. Most of these rely on heavy algebra. In contrast, for one-dimensional expander graphs, there are a wealth of different constructions, including ones via graph products and randomized ones. \cite{friedman} states that even a random $d$-regular graph is an expander with high probability. 

The most well-known construction of bounded-degree high--dimensional expanders are the Ramanujan complexes \cite{lubotzky-ramanujan}. These require the \emph{Bruhat-Tits building}, which is a high-dimensional generalization of an infinite regular tree. The underlying graph has degree $q^{O(d^2)}$, where $q$ is a prime power satisfying $q \equiv 1 \pmod 4$. The links can be described by \emph{spherical buildings}, which are complexes derived from subspaces of a vector space, and are excellent expanders.
    
Dinur and Kaufman showed that given any $\lambda \in (0,1)$, and any dimension $d$, the $d$--skeleton of any $d + \ceil{2/\lambda}$--dimensional Ramanujan complex is a $d$--dimensional $\lambda$--spectral expander \cite{dinur-kaufman}. Here, the degree of each vertex is $\left(2/\lambda\right)^{O((d + 2/\lambda )^2)}$. In other words, they ``truncate'' the Ramanujan complexes, throwing out all faces of size greater than some number $k$. Their primary motivation was to obtain agreement expanders, which find uses towards PCPs.
    
Recently, Kaufman and Oppenheim \cite{kaufman-oppenheim-construction} present a construction of one--sided high--dimensional expanders, which are coset complexes of elementary matrix groups. The construction guarantees that for any $\lambda \in (0,1)$ and any dimension $d$, there exists a infinite family of high--dimensional expanders $\{X_i\}_{i\in\mathbb{N}}$, such that (1) every $X_i$ are $d$--dimensional $\lambda$--one--sided--expander; (2) every $X_i$'s 1-skeleton has degree at most $\Theta\left(\sqrt{\frac{(1/\lambda + d -1 )^{(d+2)^2}}{2\log{(1/\lambda+d-1)}}}\right)$; (3) as $i$ goes to infinity the number of vertices in $X_i$ also goes to infinity.

Even more recently, Chapman, Linial, and Peled \cite{chapman-linial-peled} also provided a combinatorial construction of two-dimensional expanders. They construct an infinite family of $(a, b)$-regular graphs, which are $a$-regular graphs whose links with respect to single vertices are $b$-regular. The primary motivation for their construction comes from the theory of PCPs. They prove an Alon-Boppana type bound on $\lambda_2(G)$ for any $(a, b)$-regular graph, and construct a family of graphs where this bound is tight. They also build an $(a, b)$-regular two-dimensional expander using any non-bipartite graph $G$ of sufficiently high girth; they achieve a local expansion only depending on the girth, and the global expansion depending on the spectral gap of $G$. Like ours, their construction also resembles existing graph product constructions of one-dimensional expanders.

\section{Preliminaries and Notation}

\subsection{Spectral Graph Theory}

While we can describe our constructions combinatorially, our analysis of both the mixing times of certain walks as well as the local expansion will heavily rely on understanding graph spectra. 

\begin{definition}
    For an edge-weighted directed graph $G$ on $n$ vertices, we use $\Adj{G}$ to denote its \emph{(normalized) adjacency matrix}, i.e. the matrix given by
    \[
        \Adj{G}_{(u, v)} = \frac{\bone_{(u, v)\in E(G)} \cdot w((u, v))}{\sum_{v: (u, v) \in E(G)} w((u, v))}
    \]
    and write its (right) eigenvalues as
    \[
        1 = \lambda_1(G) \ge \lambda_2(G) \ge \dots \ge \lambda_n(G) \ge -1
    \]
    Let $\Spec(G)$ to indicate the set $\{\lambda_i(G)\}$. We write $\gap(G)$ for the \emph{spectral gap} of $G$, which is the quantity $1 - \lambda_2(G)$. Graphs with $\gap(G) \geq \mu$ are \emph{one-sided $\mu$-expanders}.
    
    Most of the graphs we analyze achieve a stronger condition; that the second largest eigenvalue magnitude is not too large. Formally, we write $|\lambda|_i$ for the $i$-th largest eigenvalue in absolute value. In particular, $|\lambda|_2 = \max\{|\lambda_2|,|\lambda_n|\}$.  The \emph{absolute spectral gap} of $G$, denoted $\Expansion(G)$, is the quantity $1 - |\lambda|_2$.  Graphs with $\Expansion(G) \geq \mu$ are \emph{two-sided $\mu$-expanders}.
\end{definition}

\begin{remark}
    For an \emph{undirected} weighted graph, we simply have $w((u, v)) = w((v, u))$, and use this to define the adjacency matrix the same way. 
\end{remark}



\subsubsection{Graph Tensors}

\par \noindent Our construction can roughly be described as a tensor product, defined below. 
\begin{definition}
The tensor product $G \times H$ of two graphs $G$ and $H$ is given by
\begin{enumerate}
    \item Vertex set $V(G \times H) = V(G)\times V(H)$,
    \item Edge set $E(G \times H) = \{((u_1, v_1), (u_2, v_2)) : (u_1, u_2) \in E(G) \text{ and } (v_1, v_2)\in E(H) \}$.
\end{enumerate}
\end{definition}

\noindent The adjacency matrix $\Adj{G \times H}$ is the tensor (Kronecker) product $\Adj{G} \otimes \Adj{H}$. Due to this structure, $\Spec(G \times H) = \{\lambda_i\mu_j: \lambda_i \in \Spec{G}, \; \mu_j \in \Spec(H)\}$. As $1$ is the largest eigenvalue of both $\Adj{G}$ and $\Adj{H}$, it follows that both
\begin{align*}
\gap(G \times H) &= \min(1 - 1 \cdot \mu_2, 1 - \lambda_2 \cdot 1) = \min(\gap(G), \gap(H)) \\
\Expansion(G \times H) &= \min(1 - 1 \cdot |\mu|_2, 1 - |\lambda|_2 \cdot 1) = \min(\Expansion(G), \Expansion(H))
\end{align*}

\subsection{Markov Chains}

We provide a basic overview of the Markov chain concepts used to analyze our high order walks. We refer to \cite{markov-book} for a detailed and thorough treatment of the fundamentals of Markov chains.

\begin{definition}
A \emph{Markov chain} $M = (\Omega, P)$ is given by \emph{states} $\Omega$ and a \emph{transition matrix} $P$ where $P[i, j]$ is the probability of going to state $j$ from state $i$. We may also write this quantity as $M[j \to i]$. 
\end{definition}

\begin{remark}
The literature often defines $P_{i, j}$ as the probability $\Pr(i \to j)$, so their $P$ is the transpose of ours. However, we work with column (right) eigenvectors to analyze the spectrum of $P$, while this alternate convention uses row (left) eigenvectors, so both conventions yield the same results.
\end{remark}

\begin{definition}
    We can view any Markov chain $M$ as a weighted, directed graph $G$, defined by $V(G) = \States(M)$, $E(G) \coloneqq \{(i,j) : i,j\in V(G), M[i\to j] > 0\}$, and $w((i, j)) = M[j \to i]$. \\
    \\
    The transition matrix of $M$ is $\Adj{G}$, and we also refer to $\Spec(G)$ as the \emph{spectrum} of $M$. Every adjacency matrix has $\lambda_1 = 1$, so transition matrix of $M$ has an eigenvector $\pi_M$ (normalized so that entries sum to $1$) for the eigenvalue $1$. We call $\pi_M$ a \emph{stationary distribution} of $M$.
\end{definition}

\begin{remark}
We may use the term ``graph'' in lieu of ``chain'' when we want to indicate the random walk defined by the transition matrix $\Adj{G}$.
\end{remark}





\noindent The next property we introduce is present for every Markov chain we analyze. 
\begin{definition}
The Markov chain $M = (\Omega, P)$ is \emph{time-reversible} if for any integer $k \geq 1$:
$$
\pi_M(x_0)M[x_0 \to x_1] \cdots M[x_{k - 1} \to x_k] = \pi_M(x_k)M[x_k \to x_{k - 1}] \cdots M[x_1 \to x_0]
$$
\end{definition}

\noindent Intuitively, it means that if start at the stationary distribution and run the chain for a sequence of time states, the reverse sequence has the same probability of occurring. Time reversibility helps us compute stationary distributions via the \emph{detailed balance equations}. (This is especially helpful when there are a huge number of symmetric states.)
\begin{fact}
The Markov chain $M = (\Omega, P)$ is time-reversible if and only if it satisfies the \emph{detailed balance equations}: for all $x, y \in \Omega$,
$$
\pi_M(x) M[x \to y] = \pi_M(y) M[y \to x]
$$
\end{fact}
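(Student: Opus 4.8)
The plan is to prove the two implications separately, with the non-trivial direction handled by a short induction on the path length $k$ appearing in the definition of time-reversibility.

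\textbf{Time-reversible $\Rightarrow$ detailed balance.} Here I would simply instantiate the defining identity at $k = 1$: for every pair $x_0, x_1 \in \Omega$ we get $\pi_M(x_0)\, M[x_0 \to x_1] = \pi_M(x_1)\, M[x_1 \to x_0]$, which is exactly the detailed balance equation after renaming $x_0 = x$ and $x_1 = y$. So this direction is immediate.

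\textbf{Detailed balance $\Rightarrow$ time-reversible.} I would induct on $k$. The base case $k = 1$ is literally the detailed balance equation. For the inductive step, take a sequence $x_0, x_1, \dots, x_k$ and split off the last transition, writing the left-hand side as $\bigl(\pi_M(x_0) M[x_0 \to x_1] \cdots M[x_{k-2} \to x_{k-1}]\bigr)\, M[x_{k-1} \to x_k]$. Applying the inductive hypothesis to the prefix $x_0, \dots, x_{k-1}$ rewrites the parenthesized factor as $\pi_M(x_{k-1}) M[x_{k-1} \to x_{k-2}] \cdots M[x_1 \to x_0]$. Now all remaining factors are scalars, so I can reorder the product to isolate $\pi_M(x_{k-1}) M[x_{k-1} \to x_k]$ and replace it, via detailed balance on the pair $(x_{k-1}, x_k)$, by $\pi_M(x_k) M[x_k \to x_{k-1}]$. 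Reordering once more gives $\pi_M(x_k) M[x_k \to x_{k-1}] M[x_{k-1} \to x_{k-2}] \cdots M[x_1 \to x_0]$, which is precisely the claimed right-hand side.

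I do not anticipate a real obstacle. The only thing requiring care is index bookkeeping in the inductive step — making sure the reversed product reads $M[x_k \to x_{k-1}] \cdots M[x_1 \to x_0]$ with the stationary weight attached to the \emph{endpoint} $x_k$. One could alternatively give a non-inductive proof by multiplying the detailed balance identity $\pi_M(x_i) M[x_i \to x_{i+1}] = \pi_M(x_{i+1}) M[x_{i+1} \to x_i]$ over $i = 0, \dots, k-1$ and cancelling the common stationary weights $\pi_M(x_1), \dots, \pi_M(x_{k-1})$; but that cancellation is only valid when those weights are nonzero, so I prefer the inductive argument, which performs only substitutions and reorderings of scalars and hence has no positivity or support caveats.
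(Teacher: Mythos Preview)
Your argument is correct in both directions; the induction is clean and the caveat about the telescoping alternative is apt. Note, however, that the paper does not actually prove this fact --- it is stated as a standard result without proof --- so there is no ``paper's own proof'' to compare against. Your write-up would serve perfectly well as the omitted justification.
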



\begin{definition}
    The $\eps$-\emph{mixing time} of a Markov chain $M$ is the smallest $t$ such that for any distribution $\nu$ over the states of $M$,
    \[
        \|\pi_M-P^t\nu\|_1\le\eps
    \]
    where $\pi_M$ is the stationary distribution of $M$.
\end{definition}

\begin{theorem} \label{thm:spec-gap-to-mixing}
    For any Markov chain $M$, the $\eps$-mixing time $t_{\mathrm{mix}}(\eps)$ satisfies:
    \[
        t_{\mathrm{mix}}(\eps) \le \log\left(\frac{1}{\eps \min\pi_M}\right)\cdot\frac{1}{\Expansion(M)}.
    \]
\end{theorem}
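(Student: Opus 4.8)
The plan is to run the standard spectral mixing argument for time-reversible chains (which, as noted just above, is the only kind of chain considered here). Write $\pi=\pi_M$, let $D=\mathrm{diag}(\pi)$, and regard the transition matrix $P$ as acting on probability distributions (column vectors), so that $P\pi=\pi$ and $P^t\nu$ is the distribution after $t$ steps started from $\nu$. First I would observe that the detailed balance equations $\pi(x)M[x\to y]=\pi(y)M[y\to x]$ are exactly the assertion that $S:=D^{-1/2}PD^{1/2}$ is a symmetric matrix. Consequently $P$ has real eigenvalues $1=\lambda_1\ge\lambda_2\ge\dots\ge\lambda_n\ge-1$ (the outer bounds because $P$ is stochastic), and, transporting an orthonormal eigenbasis $v_1,\dots,v_n$ of $S$ through $D^{1/2}$, the matrix $P$ has eigenvectors $w_i=D^{1/2}v_i$ with $w_1=\pi$ and $v_1=(\sqrt{\pi(x)})_x$ (which already has unit $\ell_2$ norm since $\sum_x\pi(x)=1$). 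By definition $\max_{i\ge2}|\lambda_i|=|\lambda|_2=1-\Expansion(M)$; if $\Expansion(M)=0$ the claimed bound is vacuous, so assume $\Expansion(M)>0$.

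Next I would expand the transformed initial distribution in this basis: set $u:=D^{-1/2}\nu=\sum_i c_i v_i$ with $c_i=\langle u,v_i\rangle$, and note $c_1=\sum_x\nu(x)=1$. Then
$P^t\nu=D^{1/2}S^t u=D^{1/2}\big(v_1+\sum_{i\ge2}c_i\lambda_i^t v_i\big)=\pi+\sum_{i\ge2}c_i\lambda_i^t w_i$,
so $D^{-1/2}(P^t\nu-\pi)=\sum_{i\ge2}c_i\lambda_i^t v_i$.

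The remaining step is two Cauchy--Schwarz estimates. For any vector $z$, $\|z\|_1=\sum_x\frac{|z(x)|}{\sqrt{\pi(x)}}\sqrt{\pi(x)}\le\|D^{-1/2}z\|_2$ since $\sum_x\pi(x)=1$; applying this to $z=P^t\nu-\pi$ and using orthonormality of the $v_i$ gives $\|P^t\nu-\pi\|_1\le\big(\sum_{i\ge2}c_i^2\lambda_i^{2t}\big)^{1/2}\le(1-\Expansion(M))^t\|u\|_2$. On the other hand $\|u\|_2^2=\sum_x\nu(x)^2/\pi(x)\le\max_x\frac{\nu(x)}{\pi(x)}\le\frac1{\min_x\pi(x)}$. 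Combining, $\|P^t\nu-\pi\|_1\le(1-\Expansion(M))^t/\sqrt{\min\pi}$, which is at most $\eps$ as soon as $t\ge\frac{\log(1/(\eps\sqrt{\min\pi}))}{-\log(1-\Expansion(M))}$. Finally $-\log(1-x)\ge x$ and $1/\sqrt{\min\pi}\le1/\min\pi$ upgrade this (modulo the harmless integer ceiling on $t$) to $t_{\mathrm{mix}}(\eps)\le\log\!\big(\tfrac1{\eps\min\pi}\big)\cdot\tfrac1{\Expansion(M)}$.

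There is no substantial obstacle here; the work is all bookkeeping. The points to be careful about are: stating the conjugation $S=D^{-1/2}PD^{1/2}$ and the eigenvector transport $w_i=D^{1/2}v_i$ correctly against this paper's transpose convention for $P$; the two norm conversions (the $\ell_1$-to-$\chi^2$ inequality and the crude $\|u\|_2^2\le1/\min\pi$ bound); and the mild slack used to pass from $\sqrt{\min\pi}$ and $-\log(1-\Expansion(M))$ to the clean closed form in the statement.
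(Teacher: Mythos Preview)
The paper does not actually prove \pref{thm:spec-gap-to-mixing}; it is stated in the preliminaries as a standard fact about Markov chains (with \cite{markov-book} cited for background) and then simply invoked later to pass from spectral gap bounds to mixing-time bounds. So there is no paper-side argument to compare against.

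Your proposal is the textbook spectral proof and it is correct. The symmetrization $S=D^{-1/2}PD^{1/2}$ via detailed balance, the eigenbasis expansion of $D^{-1/2}\nu$, the Cauchy--Schwarz step $\|z\|_1\le\|D^{-1/2}z\|_2$, the bound $\|u\|_2^2\le 1/\min\pi$, and the final relaxations $-\log(1-x)\ge x$ and $\sqrt{\min\pi}\ge\min\pi$ are all in order, and you handled the paper's column-stochastic convention for $P$ correctly. The only caveat is the one you already flag: the argument as written uses time-reversibility (which the paper imposes throughout), so the literal phrase ``for any Markov chain'' in the statement should be read with that standing assumption.
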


\subsubsection{Decomposing Markov Chains}\label{sec:decompose}

Consider a finite-state time reversible Markov chain $M$ whose structure gives rise to natural state-space partition, $M$ can be decomposed into a number of restriction chains and a projection chain. \cite{jerrum} show that the spectral gap for the original chain can be lower bounded in terms of the spectral gaps for the restriction and projection chains.

We now formally define the decomposition of a Markov chain. Consider an ergodic Markov chain on finite state space $\Omega$ with transition probability $P\colon \Omega^2\rightarrow [0,1]$. Let $\pi \colon \Omega\rightarrow[0,1]$ denote its stationary distribution, and let $\{\Omega_i\}_{i\in[m]}$ be a partition of the state space into $m$ disjoint sets, where $[m]\coloneqq \{1,\dots,m\}$. 

The \emph{projection chain} induced by the partition $\{\Omega_i\}$ has state space $[m]$ and transitions
\begin{equation*}
    \overline{P}(i,j) = \left(\sum_{x\in\Omega_i} \pi(x)\right)^{-1}\sum_{x\in\Omega_i\\ y\in\Omega_j} \pi(x)P(x,y).
\end{equation*}
The above expression corresponds to the probability of moving from any state in $\Omega_i$ to any state in $\Omega_j$ in the original Markov chain.

For each $i\in[m]$, the \emph{restriction chain} induced by $\Omega_i$ has state space $\Omega_i$ and transitions
\begin{equation*}
    P_i(x,y) = \begin{cases} P(x,y), \quad & x\neq y, \\
    1- \sum_{z\in\Omega_i\setminus\{x\}} P(x,z), & x=y.
    \end{cases}
\end{equation*}
$P_i(x,y)$ is the probability of moving from state $x\in\Omega_i$ to state $y$ when leaving $\Omega_i$ is not allowed.

Regardless of how we define the projection and restriction chains for a time reversible Markov chain, they all inherit one useful property from the original chain. 

\begin{fact} \label{fact:inherit-time-reversibility}
    Let $M = (\Omega, P)$ be a time-reversible Markov chain. Then, for any decomposition of $M$, the projection and restriction chains are also time-reversible. 
\end{fact}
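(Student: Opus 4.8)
The plan is to use the characterization of time-reversibility by the detailed balance equations recalled above: it suffices to exhibit, for each restriction chain and for the projection chain, a probability distribution on its state space with respect to which it satisfies detailed balance. The natural candidates are inherited from $\pi$. For the restriction chain $P_i$ on $\Omega_i$, set $\pi_i(x) := \pi(x)/\pi(\Omega_i)$, where $\pi(\Omega_i) := \sum_{x\in\Omega_i}\pi(x)$; this is a probability distribution on $\Omega_i$ by construction. For the projection chain $\overline{P}$ on $[m]$, set $\overline{\pi}(i) := \pi(\Omega_i)$; this is a probability distribution on $[m]$ because $\sum_{i\in[m]}\pi(\Omega_i) = \sum_{x\in\Omega}\pi(x) = 1$, using that $\{\Omega_i\}_{i\in[m]}$ is a partition of $\Omega$.

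Next I would verify detailed balance in each case. For $P_i$: fix $x,y\in\Omega_i$. If $x=y$ the equation is vacuous; if $x\neq y$ then $P_i(x,y)=P(x,y)$ and $P_i(y,x)=P(y,x)$ by definition, so
\[
    \pi_i(x)\,P_i(x,y) \;=\; \frac{\pi(x)P(x,y)}{\pi(\Omega_i)} \;=\; \frac{\pi(y)P(y,x)}{\pi(\Omega_i)} \;=\; \pi_i(y)\,P_i(y,x),
\]
where the middle equality is detailed balance for $M$. For $\overline{P}$: fix $i,j\in[m]$; unfolding the definition of $\overline{P}$, the prefactor $\pi(\Omega_i)^{-1}$ cancels against $\overline{\pi}(i)=\pi(\Omega_i)$, and then detailed balance for $M$ is applied to each term:
\[
    \overline{\pi}(i)\,\overline{P}(i,j) \;=\; \sum_{x\in\Omega_i,\,y\in\Omega_j}\pi(x)P(x,y) \;=\; \sum_{x\in\Omega_i,\,y\in\Omega_j}\pi(y)P(y,x) \;=\; \overline{\pi}(j)\,\overline{P}(j,i).
\]

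Finally, summing a detailed balance identity over one coordinate (and using that the rows of $P_i$ and of $\overline{P}$ sum to $1$) shows that the measure appearing in it is stationary for the chain, so $\pi_i$ and $\overline{\pi}$ are the stationary distributions of $P_i$ and $\overline{P}$; together with the detailed balance equations just verified, this is precisely time-reversibility. I do not expect any real obstacle: the entire content is the observation that the restriction and projection chains should be run against the conditional and marginal versions of $\pi$, after which each verification is a one-line cancellation using detailed balance for $M$. The only points worth a careful sentence are that the extra self-loop mass defining $P_i$ lives only on the diagonal, where detailed balance holds trivially and hence does not interfere, and that disjointness of the $\Omega_i$ is exactly what makes $\overline{\pi}$ a probability distribution.
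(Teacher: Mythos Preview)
Your proof is correct and is the standard argument: the conditional and marginal of $\pi$ are the right stationary measures, and detailed balance for each derived chain reduces termwise to detailed balance for $M$, with the extra self-loop mass in $P_i$ living on the diagonal where the equation is trivial.

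The paper itself does not supply a proof of this fact; it is stated as a known property of the decomposition framework of \cite{jerrum} and used without further justification. So there is no paper proof to compare against, but your argument is exactly the one that would be given and there is nothing to add.
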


We ultimately want to study the spectral gap of random walks. Luckily, the original Markov chain's spectral gap is related to the restriction and projection chains' spectral gaps in the following way:

\begin{theorem}[{\cite[Theorem 1]{jerrum}}]\label{thm:Jerrum-et-al}
    Consider a finite-state time-reversible Markov chain decomposed into a projection chain and $m$ restriction chains as above. Define $\gamma$ to be maximum probability in the Markov chain that some state leaves its partition block,
    \begin{equation*}
        \gamma \coloneqq \max_{i\in[m]} \max_{x\in\Omega_i} \sum_{y\in\Omega\setminus\Omega_i} P(x,y).
    \end{equation*}
    Suppose the projection chain satisfies a Poincar\'{e} inequality with constant $\Bar{\lambda}$ , and the restriction chains satisfy inequalities with uniform constant $\lambda_{\min}$.
    Then the original Markov chain satisfies a Poincar\'{e} inequality with constant 
    \begin{equation*}
        \lambda \coloneqq \min\left\{\frac{\Bar{\lambda}}{3}, \frac{\Bar{\lambda}\lambda_{\min}}{3\gamma + \Bar{\lambda}}\right\}.
    \end{equation*}
\end{theorem}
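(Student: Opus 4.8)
The plan is to work with the variational (Dirichlet form) characterization of the Poincar\'{e} constant: for a reversible chain $(\Omega,P)$ with stationary distribution $\pi$, the Poincar\'{e} constant is the largest $\lambda$ with $\lambda\cdot\mathrm{Var}_\pi(f)\le\mathcal{E}(f,f)$ for all $f\colon\Omega\to\mathbb{R}$, where $\mathcal{E}(f,f)=\tfrac12\sum_{x,y}\pi(x)P(x,y)(f(x)-f(y))^2$ (and for a reversible chain this coincides with the spectral gap $1-\lambda_2$). So I would fix an arbitrary $f$ and bound $\mathrm{Var}_\pi(f)$ by a constant multiple of $\mathcal{E}(f,f)$. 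Write $\overline\pi(i)=\pi(\Omega_i)$, let $\pi_i$ be $\pi$ conditioned on $\Omega_i$, and let $\overline f(i)=\mathbb{E}_{\pi_i}[f]$ denote the block averages (so $\overline f$ is a function on the state space $[m]$ of the projection chain). The first move is the law of total variance, $\mathrm{Var}_\pi(f)=\sum_i\overline\pi(i)\,\mathrm{Var}_{\pi_i}(f)+\mathrm{Var}_{\overline\pi}(\overline f)=:V+W$, splitting into a within-block part $V$ and a between-block part $W$. Bounding $V$ is immediate: each restriction chain $P_i$ is reversible by \pref{fact:inherit-time-reversibility}, so its Poincar\'{e} inequality gives $\lambda_{\min}\mathrm{Var}_{\pi_i}(f)\le\mathcal{E}_{P_i}(f,f)$, and since the off-diagonal entries of $P_i$ agree with those of $P$ while diagonals never contribute to a Dirichlet form, $\sum_i\overline\pi(i)\mathcal{E}_{P_i}(f,f)$ equals exactly $E_{\mathrm{in}}:=\tfrac12\sum_i\sum_{x,y\in\Omega_i}\pi(x)P(x,y)(f(x)-f(y))^2$, the part of $\mathcal{E}(f,f)$ coming from edges inside a block; hence $\lambda_{\min}V\le E_{\mathrm{in}}$.

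The substance is bounding $W$. By the projection chain's Poincar\'{e} inequality, $\overline\lambda\,W\le\mathcal{E}_{\overline P}(\overline f,\overline f)$, so it suffices to control $\mathcal{E}_{\overline P}(\overline f,\overline f)$ by $\mathcal{E}(f,f)$ together with the within-block variances. Using $\overline\pi(i)\overline P(i,j)=\sum_{x\in\Omega_i,\,y\in\Omega_j}\pi(x)P(x,y)$, I would write, for $x\in\Omega_i$ and $y\in\Omega_j$, $\overline f(i)-\overline f(j)=(\overline f(i)-f(x))+(f(x)-f(y))+(f(y)-\overline f(j))$, apply $(a+b+c)^2\le 3(a^2+b^2+c^2)$, and average against the weights $\pi(x)P(x,y)$. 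The middle term produces $3$ times $E_{\mathrm{cross}}:=\tfrac12\sum_{x,y\text{ in different blocks}}\pi(x)P(x,y)(f(x)-f(y))^2$, the cross-block part of $\mathcal{E}(f,f)$; the two endpoint terms are equal by reversibility (relabel $x\leftrightarrow y$ and use $\pi(x)P(x,y)=\pi(y)P(y,x)$), and each equals $\sum_i\sum_{x\in\Omega_i}\pi(x)(\overline f(i)-f(x))^2\big(\sum_{y\notin\Omega_i}P(x,y)\big)\le\gamma\sum_i\overline\pi(i)\mathrm{Var}_{\pi_i}(f)=\gamma V$, where the inequality is precisely the definition of $\gamma$ as the maximum escape probability. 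This yields $\mathcal{E}_{\overline P}(\overline f,\overline f)\le 3\gamma V+3E_{\mathrm{cross}}$, hence $\overline\lambda\,W\le 3\gamma V+3E_{\mathrm{cross}}$.

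To finish, I would combine: $\mathrm{Var}_\pi(f)=V+W\le V\big(1+\tfrac{3\gamma}{\overline\lambda}\big)+\tfrac{3}{\overline\lambda}E_{\mathrm{cross}}$, and since edges inside a block and edges across blocks are disjoint, $E_{\mathrm{in}}+E_{\mathrm{cross}}=\mathcal{E}(f,f)$, so substituting $V\le E_{\mathrm{in}}/\lambda_{\min}$ gives $\mathrm{Var}_\pi(f)\le\tfrac{\overline\lambda+3\gamma}{\overline\lambda\lambda_{\min}}E_{\mathrm{in}}+\tfrac{3}{\overline\lambda}E_{\mathrm{cross}}\le\tfrac{1}{\lambda}\big(E_{\mathrm{in}}+E_{\mathrm{cross}}\big)=\tfrac{1}{\lambda}\mathcal{E}(f,f)$, where $\tfrac1\lambda=\max\{3/\overline\lambda,\,(3\gamma+\overline\lambda)/(\overline\lambda\lambda_{\min})\}$, i.e. $\lambda=\min\{\overline\lambda/3,\ \overline\lambda\lambda_{\min}/(3\gamma+\overline\lambda)\}$, as claimed. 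The main obstacle is the middle step, relating the projection chain's Dirichlet form to the original's: the three-term triangle split and, crucially, the use of $\gamma$ to force the two endpoint terms to be small (little probability leaves a block) are the whole game; keeping $E_{\mathrm{in}}$ and $E_{\mathrm{cross}}$ tracked separately rather than bounding both crudely by one copy of $\mathcal{E}(f,f)$ is exactly what makes the two branches of the $\min$ come out without an extra constant factor.
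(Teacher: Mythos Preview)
The paper does not prove \pref{thm:Jerrum-et-al}; it is quoted verbatim from \cite{jerrum} and used as a black box, so there is no ``paper's own proof'' to compare against. That said, your argument is correct and is essentially the proof given in \cite{jerrum}: the law-of-total-variance split $\mathrm{Var}_\pi(f)=V+W$, the bound $\lambda_{\min}V\le E_{\mathrm{in}}$ via the restriction Poincar\'e inequalities, the three-term Cauchy--Schwarz on $\overline f(i)-\overline f(j)$ to control $\mathcal{E}_{\overline P}(\overline f,\overline f)$ by $3\gamma V+3E_{\mathrm{cross}}$, and the final recombination via $E_{\mathrm{in}}+E_{\mathrm{cross}}=\mathcal{E}(f,f)$ are exactly the steps in the original reference.
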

Recall that if $\lambda$ satisfies a Poincar\'{e} inequality, it is a lower bound on the spectral gap (cf.\ \cite{markov-book}). 

\subsection{High-Dimensional Expanders}

The generalization from expander graphs to hypergraphs (more specifically, simplicial complexes) requires great care. We now formally establish the high dimensional notions of ``neighborhood'', ``expansion,'' and ``random walk.''

\begin{definition}
    A \emph{simplicial complex} $\SimpComp$ is specified by vertex set $V(\SimpComp)$ and a collection $\calF(\SimpComp)$ of subsets of $V(\SimpComp)$, known as \emph{faces}, that satisfy the ``downward closure'' property: if $A\in\calF(\SimpComp)$ and $B\subseteq A$, then $B\in\calF(\SimpComp)$. Any face $S\in\calF(\SimpComp)$ of cardinality $(k + 1)$ is called a \emph{$k$-face} of $\SimpComp$. We use $\kface(\SimpComp)$ to denote the subcollection of $k$-faces of $\calF(\SimpComp)$. We say $\SimpComp$ has \emph{dimension} $d$, where $d = \max\{|F|: F \in \calF(\SimpComp)\} - 1$.
\end{definition}



\begin{example}
    A $1$-dimensional complex $\calS$ is a graph with vertex set $V(\calS)$ and edge set $\Face{1}(\SimpComp)$.
\end{example}

\begin{definition}
To formally define random walks and Markov chains on a $\SimpComp$, we need to associate $\SimpComp$ with a weight function $w: \calF(\SimpComp) \to \mathbb{R}_+$. We want our weight function to be \emph{balanced}, meaning for $F \in \kface(\SimpComp)$:
$$
w(F) = \sum_{J \in (k + 1)\text{-}\mathsf{faces}(\SimpComp): J \supset F} w(J)
$$
If we restrict ourselves to balanced $w$, it suffices to only define $w$ over $d\text{-}\mathsf{faces}(\SimpComp)$ and propagate the weights downward to the lower order faces. 
\end{definition}



\begin{definition}
    The \emph{(weighted) $k$-skeleton} of $\SimpComp$ is the complex with vertex set $V(\SimpComp)$ and all faces in $\calF(\SimpComp)$ of cardinality at most $k + 1$, with weights inherited from $\SimpComp$.
\end{definition}

\begin{example}
    The $1$-skeleton of $\SimpComp$ only contains its vertices ($0$-faces) and edges ($1$-faces). It can be characterized as a graph with edge weights, so we can also compute $\gap(1\text{-}\mathsf{skeleton}(\SimpComp))$ and $\Expansion(1\text{-}\mathsf{skeleton}(\SimpComp))$. 
\end{example}

\begin{definition}
    For $S\in\kface(\SimpComp)$ for $k\le H-1$, we associate a particular $(H-k)$-dimensional complex known as the \emph{link} of $S$ defined below.
    \[
        \Link(S) := \{ T\setminus S: T\in\calF(\SimpComp), S\subseteq T \}
    \]
    If $\SimpComp$ was equipped with weight function $w$, then $\Link(S)$ ``inherits'' it. We associate $\Link(S)$ with weight function $w_S$ given by $w_S(T) = w(S \cup T)$. If $w$ is balanced, then $w_S$ is also balanced.  We call a $\Link(S)$ a $t$-link if $|S|$ has cardinality $t$.
\end{definition}

\begin{example}
    In a graph, the link of a vertex is simply its neighborhood.
\end{example}

\begin{definition}
    The \emph{global expansion} of $\SimpComp$, denoted $\GlExp(\SimpComp)$, is the expansion of its weighted $1$-skeleton.
\end{definition}

\begin{definition}
    The \emph{local expansion} of $\SimpComp$, denoted $\LocExp(\SimpComp)$ is
    \[
        \LocExp(\SimpComp) := \min_{0 \le k\le H-1} \min_{S \in \kface(\SimpComp)}\Expansion(1\text{-}\mathsf{skeleton}(\Link(S))).
    \]
    In words, it is equal to the expansion of the worst expanding link.
\end{definition}


\begin{example} \label{eg:complete-complex}
    We use $\calK_{H+2}^{(H)}$ to denote the \emph{complete $H$-dimensional complex} on vertex set $[H+2]$, i.e., the pure $H$-dimensional simplicial complex obtained by making the set of $(H+1)$-faces equal to all subsets of $[H+2]$ of size $H+1$.  The $1$-skeleton is then a clique on $H+2$ vertices whose expansion is $1-\frac{1}{H+1}$ and the $1$-skeleton of a $t$-link is a clique on $H+2-t$ vertices, which has expansion $1-\frac{1}{H+1-(t-1)}$.  As a result, $\Expansion\left(\calK_{H+2}^{(H)}\right) = \frac{1}{2}$.
\end{example}


\begin{remark}
    We often use $\Adj{\SimpComp}$ to refer to the adjacency matrix of the $1$-skeleton of $\SimpComp$, and we may also use $\lambda_i(\SimpComp)$ to refer to the $i$-th largest eigenvalue of $\Adj{\SimpComp}$.
\end{remark}

\noindent Previously, we mentioned that there are several different notions of high dimensional expansion: some geometric or topological, some combinatorial. We now formally define high dimensional spectral expansion, which is a more combinatorial and graph theoretic notion:
\begin{definition}
    $\SimpComp$ is a \emph{two-sided $\lambda$-local spectral expander} if $\GlExp(\SimpComp) \geq \lambda$ and $\LocExp(\SimpComp) \geq \lambda$.
\end{definition}

\subsubsection{High Order Walks on Simplicial Complexes}

Let $\SimpComp$ be a $H$-dimensional simplicial complex and with weight function $w:\kface(\SimpComp)\rightarrow\R_{\ge 0}$ on the $k$-faces of $\SimpComp$, for $k \le H$. For each $k < H$, we can define a natural (periodic) Markov chain on a state space consisting of $k$-faces and $(k + 1)$-faces of $\SimpComp$. 
\begin{itemize}
    \item At a $(k + 1)$-face $J$, there are exactly $(k + 2)$ faces $F \in \kface(\SimpComp)$ such that $F \subset J$, due to the downward closure property. We transition from $J$ to each $k$-face $F$ with probability $\frac{1}{k + 2}$. 
    
    \item At a $k$-face $F$, we transition to each $(k + 1)$-face $J$ satisfying $J \supset F$ with probability $\frac{w(J)}{w(F)}$. (Note that $w$ must be balanced for these transitions to be well-defined.)
\end{itemize}
Restricting the above chain to only odd or even time steps gives us two new random walks: one entirely on $\kface(\SimpComp)$ and one entirely on $(k + 1)\text{-}\mathsf{faces}(\SimpComp)$.

\begin{definition}[Down-up walk on $k$-faces of $\SimpComp$]\label{def:du-walk}
=    Let $\SimpComp_{k + 1}^{\down\up}$ be the Markov chain with state space equal to $\kface(\SimpComp)$ and transition probabilities $\SimpComp_{\down\up}[J \to J']$ described by the process above, where there is an implicit transition \emph{down} to a $k$-face and back \emph{up} to a $(k + 1)$-face. Then:
    \begin{align*}
        \SimpComp_{\down\up}[J \to J'] &=
        \begin{cases} \displaystyle\frac{1}{k + 1} \sum_{F \in \kface(\SimpComp): F \subset J} \frac{w(J)}{w(F)} & \text{ if } J = J' \\
        \displaystyle\frac{1}{k + 1} \cdot \frac{w(J')}{w(J \cap J')} & \text{ if } J \cap J' \in \kface(\SimpComp)  \\
        0 & \text{ otherwise}
        \end{cases}
    \end{align*}
\end{definition}

\begin{definition}[Up-down walk on $k$-faces of $\SimpComp$]\label{def:du-walk}
    Let $\SimpComp_{\up\down}$ be the Markov chain with state space equal to $\kface(\SimpComp)$ and transition probabilities $\SimpComp_{\up\down}[F\to F']$ described by the process above, where there is an implicit transition \emph{up} to a $(k + 1)$-face and back \emph{down} to a $k$-face. Then:
    \begin{align*}
        \SimpComp_{\down\up}[F\to F'] &= 
        \begin{cases}
        \displaystyle\frac{1}{k + 1}  & \text{ if } F = F' \\
        \displaystyle\frac{w(F \cup F')}{w(F)} & \text{ if } F \cup F' \in (k + 1)\text{-}\mathsf{faces}(\SimpComp)  \\
        0 & \text{ otherwise}
        \end{cases}
    \end{align*}
\end{definition}

\begin{remark}
    In the literature, we also see $\SimpComp_{k + 1}^{\down\up}$ written as $\SimpComp_{k + 1}^{\lor}$, and $\SimpComp_k^{\up\down}$ written as $\SimpComp_k^{\land}$.
\end{remark}

\noindent We now present some facts about these high order walks without proof. We refer to \cite{kaufman-oppenheim, bases-sampling} for proofs of these facts.

\begin{fact}    \label{fact:ud-du-same-spectra}
    The transition matrices for $\SimpComp_{k + 1}^{\down\up}$ and $\SimpComp_k^{\up\down}$ share the same eigenvalues. The nonzero eigenvalues occur with the same multiplicity. A straightforward but important consequence of this fact is
    $$
        \Spec{(\SimpComp_{k + 1}^{\down\up})} = \Spec{(\SimpComp_{k + 1}^{\up\down})}
    $$
\end{fact}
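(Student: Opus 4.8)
The plan is to realize the two transition matrices as the two cyclic products $C C^{\top}$ and $C^{\top}C$ of a single rectangular matrix $C$, and then invoke the textbook fact that $C C^{\top}$ and $C^{\top}C$ have the same nonzero eigenvalues with the same multiplicities.

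First I would symmetrize every walk in sight. For $j\in\{k,k+1\}$, put the weighted inner product $\langle f,g\rangle_{j}=\sum_{F\in\Face{j}(\SimpComp)}w(F)\,f(F)\,g(F)$ on functions on $j$-faces and conjugate the transition matrix of the walk on $j$-faces by the diagonal matrix with entries $\sqrt{w(F)}$; since every chain here is time-reversible (cf.\ \pref{fact:inherit-time-reversibility} and the surrounding discussion), this conjugation produces honestly symmetric matrices with exactly the same spectrum as the originals. Now consider the bipartite chain that alternates between $\kface(\SimpComp)$ and $\Face{k+1}(\SimpComp)$ (described just before \pref{def:du-walk}): up from $F$ to $J\supset F$ with probability $w(J)/w(F)$, down from $J$ to a uniform $F\subset J$. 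It is reversible, so after symmetrization its transition matrix has block form with zero diagonal blocks and off-diagonal blocks $C$ and $C^{\top}$, and its square is block-diagonal with blocks $C C^{\top}$ and $C^{\top}C$. But the even-step restriction of the bipartite chain that lives on $k$-faces is exactly $\SimpComp_{k}^{\up\down}$, and the even-step restriction that lives on $(k+1)$-faces is exactly $\SimpComp_{k+1}^{\down\up}$; hence $\SimpComp_{k}^{\up\down}$ is conjugate to $C C^{\top}$ and $\SimpComp_{k+1}^{\down\up}$ is conjugate to $C^{\top}C$, for the \emph{same} matrix $C$. Carrying out this identification while correctly reconciling the laziness constants and the transposed transition-matrix convention used in this paper is the only fiddly point, and it is routine bookkeeping.

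Given that, I would finish by quoting the linear-algebra lemma: for any $C\in\R^{a\times b}$, the matrices $C^{\top}C$ and $C C^{\top}$ have identical nonzero eigenvalues with identical multiplicities — precisely the squared singular values of $C$ — and differ at most in the multiplicity of the eigenvalue $0$. This is classical; e.g. it is trivial when $C$ is square and invertible, and extends to all $C$ either by continuity of eigenvalues or by the standard block-triangular similarity between $\mathrm{diag}(C C^{\top},0)$ and $\mathrm{diag}(0,C^{\top}C)$. Combined with the previous paragraph, this yields the first two sentences of the fact. For the last sentence: reversibility makes all the matrices diagonalizable with real spectrum, so algebraic and geometric multiplicities coincide and ``multiplicity'' is unambiguous; and the equality of the full spectra as sets follows from the nonzero part together with the routine observation that $0\in\Spec(C^{\top}C)$ iff $0\in\Spec(C C^{\top})$ (immediate when $\kface(\SimpComp)$ and $\Face{k+1}(\SimpComp)$ have equal cardinality, since then the characteristic polynomials are literally equal, and a one-line rank count otherwise). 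The mathematical heart is thus entirely the classical identity $\Spec(C^{\top}C)\setminus\{0\}=\Spec(C C^{\top})\setminus\{0\}$; I expect the only real obstacle to be the normalization/convention bookkeeping flagged in the previous paragraph.
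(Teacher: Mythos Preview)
Your approach is correct and is precisely the standard argument: factor the bipartite ``up then down'' chain so that $\SimpComp_k^{\up\down}$ and $\SimpComp_{k+1}^{\down\up}$ become (conjugates of) $CC^{\top}$ and $C^{\top}C$ respectively, then invoke the classical fact that these share nonzero spectrum with multiplicity. The paper, however, does not give its own proof of this fact at all; it explicitly states it without proof and defers to \cite{kaufman-oppenheim,bases-sampling}, whose arguments are essentially the one you wrote.

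One small correction to your final paragraph: the claimed ``iff'' that $0\in\Spec(C^{\top}C)$ iff $0\in\Spec(CC^{\top})$ is false when the two face sets have different cardinalities --- the larger product is forced to be singular while the smaller need not be --- so the full spectra as \emph{sets} can genuinely differ at $0$. This does not affect the first two sentences of the fact (which only concern nonzero eigenvalues), and the displayed ``consequence'' in the paper appears to be loosely stated in any case.
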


\begin{fact}
The Markov chains $\SimpComp_k^{\down\up}$ and $\SimpComp_k^{\up\down}$ have the same stationary distribution on $\kface(\SimpComp)$, which is proportional to $w(F)$ for each $F \in \kface(\SimpComp)$. We will call this distribution $\pi_k(\cdot)$. 
\end{fact}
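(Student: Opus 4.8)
The plan is to write down the candidate distribution $\pi_k(F) \coloneqq w(F)\big/\sum_{F' \in \kface(\SimpComp)} w(F')$ and to verify that it satisfies the detailed balance equations for \emph{both} walks; since summing $\pi(x)P[x\to y] = \pi(y)P[y\to x]$ over $x$ shows that any distribution obeying detailed balance is stationary (and, by the fact recalled earlier, that the chain is time-reversible), this suffices. Before that I would observe that $\pi_k$ is a genuine probability distribution and that both chains are well defined: the ``up'' step out of a $k$-face $F$ moves to a $(k+1)$-face $J \supseteq F$ with probability $w(J)/w(F)$, and these probabilities sum to $1$ precisely because $w$ is balanced, i.e.\ $\sum_{J\supseteq F} w(J) = w(F)$; the ``down'' step out of a face is uniform over its codimension-one subfaces. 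The transition rules of \pref{def:du-walk} are exactly the compositions ``up then down'' and ``down then up'' of these two steps, together with the induced self-loop probabilities.

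The heart of the argument is a short symmetry observation, carried out once for each walk. For the up-down walk and two \emph{distinct} $k$-faces $F \ne F'$, the only trajectory from $F$ to $F'$ goes up to the unique $(k+1)$-face lying above $F\cup F'$ --- which exists if and only if $F\cup F'$ is itself a face of $\SimpComp$, in which case it equals $F\cup F'$ --- and back down to $F'$. Consequently $w(F)\cdot\SimpComp_k^{\up\down}[F\to F']$ equals $w(F\cup F')$ times a constant that depends only on $k$, an expression that is manifestly symmetric in the unordered pair $\{F,F'\}$; this is exactly the detailed balance identity $\pi_k(F)\,\SimpComp_k^{\up\down}[F\to F'] = \pi_k(F')\,\SimpComp_k^{\up\down}[F'\to F]$ after cancelling the common normalizer, and the case $F = F'$ is vacuous. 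The down-up walk is treated identically with ``the $(k+1)$-face above $F\cup F'$'' replaced by ``the $(k-1)$-face below $F\cap F'$'': for $F\ne F'$ the move $F\to F'$ factors through the unique $(k-1)$-face $G = F\cap F'$, so $w(F)\cdot\SimpComp_k^{\down\up}[F\to F']$ is a $k$-dependent constant times $\tfrac{w(F)\,w(F')}{w(F\cap F')}$, again symmetric in $F$ and $F'$.

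Having checked detailed balance for both, $\pi_k \propto w$ is a stationary distribution of each, so in particular the two walks share it. I would close by noting that $\pi_k$ is in fact the \emph{unique} stationary distribution whenever $\SimpComp$ is connected in the relevant sense (the graph on $k$-faces induced by the walk is connected), which makes each chain irreducible, with aperiodicity immediate from the strictly positive self-loop probability. I do not anticipate any genuine obstacle here: the only points that require care are (i) keeping the $k$-dependent normalization constants of \pref{def:du-walk} straight so that the two sides of each detailed balance equation match on the nose, and (ii) recording the existence and uniqueness of the intermediate face ($F\cup F'$, respectively $F\cap F'$) that makes the symmetry visible --- the rest is bookkeeping.
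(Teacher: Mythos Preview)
Your proposal is correct. The paper itself does not prove this fact: it is explicitly stated ``without proof'' with a pointer to \cite{kaufman-oppenheim, bases-sampling}, so there is no argument in the paper to compare against. Your detailed-balance verification is the standard route and goes through exactly as you outline; the only cosmetic point is to keep the $\tfrac{1}{k+2}$ (respectively $\tfrac{1}{k+1}$) down-step factor explicit so that the two sides of each detailed-balance identity match on the nose, since the paper's displayed transition formulas are slightly inconsistent about including it.
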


For the remainder of the paper, we will assume a uniform weight function on $d\text{-}\mathsf{faces}(\SimpComp)$, which is useful for applications like sampling bases of a matroid \cite{bases-sampling}. When using the uniform weighting scheme, for $F \in \kface(\SimpComp)$, there is a natural interpretation of $\pi_k(F)$: the fraction of $d$-faces that contain $F$ as a subface. (We also note that we will use symbolic variables to represent various weight values, and that it is straightforward to adapt our computations to cases where we have uniform weights over $\kface(\SimpComp)$ for any $k$.)

\section{Local Densification of Expanders}

For a graph $G$ and $H$-dimensional simplicial complex $\SimpComp$, we give a way to combine the two to produce a bounded-degree $H$-dimensional complex $\LD(G,\SimpComp)$ of constant expansion. First, construct a graph $G'$ with
\begin{enumerate}
    \item vertex set equal to $V(G)\times V(\SimpComp)$, and
    \item edge set equal to $\{\{(v_1,b_1),(v_2,b_2)\} : \{b_1,b_2\}\in \Face{1}(\SimpComp), \{v_1,v_2\} \in E(G)\text{ or }v_1=v_2 \}$.
\end{enumerate}

$\LD(G,\SimpComp)$ is then defined as the $H$-dimensional pure complex whose $H$-faces are all cliques on $H+1$ vertices $\{(v_1,b_1),(v_2,b_2),\dots,(v_{H+1},b_{H+1})\}$ such that there exists an edge $\{a,b\}$ in $G$ for which $v_1,\dots,v_{H+1}\in\{a,b\}$.

To describe a $k$-face of $\LD(G,\SimpComp)$, we may also use the ordered pair $(F, f)$, where $F$ is a $k$-face of $\SimpComp$, and $f$ is a function mapping each element of $F$ to a vertex of $G$. Because of the local densifier's tensor structure, $\im(f)$ is either a single vertex, or a pair of vertices that form an edge in $G$. 

Linear algebraically, we can think of this graph construction as adding a self loop to each vertex of $G$ and then taking the tensor product with the $1$-skeleton of $\SimpComp$.

Our construction is $\finalHDX := \LD(G,\baseHDX)$, where $\baseHDX$ is equal to $\calK_s^{(H)}$, the $H$-dimensional complete complex on some constant $s \ge H + 1$ vertices, and $G$ is a $T$-regular triangle-free expander graph on $n$ vertices.  We endow $\finalHDX$ with a balanced weight function $w$ induced by  setting the weights of all $H$-faces to $1$.

As a first step to understanding this construction, we inspect the weights induced on $k$-faces for $k < H$.  Consider a $k$-face $F\coloneqq\{(v_1,b_1),\dots,(v_{k+1},b_{k+1})\}$.  A short calculation reveals that if $v_1,\dots,v_{k+1}$ are all equal, then $w(F)$ is equal to $w_{J,k}\coloneqq \binom{s}{H - k} \cdot [T2^{H-k}-(T-1)]$ and otherwise, $w(F)$ is equal to $w_{I,k}\coloneqq \binom{s}{H - k} \cdot [2^{H - k}]$.  Henceforth,  write $w_J$ and $w_I$ instead of $w_{J,k}$ and $w_{I,k}$ when $k$ is understood from context.

We now list out what we prove about $\finalHDX$.  Most importantly, we show:
\begin{theorem}\label{thm:mixing-main}
    For every $1\le k < H$, the Markov transition matrix $\finalHDX^{\down\up}_k$ for down-up (and equivalently up-down) random walks on the $k$-faces satisfies:
    \[
        \Expansion\left(\finalHDX^{\down\up}_k\right) \ge \frac{\Expansion(G)}{64 T^2(k + 1)^2(s - k)(2^k - 1)} \enspace.
    \]
\end{theorem}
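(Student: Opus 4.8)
The plan is to decompose $\finalHDX_k^{\down\up}$ along the cluster structure of $\finalHDX$ and apply \pref{thm:Jerrum-et-al}. As noted in the description of $\LD$, a $k$-face $(F,f)$ of $\finalHDX$ has $\im(f)$ equal either to a single vertex $v\in V(G)$ or to an edge $e\in E(G)$; accordingly I partition $\kface(\finalHDX)$ into blocks $\Omega_v:=\{(F,f):\im(f)=\{v\}\}$ for $v\in V(G)$ and $\Omega_e:=\{(F,f):\im(f)=e\}$ for $e\in E(G)$, so that $\{\Omega_v\}\cup\{\Omega_e\}$ is a partition indexed by $V(G)\cup E(G)$ (every block is nonempty since $1\le k<H\le s-1$ and $G$ is $T$-regular, so that $|\Omega_v|=\binom{s}{k+1}$ and $|\Omega_e|=\binom{s}{k+1}(2^{k+1}-2)$). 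Since $\finalHDX_k^{\down\up}$ is time-reversible with stationary distribution $\pi_k\propto w$, \pref{fact:inherit-time-reversibility} guarantees all restriction and projection chains are time-reversible, so \pref{thm:Jerrum-et-al} applies. It then remains to (i) lower bound a uniform Poincar\'e constant $\lambda_{\min}$ for the restriction chains, (ii) lower bound a Poincar\'e constant $\bar\lambda$ for the projection chain by a constant times $\Expansion(G)$, and (iii) upper bound the escape probability $\gamma$. Feeding these into $\lambda=\min\{\bar\lambda/3,\ \bar\lambda\lambda_{\min}/(3\gamma+\bar\lambda)\}$, and using that $\finalHDX_k^{\down\up}$ is the composition of the down operator with its own adjoint, hence positive semidefinite (so $|\lambda|_2=\lambda_2$ and the Poincar\'e constant bounds $\Expansion$, not merely $\gap$), yields the theorem; the up-down case follows from \pref{fact:ud-du-same-spectra}.

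For (i): each restriction chain lives on a state space of constant size and is ergodic for $k\ge 1$, since within one cluster one may relabel a single vertex at a time, and over an edge one may additionally shift vertices between the two clusters through an intermediate straddling $(k-1)$-face. Hence each has a positive spectral gap; to make this quantitative I would identify the restriction chain on $\Omega_v$ with a lazy down-up walk on the $k$-faces of the complete $k$-dimensional complex on $s$ vertices (spectral gap an explicit $\Theta(1/(k+1))$ quantity) and bound the chain on $\Omega_e$ analogously. The one point needing care — and the source of the $(s-k)(2^k-1)$ factors, together with an extra factor of $T$ when $k=1$ — is that every transition probability inside these chains is a ratio of the consecutive-level weights $w_J=\binom{s}{H-k}[T2^{H-k}-(T-1)]$ and $w_I=\binom{s}{H-k}2^{H-k}$; using these formulas and the elementary identity relating the prefactors at levels $k$ and $k-1$, one checks these ratios are bounded away from $0$ uniformly in $n$, with their $T$-dependence explicit.

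Step (ii) is the heart of the argument. The projection chain is a random walk on $V(G)\cup E(G)$: from a vertex-block $\Omega_v$ it moves to each incident edge-block $\Omega_{\{v,w\}}$ with one common probability $p_1=\Theta(1/T)$, and from an edge-block $\Omega_{\{u,v\}}$ it moves to each endpoint vertex-block with probability $q_1=\Theta(1/(2^k-1))$, to each of the $2(T-1)$ edge-blocks sharing a vertex with $\{u,v\}$ with probability $q_2=\Theta(1/((2^k-1)T))$, and holds otherwise; here triangle-freeness of $G$ is used to ensure those $2(T-1)$ neighbouring edge-blocks are distinct, so that the macro-graph is a reweighted subdivision of $G$ with short edge-to-edge shortcuts rather than something with uncontrolled local structure. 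Since $G$ is $T$-regular, every eigenvector $\phi$ of $\Adj{G}$ with eigenvalue $\mu$ lifts to an eigenvector of the projection chain by assigning edge $\{u,v\}$ the value $c(\phi(u)+\phi(v))$ for a suitable $c=c(\mu)$; solving the resulting quadratic for $c$ and expanding near $\mu=1$ shows the corresponding eigenvalue $\nu(\mu)$ obeys $1-\nu(\mu)=\Theta(1-\mu)$ with an implied constant that is an explicit function of $p_1,q_1,q_2$, while the remaining, edge-supported eigenvectors are easily seen to have eigenvalues bounded away from $1$. This gives $\bar\lambda\ge c'\cdot\Expansion(G)$ with $c'$ depending only on $k,s,H$. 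A canonical-paths comparison of the projection chain against the simple walk on the subdivision of $G$ is a more robust alternative if one prefers not to lean on $T$-regularity; either way, it is the coarseness of these estimates and of the combination in \pref{thm:Jerrum-et-al}, not a genuine barrier, that produces the explicit $64\,T^2(k+1)^2(s-k)(2^k-1)$ in the statement.

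For (iii), escape from a vertex-block has probability $Tp_1=\tfrac{T2^{H-k}}{(2^{H-k+1}-1)T+1}<1$, and escape from an edge-block is possible only from the ``extreme'' straddling faces (all but one vertex in one cluster), so it is at most $\tfrac{1}{k+1}$ times a bounded factor; in all cases $\gamma<1$, which (together with $\Expansion(G)\le1$) is all that is needed to combine (i)--(iii) in \pref{thm:Jerrum-et-al} and obtain the claimed bound. The step I expect to be genuinely delicate is (ii): the projection chain is not literally a walk on $G$, and one must show that the edge-wandering it performs before returning to a vertex-block does not erode the inherited spectral gap — the $T$-regular lifting above is the cleanest argument I know, but verifying that the lifted vectors account for the relevant part of the spectrum, and that the leftover edge-supported eigenvalues are harmless, still takes some work.
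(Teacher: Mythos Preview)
Your proposal is correct in outline and takes a genuinely different route from the paper. The paper first passes to a ``split'' chain $\duwalksp$ in which each $0$-offset state is duplicated $T$ times (one copy per incident edge), so that the outer partition is indexed solely by $E(G)$ and all outer restriction chains are isomorphic; the outer projection is then literally the line graph of $G$, whose spectrum follows from Sachs' theorem. The paper then applies \pref{thm:Jerrum-et-al} a \emph{second} time to each outer restriction chain, partitioning by base face $F\in\kface(\baseHDX)$: the inner projection is a lazy $\duwalkbase$, and the inner restriction is a lazy walk on the full $(k{+}1)$-hypercube. Finally, the paper upgrades the one-sided gap to two-sided via a self-loop lower bound on $\lambda_{\min}$. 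By contrast, you partition directly into vertex-blocks $\Omega_v$ and edge-blocks $\Omega_e$ (no splitting), apply Jerrum once, analyze a projection chain on $V(G)\cup E(G)$ by lifting eigenvectors of $G$, and use that $\finalHDX_k^{\down\up}$ is PSD (as $D^*D$ in $L^2(\pi_k)$) to get two-sidedness. Your PSD observation is correct and strictly cleaner than the paper's self-loop bound; your projection-chain structure is right, and triangle-freeness is used for the same reason.

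The one place where your sketch undersells the work is (i) for the edge-blocks. The restriction chain on $\Omega_v$ really is a lazy $\duwalkbase$, but the chain on $\Omega_e$ is not ``analogous'': its state space has size $\binom{s}{k+1}(2^{k+1}-2)$, the transitions mix changing $F$ with flipping coordinates of $f$, and the $0$-offset states are absent (so a naive hypercube comparison hits the punctured cube $\{0,1\}^{k+1}\setminus\{\vec 0,\vec 1\}$). The paper's fix is precisely a second Jerrum decomposition of this chain; you will need either that or a direct comparison argument to extract an explicit $\lambda_{\min}$, and this is where the $(k+1)(s-k)$ factors actually enter. Your step (ii) is workable, but note the paper's split trick buys you exactly the simplification you are hoping for: after splitting, the macro-chain is a lazy walk on the line graph rather than on $V\cup E$, so no quadratic-in-$\mu$ lifting computation or ``edge-supported residual spectrum'' analysis is needed. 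In short, your plan yields a valid proof with comparable constants, but the paper's double decomposition plus splitting trades an extra lemma (\pref{lem:split-reduction}) for substantially cleaner projection and restriction analyses.
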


We dedicate \pref{sec:spec-gap-HO} to proving \pref{thm:mixing-main}.


As an immediate corollary of \pref{thm:mixing-main} and \pref{thm:spec-gap-to-mixing}, we get that
\begin{corollary}
Let $N_k$ denote the number of $k$-faces in $\finalHDX$. Then the  $\epsilon$-mixing time of $\finalHDX^{\down\up}_k$ satisfies:
\begin{equation*}
    t(\eps) \leq \frac{64 T^2(k + 1)^2(s - k)(2^k - 1)}{\Expansion(G)} \cdot \log \left( \frac{2N_k}{\eps}\right) \enspace.
\end{equation*}
We note that $N_k = \Theta(n)$.
\end{corollary}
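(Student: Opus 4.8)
The plan is to read the bound off directly from \pref{thm:spec-gap-to-mixing} and \pref{thm:mixing-main}; the only non-routine ingredient is a good enough lower bound on $\min\pi_k$, the smallest stationary mass of a $k$-face, so that the $\log(1/(\eps\min\pi_k))$ factor in \pref{thm:spec-gap-to-mixing} can be replaced by $\log(2N_k/\eps)$.

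First I would instantiate \pref{thm:spec-gap-to-mixing} with $M=\finalHDX^{\down\up}_k$, obtaining
\[
    t(\eps)\ \le\ \log\!\left(\frac{1}{\eps\min\pi_k}\right)\cdot\frac{1}{\Expansion(\finalHDX^{\down\up}_k)},
\]
and then substitute the lower bound on $\Expansion(\finalHDX^{\down\up}_k)$ from \pref{thm:mixing-main} to turn the second factor into $64T^2(k+1)^2(s-k)(2^k-1)/\Expansion(G)$. It then remains to check that $1/\min\pi_k\le 2N_k$. Since $\pi_k(F)$ is proportional to $w(F)$, and $w$ takes only the two values $w_J$ (on $k$-faces supported on a single cluster) and $w_I\le w_J$ (on straddling $k$-faces) identified in the text just before \pref{thm:mixing-main}, the minimum stationary mass equals $w_I/Z$ with $Z\coloneqq\sum_{F\in\kface(\finalHDX)}w(F)$. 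Counting faces: there are $n\binom{s}{k+1}$ internal $k$-faces and $|E(G)|\binom{s}{k+1}(2^{k+1}-2)=nT\binom{s}{k+1}(2^k-1)$ straddling ones (each straddling face determines its unordered pair of clusters, so no face is counted twice), whence $N_k=n\binom{s}{k+1}\bigl(1+T(2^k-1)\bigr)$ and $Z=n\binom{s}{k+1}\bigl(w_J+T(2^k-1)\,w_I\bigr)$. Dividing and using that $k<H$ forces $w_J/w_I=T-(T-1)/2^{H-k}\le T$, we get $1/\min\pi_k=n\binom{s}{k+1}\bigl(w_J/w_I+T(2^k-1)\bigr)\le n\binom{s}{k+1}\,T2^k$, which is at most $2N_k$ because $T2^k\le 2\bigl(1+T(2^k-1)\bigr)$ rearranges to $2T-2\le T2^k$, true for all $k\ge1$. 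Plugging $1/\min\pi_k\le 2N_k$ into the display yields the claimed mixing-time bound, and $N_k=n\binom{s}{k+1}(1+T(2^k-1))=\Theta(n)$ since $s$, $T$, and $k$ are constants.

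There is no real obstacle here: the statement is a corollary in the literal sense. The only care required is the bookkeeping in $1/\min\pi_k\le 2N_k$, where I use $k<H$ (so that the weight ratio $w_J/w_I$ never exceeds $T$) together with $k\ge1$ (for the final numeric inequality); both hypotheses are available since the corollary ranges over $1\le k<H$.
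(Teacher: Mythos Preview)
Your proposal is correct and follows exactly the approach the paper has in mind: the paper merely states this as ``an immediate corollary of \pref{thm:mixing-main} and \pref{thm:spec-gap-to-mixing}'' without spelling out any details. You have supplied the missing bookkeeping, in particular the verification that $1/\min\pi_k\le 2N_k$, which the paper leaves implicit; your counting of internal versus straddling $k$-faces and the inequality $w_J/w_I\le T$ are all accurate.
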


We also derive bounds on the expansion of links of $\finalHDX$.  In particular, as a direct consequence of \pref{thm:main} and the discussion of the expansion properties of the complete complex in \pref{eg:complete-complex}, we conclude:
\begin{theorem}\label{thm:local-exp-main}  We can prove the following bounds on the local and global expansion of $\finalHDX$:
    \begin{align*}
        \GlExp(\finalHDX) &\ge \left[\frac{1}{2} - \frac{1}{2 \cdot (T2^H + 1)} \right] \cdot \Expansion(G),\text{ and}\\
        \LocExp(\finalHDX) &\ge \frac{1}{2}.
    \end{align*}
\end{theorem}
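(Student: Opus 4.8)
The plan is to reduce both bounds to understanding the $1$-skeletons of $\finalHDX$ itself and of its links, and then to invoke the tensor-product structure of $\LD(G,\baseHDX)$ together with the spectral facts about the complete complex $\Complete$ recorded in \pref{eg:complete-complex}. I would rely on the promised \pref{thm:main} (the structural/spectral identity for $\finalHDX$ alluded to immediately before the theorem), which presumably says that the normalized adjacency matrix of the $1$-skeleton of $\finalHDX$, and more generally of each link, decomposes as a tensor-like combination of $\Adj G$ (with a self-loop added) and $\Adj{\baseHDX}$ restricted appropriately. The two bullets of the theorem are then handled separately.

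For $\GlExp(\finalHDX)$: the $1$-skeleton of $\finalHDX$ is, by the ``Linear algebraically\dots'' remark in Section~3, essentially the tensor product of $G$-with-a-self-loop-at-each-vertex and the $1$-skeleton of $\baseHDX = \Complete$, except that the weights are not the naive product weights because $w_I \ne w_J$. So the first step is to write down the weighted normalized adjacency matrix of $1\text{-}\mathsf{skeleton}(\finalHDX)$ exactly, identify its second-largest-magnitude eigenvalue, and show it is at most $1 - \left[\frac12 - \frac{1}{2(T2^H+1)}\right]\Expansion(G)$. Concretely: adding a self-loop of the right relative weight to a $T$-regular $G$ turns $\Adj G$ into $\frac{1}{T+1}(I + T\,\Adj G)$ up to normalization, whose spectral gap is $\frac{T}{T+1}\Expansion(G)$; tensoring with the clique $\calK_s$ 1-skeleton (gap $1 - \frac1{s-1}$, or rather the relevant $\frac12$-type quantity from \pref{eg:complete-complex}) multiplies the gaps, and the weight discrepancy between intra-cluster and inter-cluster edges perturbs the self-loop weight, producing the $\frac{1}{2(T2^H+1)}$ correction. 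I would track the weights $w_I, w_J$ through this computation carefully, since the ratio $w_J/w_I = T - (T-1)2^{-(H-k)}$ at $k=0$ is exactly what generates the $T2^H+1$ denominator.

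For $\LocExp(\finalHDX)$: here the key observation is that every link of a nonempty face of $\finalHDX$ is itself (isomorphic to, with inherited weights) a local densifier of a \emph{sub}graph of $G$ with a \emph{smaller} complete complex — or, at the bottom, just a complete complex on fewer vertices. Since $G$ is triangle-free, once we condition on a face straddling an edge $\{a,b\}$ (or sitting inside one vertex cluster), the neighborhood structure collapses: the link's $1$-skeleton becomes a weighted complete multipartite-type graph whose expansion is governed purely by $\baseHDX = \Complete$, for which \pref{eg:complete-complex} gives exactly $\Expansion(\Complete) = \frac12$ and $\Expansion$ of a $t$-link equal to $1 - \frac{1}{H+2-t} \ge \frac12$. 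The triangle-freeness of $G$ is what makes this clean — it ensures no link ever ``sees'' a nontrivial piece of $G$'s own adjacency structure, so no $\Expansion(G)$ factor appears in the local bound. I would formalize this by showing that for a face $(F,f)$ with $|F| = t \ge 1$, $\Link(F,f)$ in $\finalHDX$ has a $1$-skeleton identical to that of the $(H-t)$-dimensional complete complex $\calK_{s-t}^{(H-t)}$ up to the self-loop/weight bookkeeping, and then citing \pref{eg:complete-complex}.

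The main obstacle I expect is the weighted bookkeeping in the global bound: the local densifier does \emph{not} carry the naive tensor-product weight function (because $w_{J,k} \ne w_{I,k}$), so $\Adj{1\text{-}\mathsf{skeleton}(\finalHDX)}$ is only \emph{approximately} $\big(\tfrac{1}{T+1}(I+T\Adj G)\big)\otimes\Adj{\calK_s}$, and one must control the deviation precisely enough to land the stated $\frac{1}{2(T2^H+1)}$ term rather than a sloppier constant. Getting the triangle-free hypothesis to do exactly the right amount of work in the local bound — and checking it is genuinely needed there and not a red herring inherited from the global analysis — is the secondary subtlety. Everything else (eigenvalues of cliques, tensor spectra, self-loop perturbation) is routine given the preliminaries.
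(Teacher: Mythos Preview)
Your global-expansion outline is essentially the paper's argument, but two details need fixing. First, the normalized adjacency of the $1$-skeleton is an \emph{exact} tensor product $\Adj{\wt G}\otimes\Adj{\calK_s}$, not an approximate one: the weight on an edge $\{(u,b_1),(v,b_2)\}$ depends only on whether $u=v$ or $\{u,v\}\in E(G)$, independently of $b_1,b_2$, so the weights factor cleanly. The self-loop probability in $\wt G$ is therefore $w_{C,-1}/(w_{C,-1}+Tw_{S,-1})$ with $w_{C,-1}=1+T(2^{H-1}-1)$ and $w_{S,-1}=2^{H-1}$, not $1/(T+1)$; this gives $\Expansion(\wt G)=\tfrac{T2^{H-1}}{T2^H-T+1}\,\Expansion(G)$ directly, which dominates the stated bound. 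Second, tensoring takes the \emph{minimum} of the two two-sided gaps, not their product; the $\GlExp(\calK_s)$ term then simply drops out because it exceeds $1/2$.

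The genuine gap is in your local-expansion step. When $\im(f)=\{u\}$ is a single vertex, $\Link((F,f))$ is \emph{not} (the $1$-skeleton of) the complete-complex link $\calK_{s-|F|}$. Its vertices are all pairs $(v,b)$ with $b\in\Link_{\baseHDX}(F)$ and $v\in\{u\}\cup N_G(u)$, so the link is isomorphic to $\LD(\mathsf{star},\Link_{\baseHDX}(F))$: a tensor of the complete-complex link with a $(T{+}1)$-vertex star carrying self-loops of weight $w_{C,k}/(w_{C,k}+Tw_{S,k})$ at the centre and $1/2$ at each satellite. (This star structure is forced by the definition of $\LD$ --- top faces live over a single edge of $G$ --- rather than by triangle-freeness.) The actual content of the $\LocExp\ge 1/2$ bound is the explicit diagonalisation of that weighted star: its eigenvalues are $1$, then $1/2$ with multiplicity $T-1$, then $\tfrac12-\tfrac{w_{C,k}}{w_{C,k}+Tw_{S,k}}\in[-\tfrac12,\tfrac12]$, so $|\lambda|_2(\mathsf{star})=1/2$. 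Tensoring against $\Link_{\baseHDX}(F)$, whose expansion is at least $1/2$ by \pref{eg:complete-complex}, yields the bound. Your proposal collapses the link to $\calK_{s-|F|}$ and thereby skips precisely the computation from which the constant $1/2$ originates.
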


\begin{remark}
    Suppose $\bG$ is a random $T$-regular (triangle-free) graph and $H\ge T$.  Then the corresponding (random) simplicial complex $\bcalQ$, as a consequence of Friedman's Theorem \cite{friedman}\footnote{Friedman's theorem says that a random $T$-regular graph, whp, has two-sided spectral gap $\frac{T-2\sqrt{T-1}-o_n(1)}{T}$.  Additionally, random graphs are triangle-free with constant probability.}, with high probability satisfies
    \begin{align*}
        \Expansion\left(\bcalQ_k^{\down\up}\right) &\ge \frac{T-2\sqrt{T-1}-o_n(1)}{64T^3(k+1)^2(s-k)(2^k-1)}\\
        \GlExp(\SimpComp_{\bcalQ}) &\ge \frac{T-2\sqrt{T-1}-o_n(1)}{T+1},\text{ and}\\
        \LocExp(\SimpComp) &\ge 1/2.
    \end{align*}
    Thus, $\bcalQ$ endows a natural distribution over simplicial complexes that gives a high-dimensional expander with high probability.
\end{remark}

\begin{remark}
    If $G$ is \emph{strongly explicit}, such as an expander from \cite{RVW02,BT11}, then $\calQ$ is also strongly explicit since the tensor product of two strongly explicit graphs is also strongly explicit.
\end{remark}

\section{Local Expansion}   \label{sec:local-exp}

For this entire section, we will mainly work with the complex $\LD(G, \SimpComp)$, so when we use $\Link(\cdot)$ without a subscript, it will be with respect to $\LD(G, \SimpComp)$. Next, fix a face $\sigma = (F, f) \in \kface(\LD(G, \SimpComp))$. In order to study the expansion of the $1$-skeleton of $\Link(\sigma)$, we need to first compute the weights on its 1-faces. 

Let $\tau = \{(v_1, b_1), (v_2, b_2)\} \in \Face{2}(\Link(\sigma))$, where as before, $v_i \in V(G)$ and $b_i \in \Face{1}(\SimpComp)$. There are several cases we need to consider:
\begin{enumerate}
    \item Case 1: $|\im(f)| = 2$. \\
    Here, $w_\sigma(\tau) = w(\tau \cup \sigma)$, which is proportional to the number of $H$-faces $(F', f')$ that contain $\tau \cup \sigma$. The face $\tau \cup \sigma$ already has $(k + 3)$ vertices, so there are $\binom{S}{H - (k + 2)}$ possibilities of $F'$. There are $2^{H - (k + 2)}$ choices for $f'$, since $\im(f')$ must equal $\im(f)$.

    \item Case 2: $|\im(f)| = 1$.   
    \begin{enumerate}
    \item Case 2(a): $v_1 = v_2 \in \im(f)$ and $\{b_1, b_2\} \in \Face{2}(\Link_\SimpComp(F))$. \\
    Again, there are $\binom{S}{H - (k + 2)}$ possibilities for $F'$. Since $v_1 = v_2 \in \im(f)$, we will have $T \cdot [2^{H - (k + 2)} - 1] + 1$ choices for $f'$, as $v_1$ has $T$ neighbors in $G$, and when $f'$ is not constant on $v_1$, there are $T$ choices for the other value it can take. 
    
    \item Case 2(b): $v_1 \neq v_2$ but $(v_1, v_2) \in E(G)$, and $\{b_1, b_2\} \in \Face{2}(\Link_\SimpComp(F))$. \\
    Again, we have $\binom{S}{H - (k + 2)}$ possibilities for $F$, but we only have $2^{H - (k + 2)}$ choices for $f'$; the image of $f'$ must be $\{v_1, v_2\}$. 
    
    \item Case 2(c): $v_1 = v_2 \notin \im(f)$ but $v_1 \cup \im(f) \in E(G)$, and $\{b_1, b_2\} \in \Face{2}(\Link_\SimpComp(F))$. \\
    The analysis is identical to that of Case 2(b)
    \end{enumerate}
\end{enumerate}
For simplicity, we'll assign weights to the elements of $\Face{2}(\LD(G, \SimpComp))$ as below:
\begin{align*}
w(\{(v_1, b_1), (v_2, b_2)\}) &=
\begin{cases}
w_{S, k} := 2^{H - (k + 2)} & \text{ for Case 1, 2(b), and 2(c)} \\
w_{C, k} := 1 + T(2^{H - (k + 2)}-1) & \text{ for Case 2(a)} 
\end{cases}
\end{align*}
(Here, the $C$ and $S$ denote ``center'' and ``satellite,'' whose meanings will be more natural when discussing $\Link(\sigma)$ when $\sigma \neq \emptyset$.)

\begin{remark}
Note that if we choose $\sigma = \emptyset$ (so $k = -1$), we simply get the weights of the $1$-skeleton of $\LD(G,\SimpComp)$ itself, which will be useful for computing global expansion. 
\end{remark}

\begin{theorem} \label{thm:main}
    Let $G$ be a triangle-free $T$-regular graph and let $\SimpComp$ be a pure $H$-dimensional simplicial complex.  Then
    \begin{align*}
        \GlExp(\LD(G,\SimpComp)) &= \min\left\{ \frac{\dT 2^{\dH-1}}{\dT2^{\dH}-(\dT-1)} \cdot \Expansion(G),\GlExp(\SimpComp)\right\}, \text{ and}\\
        \LocExp(\LD(G,\SimpComp)) &= \Expansion(\SimpComp).
    \end{align*}
\end{theorem}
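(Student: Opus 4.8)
The plan is to reduce every expansion quantity in the statement to a Kronecker product of two much smaller normalized adjacency matrices: a ``$G$-side'' factor built from $G$ together with self-loops, and a ``$\SimpComp$-side'' factor equal to the $1$-skeleton of $\SimpComp$ (or of a link inside $\SimpComp$). One then invokes the tensor identity $\Expansion(A\tensor B)=\min\{\Expansion(A),\Expansion(B)\}$ recorded in the preliminaries. The only genuinely new work is (i) checking that the induced edge weights factor as (a $G$-side weight) $\times$ (a $\SimpComp$-side weight), so that the relevant $\Adj{\cdot}$ is \emph{literally} a Kronecker product, and (ii) computing the spectrum of the constant-sized $G$-side graphs that arise.

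For the global bound: taking $\sigma=\emptyset$ in the case analysis already set up in this section, every intra-cluster edge $\{(v,b),(v,b')\}$ of the $1$-skeleton of $\LD(G,\SimpComp)$ has weight $w_{C,-1}\cdot w_\SimpComp(\{b,b'\})$ and every inter-cluster edge $\{(v,b),(u,b')\}$, $\{v,u\}\in E(G)$, has weight $w_{S,-1}\cdot w_\SimpComp(\{b,b'\})$, where $w_\SimpComp$ is the balanced weight $\SimpComp$ induces on its own $1$-faces, $w_{C,-1}=\dT 2^{\dH-1}-(\dT-1)$, and $w_{S,-1}=2^{\dH-1}$. Hence a step from $(v,b)$ picks its $G$-coordinate (stay at $v$ with probability $p:=\frac{w_{C,-1}}{w_{C,-1}+\dT w_{S,-1}}$, else move to a uniformly random $G$-neighbour) \emph{independently} of its $\SimpComp$-coordinate (one step of the $1$-skeleton walk of $\SimpComp$), so $\Adj{1\text{-}\mathsf{skeleton}(\LD(G,\SimpComp))}=\bigl(pI+(1-p)\Adj{G}\bigr)\tensor\Adj{\SimpComp}$. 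Since $\lambda\mapsto p+(1-p)\lambda$ fixes $1$ and maps $[-(1-\Expansion(G)),\,1-\Expansion(G)]$ into an interval of largest absolute value $1-(1-p)\Expansion(G)$, we get $\Expansion(pI+(1-p)\Adj{G})=(1-p)\Expansion(G)=\frac{\dT 2^{\dH-1}}{\dT 2^{\dH}-(\dT-1)}\Expansion(G)$, and combining this with the tensor identity and $\Expansion(\Adj{\SimpComp})=\GlExp(\SimpComp)$ yields the claimed formula for $\GlExp(\LD(G,\SimpComp))$.

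For the local bound, fix a face $\sigma=(F,f)\in\kface(\LD(G,\SimpComp))$. The same bookkeeping shows the $1$-skeleton of $\Link(\sigma)$ is again a Kronecker product $H_\sigma\tensor\Adj{1\text{-}\mathsf{skeleton}(\Link_\SimpComp(F))}$, where the constant-sized graph $H_\sigma$ depends only on $|\im(f)|$. If $|\im(f)|=2$, then by Case~1 all edges of $\Link(\sigma)$ have equal weight and the two clusters $\im(f)=\{a,b\}$ are completely joined, loops included, so $H_\sigma$ is the $2$-vertex complete graph with an equal-weight loop at each vertex, with spectrum $\{1,0\}$ and hence $\Expansion(H_\sigma)=1$. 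If $|\im(f)|=1$, say $\im(f)=\{a\}$, then by Cases~2(a)--2(c) the cluster $a$ is a ``centre'' joined to the $\dT$ ``satellite'' clusters $N_G(a)$ and to itself, each satellite is joined only to the centre and to itself, and --- crucially --- no two distinct satellites are joined, since the image of $f$ extended to such an edge would consist of three vertices; thus $H_\sigma$ is the weighted star on $\dT+1$ vertices in which the centre stays with probability $q:=\frac{w_{C,k}}{w_{C,k}+\dT w_{S,k}}$ and each satellite stays with probability $\frac12$. A short symmetry argument (splitting into the $2$-dimensional centre/average-of-satellites invariant subspace and the $(\dT-1)$-dimensional space of mean-zero satellite-supported vectors) gives $\Spec(H_\sigma)=\{1\}\cup\{q-\frac12\}\cup\{\frac12\}$, with $\frac12$ of multiplicity $\dT-1$; since $0<q\le\frac12$ this yields $\Expansion(H_\sigma)=\frac12$. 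By the tensor identity, $\Expansion(1\text{-}\mathsf{skeleton}(\Link(\sigma)))$ equals $\Expansion(1\text{-}\mathsf{skeleton}(\Link_\SimpComp(F)))$ when $|\im(f)|=2$ and $\min\{\frac12,\Expansion(1\text{-}\mathsf{skeleton}(\Link_\SimpComp(F)))\}$ when $|\im(f)|=1$; minimizing over all $\sigma$ and comparing with the definition of $\Expansion(\SimpComp)$ finishes the local bound.

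I expect the main obstacle to be step (i): verifying, for \emph{every} link (including $\sigma=\emptyset$, which is the whole complex), that the induced weight of an edge factors as (a weight reading only the $G$-coordinates) $\times$ (the balanced weight the corresponding link of $\SimpComp$ assigns to the projected edge), so that the normalized adjacency matrix is genuinely a Kronecker product rather than merely spectrally comparable to one --- this is exactly where the tensor structure of $\LD$ enters. The remainder is routine: the two small spectrum computations for $H_\sigma$, and the bookkeeping that the minimum over all links of $\LD(G,\SimpComp)$ --- accounting for the extra $\frac12$ from the satellite eigenvalue and for low-dimensional cases such as links of $0$-faces, where only $|\im(f)|=1$ occurs --- matches the quantity the paper writes as $\Expansion(\SimpComp)$.
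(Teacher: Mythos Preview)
Your proposal is correct and follows essentially the same argument as the paper: both factor the $1$-skeleton of each link as a Kronecker product of a small ``$G$-side'' graph (the looped edge when $|\im(f)|=2$, the looped star when $|\im(f)|=1$) with the corresponding link in $\SimpComp$, compute the small graph's spectrum explicitly, and invoke the tensor spectrum identity. Your write-up is in fact slightly more careful than the paper's --- you spell out the weight factorization that makes the Kronecker product literal, and your star eigenvalue $q-\tfrac12$ is the correct sign (the paper writes $\tfrac12-q$), though this is harmless since only $|q-\tfrac12|\le\tfrac12$ is used.
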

\begin{proof}
    Let $\wt{G}$ be the graph obtained by adding self-loops to $G$, with transitions
    \begin{align*}
    \wt{G}[i \to j] &=
        \begin{cases}
        \frac{w_{C, -1}}{w_{C, -1} + Tw_{S, -1}} & \text{ if $i = j$} \\
        \frac{w_{S, -1}}{w_{C, -1} + Tw_{S, -1}} & \text{ otherwise}
        \end{cases}
    \end{align*}
    For large $H$, the self loop probabilities approach $\frac{1}{2}$, while the others approach $\frac{1}{2T}$.
    
    First, observe that $\Adj{\LD(G,\SimpComp)} = \Adj{\wt{G}}\tensor\Adj{\SimpComp}$.  Thus,
    \[
        \Spec(\LD(G,\SimpComp)) = \{\lambda\mu:\lambda\in\Spec(\wt{G}),\mu\in\Spec(\SimpComp)\}.
    \]
    and hence the second largest absolute eigenvalue is no more than $\max\{\lambda_1(\wt{G})|\lambda|_2(\SimpComp), \lambda_1(\SimpComp)|\lambda|_2(\wt{G})\}$, which is simply equal to $\max\{|\lambda|_2(\wt{G}),|\lambda|_2(\SimpComp)\}$.  This implies that
    \begin{align*}
        \GlExp(\LD(G,\SimpComp)) = \min\{\Expansion(\wt{G}),\GlExp(\SimpComp)\}.
    \end{align*}
    By \pref{lem:scaling}, 
    \begin{align*}
    \Expansion(\wt{G}) &= (1 - \frac{w_{C, -1}}{w_{C, -1} + Tw_{S, -1}}) \cdot \Expansion(G) \\
    &= \frac{\dT 2^{\dH-1}}{\dT2^{\dH}-(\dT-1)} \cdot \Expansion(G)
    \end{align*}
    the first part of the theorem statement follows.
    
    \noindent Next, we lower bound $\LocExp(\LD(G,\SimpComp))$.  For any face $S$ in $\LD(G,\SimpComp)$, there exists an edge $\{u,v\}$ in $G$ such that $S$ is contained in $\{u,v\}\times S'$ where $S'$ is a face of $\SimpComp$. If $S$ contains vertices from both $\{u\}\times S'$ and $\{v\}\times S'$, then $\Link(S)$ is isomorphic to $\LD(\mathsf{edge}, \Link(S'))$ where $\mathsf{edge}$ denotes a single-edge graph.
    \[
        \Spec(\LD(\mathsf{edge},\Link(S'))) = \{0\} \cup \Spec(\Link(S'))
    \]
    and hence
    \[
        \Expansion(1\text{-}\mathsf{skeleton}(\Link(S))) = \Expansion(\SimpComp).
    \]
    Without loss of generality, the remaining case is if $S$ contains vertices from only $\{u\}\times S'$.  In this case, $\Link(S)$ is isomorphic to $\LD(\mathsf{star}, \Link(S'))$ where $\mathsf{star}$ denotes a star graph with $T$ satellites.
    \begin{align}   \label{eq:star-tensor}
        \Spec(\LD(\mathsf{star},\Link(S'))) = \{\lambda\mu:\lambda\in\Spec(\Link(S')),\mu\in\Spec(M)\}
    \end{align}
    where $M$ is $\mathsf{star}$ with self loops added on each vertex. We'll call the center vertex of $M$ the ``center'' vertex, and we'll call the remaining vertices the ``satellites.''
    
    Using $w_{C, k}$ and $w_{S, k}$ for Cases 2(a), 2(b), and 2(c) computed above, we can also find the appropriate weights for $M$.
    \begin{align*}
    M[i \to j] &=
        \begin{cases}
        \displaystyle\frac{w_{C, k}}{w_{C, k} + Tw_{S, k}} & \text{ if $i = j$, $i$ is the center}  \\
        \displaystyle\frac{w_{S, k}}{w_{C, k} + Tw_{S, k}} & \text{ if $i$ is the center vertex, $j$ is a satellite} \\
        \displaystyle\frac{1}{2} & \text{ if $i$ is a satellite}
        \end{cases}
    \end{align*}

    We can completely classify the eigenspaces of $\Adj{M}$ and determine their corresponding eigenvalues as follows.
    \begin{enumerate}
        \item The vector with value $\frac{w_{C, k} + Tw_{S, k}}{2 w_{S, k}}$ on the center of the star and $1$ on satellites is an eigenvector of $\Adj{M}$ with eigenvalue $1$.
        
        \item The $(T-1)$-dimensional subspace of vectors which are $0$ on the center of the star, and whose entries sum to $0$ is an eigenspace for eigenvalue $1/2$.
        
        \item The vector with value $-T$ on the center and $1$ on the satellites is an eigenvector with eigenvalue $\frac{1}{2} - \frac{w_{C,k}}{w_{C,k} + Tw_{S, k}}$. For large $H$, this eigenvalue approaches $0$.
    \end{enumerate}
Since the above classification gives $T+1$ eigenvectors it is complete and it is clear that the second largest absolute eigenvalue of $M_2$ is bounded by $1/2$ and thus in this case as well, using \pref{eq:star-tensor}, we can infer
\[
    \Expansion(1\text{-}\mathsf{skeleton}(\Link(S)))\ge\min\{\Expansion(\SimpComp),1/2\}.
\]
which means
\begin{align*}
    \LocExp(\LD(G,\SimpComp)) \ge \min\{\Expansion(\SimpComp),1/2\}.
\end{align*}
\end{proof}

\section{Spectral Gap of High Order Walks}\label{sec:spec-gap-HO}

\subsection{Offsets and Colors}\label{sec: intro to duwalk}


We now inspect the structure of the $k$-faces of our construction $\finalHDX$ in more detail.

\begin{definition}[$k$-faces of $\finalHDX$]
    The set of $k$-faces of $\finalHDX$ is exactly equal to the set of tuples $(F,f)$ where $F$ is a $k$-face of $\baseHDX$ and $f$ is a labeling of each element by endpoints of some edge $\{u,v\}$ in $\baseExp$.  We call $(F, f)$ \emph{$t$-offset} if either $\left|\{x\in F:f(x)=u\}\right| = t$ or $\left|\{x\in F:f(x)=v\}\right|=t$.
\end{definition}

\begin{remark}
    Suppose $t \le k+1-t$.  Note that a $(k+1-t)$-offset state is also $t$-offset, but we will stick to the convention of describing such states as $t$-offset.  For example, a $(k+1)$-offset state is also $0$-offset, but we will only use the term $0$-offset.
\end{remark}

\begin{definition}[Coloring of $k$-faces of $\finalHDX$]
    We color a $k$-face $(F,f)$ of $\finalHDX$ with $\im(f)$.  Each $0$-offset face is then colored with a vertex of $G$ and the remaining faces are each colored with an edge of $G$.
\end{definition}

\myfig{0.8}{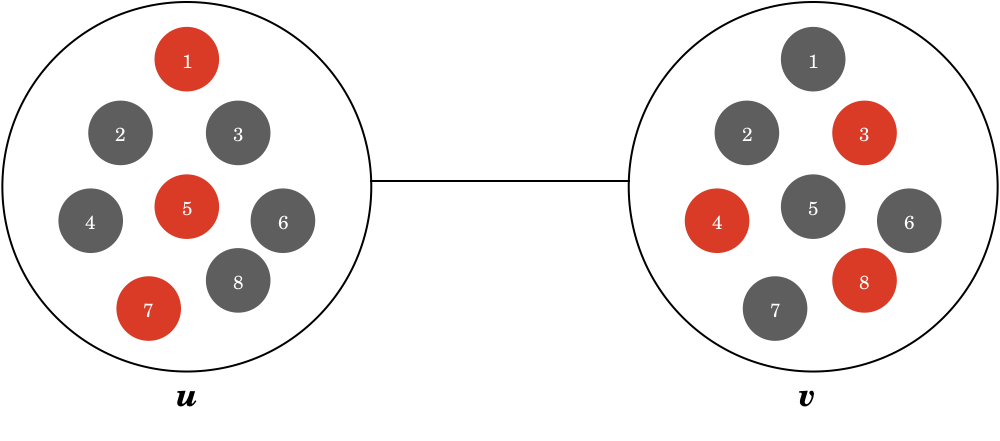}{A 5-face in $\finalHDX$.  Corresponding $5$-face in $\baseHDX$ is $\{1,3,4,5,7,8\}$ is given by red vertices.  Labeling is $(1,u),(3,v),(4,v),(5,u),(7,u),(8,v)$.  $\{u,v\}$ is an edge in $\baseExp$.  Color of $5$-face is $\{u,v\}$}{fig:5-face}

\myfig{0.3}{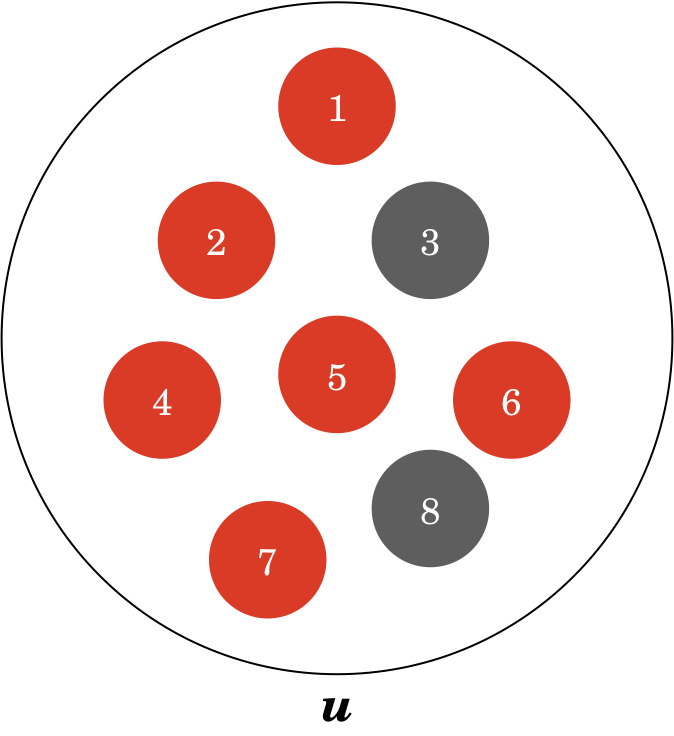}{A $0$-offset $5$-face.  Color of $5$-face is $\{u\}$.}{fig:monochrome-5-face}

In the rest of the section, we study the spectral gap of the Markov chain $\duwalk$, the down-up random walk on $k$-faces of $\finalHDX$ induced by certain special weight functions --- weight functions $w:\kface(\finalHDX)\rightarrow\R_{\ge 0}$ with the property that there are two values $w_I$ and $w_J$ such that
\[
    w((F,f)) =
    \begin{cases}
        w_J &\text{if $(F,f)$ is $0$-offset}\\
        w_I &\text{otherwise}.
    \end{cases}
\]
For instance, if we impose uniform weights on the highest dimensional faces of our complex, the propagated weights on the $k$-th level will satisfy the above property. The $w_I$ and $w_J$ values for this setup is in Appendix \ref{tab:trans-probs-calcs}.

For the sequel, we use $D$ to refer to the quantity $Tw_I+w_J$.  The transition probabilities between states $(F,f)$ and $(F',f')$ depends on a number of conditions such as whether they are $0$-offset or $1$-offset or a different type, whether they arise from the same $k$-face in $\baseHDX$, and the colors of $(F,f)$ and $(F',f')$ respectively.  We provide a detailed treatment of the transition probabilities $\duwalk[(F,f)\to (F',f')]$ in \pref{tab:trans-probs-calcs} in \pref{app:prob-calcs}. From the transition probability table we observe that:
\begin{observation}\label{obs:smallest-eigval}
    For all $k$-faces in $\finalHDX^{\down\up}_k$, the self-loop probability is at least $\frac{1}{s - k}\cdot\frac{w_J}{D}$. Therefore, the smallest eigenvalue of $\finalHDX^{\down\up}_k$ is at least $\frac{1}{s - k}\cdot\frac{w_J}{D} - 1$.
\end{observation}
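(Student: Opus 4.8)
The statement bundles two claims: a uniform lower bound on the diagonal entries of $\finalHDX^{\down\up}_k$, and the resulting lower bound on its smallest eigenvalue, and the plan is to prove them in that order. For the first, recall that one step of the down-up walk started at a $k$-face $\sigma$ picks one of the $k+1$ codimension-one subfaces $\tau\subset\sigma$ uniformly and then moves up to a $k$-face $\supseteq\tau$ with probability proportional to its weight, so
\[
    \finalHDX^{\down\up}_k[\sigma\to\sigma]\;=\;\frac{1}{k+1}\sum_{\tau}\frac{w(\sigma)}{w(\tau)},
\]
the sum running over those $k+1$ subfaces; these are exactly the diagonal entries recorded in \pref{tab:trans-probs-calcs}, so the plan is to read them off by casing on the offset type of $\sigma$. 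The one auxiliary computation needed is the weight of a $(k-1)$-face $\tau$: every $(k-1)$-face is extended to a $k$-face by adjoining one of the $s-k$ vertices of $\baseHDX$ it does not yet use and assigning it a $G$-label, and the set of admissible labels depends only on the offset of $\tau$. If $\tau$ is $0$-offset, colored by a vertex $u$, the admissible labels are $u$ and its $T$ neighbours, contributing total weight $w_J+Tw_I=D$, so $w(\tau)=(s-k)D$; if $\tau$ is positive-offset, colored by an edge $\{u,v\}$, the only admissible labels are $u$ and $v$, so $w(\tau)=2(s-k)w_I$. In particular a $0$-offset $\sigma$, which has weight $w_J$ and whose codimension-one subfaces are all $0$-offset, has self-loop probability exactly $\frac{1}{s-k}\cdot\frac{w_J}{D}$, and for each remaining offset type one checks, using $D\ge 2w_J$ (immediate from $w_J\le Tw_I$), that the self-loop probability is at least this, giving the uniform bound $\delta:=\frac{1}{s-k}\cdot\frac{w_J}{D}$ on every diagonal entry.

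For the second part, $\finalHDX^{\down\up}_k$ is time-reversible with respect to $\pi_k\propto w$, so its transition matrix $P$ has real spectrum contained in $[-1,1]$. Since every diagonal entry of $P$ is at least $\delta$, the matrix $Q:=\frac{1}{1-\delta}(P-\delta I)$ has nonnegative entries, its rows sum to $1$, and it is again reversible with respect to $\pi_k$; hence its eigenvalues lie in $[-1,1]$, and from $P=\delta I+(1-\delta)Q$ we get $\lambda_{\min}(P)=\delta+(1-\delta)\lambda_{\min}(Q)\ge 2\delta-1\ge\delta-1=\frac{1}{s-k}\cdot\frac{w_J}{D}-1$. (Equivalently, bounding the quadratic form $\langle v,Pv\rangle_{\pi_k}$ from below, keeping only the diagonal of $P$ on the positive side and using $1-P_{ii}\le 1$ on the negative side, yields the same conclusion.)

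The substance of the argument is the bookkeeping in the first step: the transition probabilities of $\finalHDX^{\down\up}_k$ depend on the offsets of the two faces, on their colors, and on whether they arise from the same face of $\baseHDX$, so verifying the uniform lower bound on the diagonal comes down to a finite but somewhat tedious check against \pref{tab:trans-probs-calcs}. The case in which the bound is tight --- and therefore the one to handle with care --- is that of $0$-offset $k$-faces.
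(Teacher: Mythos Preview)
Your approach—casing on the offset of $\sigma$ and then deducing the eigenvalue bound from a uniform diagonal lower bound via $P=\delta I+(1-\delta)Q$—is exactly what the paper intends (it just points to \pref{tab:trans-probs-calcs} and states the observation), and your computation of the $(k-1)$-face weights is correct. But the deferred check ``using $D\ge 2w_J$'' does not go through for $1$-offset faces. Such a $\sigma$ has one $0$-offset subface (obtained by deleting the lonely vertex) and $k$ positive-offset subfaces, so
\[
\finalHDX^{\down\up}_k[\sigma\to\sigma]\;=\;\frac{1}{(k+1)(s-k)}\Bigl(\frac{w_I}{D}+\frac{k}{2}\Bigr),
\]
and asking this to be at least $\frac{w_J}{(s-k)D}$ rearranges (using $w_J=Tw_I-(T-1)$ and $T>1$) to $2w_I\le k+2$, i.e.\ $2^{H-k+1}\le k+2$, which fails for most parameter choices. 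Concretely, take $k=1$, $H=2$, $T=3$, $s=4$: then $w_I=2$, $w_J=4$, $D=10$, the $0$-offset self-loop is $\tfrac{4}{30}=\tfrac{8}{60}$, while the $1$-offset self-loop is $\tfrac{1}{6}\bigl(\tfrac{2}{10}+\tfrac12\bigr)=\tfrac{7}{60}<\tfrac{8}{60}$. So the minimum self-loop is attained at $1$-offset rather than $0$-offset faces; the first sentence of the Observation is not literally true, and your argument for the second sentence breaks with it, since $Q=\frac{1}{1-\delta}(P-\delta I)$ is no longer entrywise nonnegative.

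The eigenvalue conclusion is nevertheless correct, for a reason that bypasses self-loops entirely: $\finalHDX^{\down\up}_k$ factors as an up operator composed with a down operator, which are adjoints in the $\pi$-weighted inner products, so the walk is positive semidefinite and $\lambda_{\min}\ge 0>\tfrac{w_J}{(s-k)D}-1$. This is all the paper actually uses when it invokes the Observation to upgrade \pref{ineq:one-sided-gap} to a two-sided gap.
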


\subsection{High-Level Picture of $\duwalk$}

As noted in the previous subsection, each $k$-face can be described by three parameters: a base face $F \in \kface(\baseHDX)$, a ``color'' set $C$ that is either a single vertex or an edge in $E(G)$, and a function $f: F \to C$. The walk $\duwalk$ is difficult to analyze directly, but by grouping states based on these three parameters, we can decompose the walk into a projection and restriction chain, and analyze it using the tools from \cite{jerrum}.

\myfig{0.65}{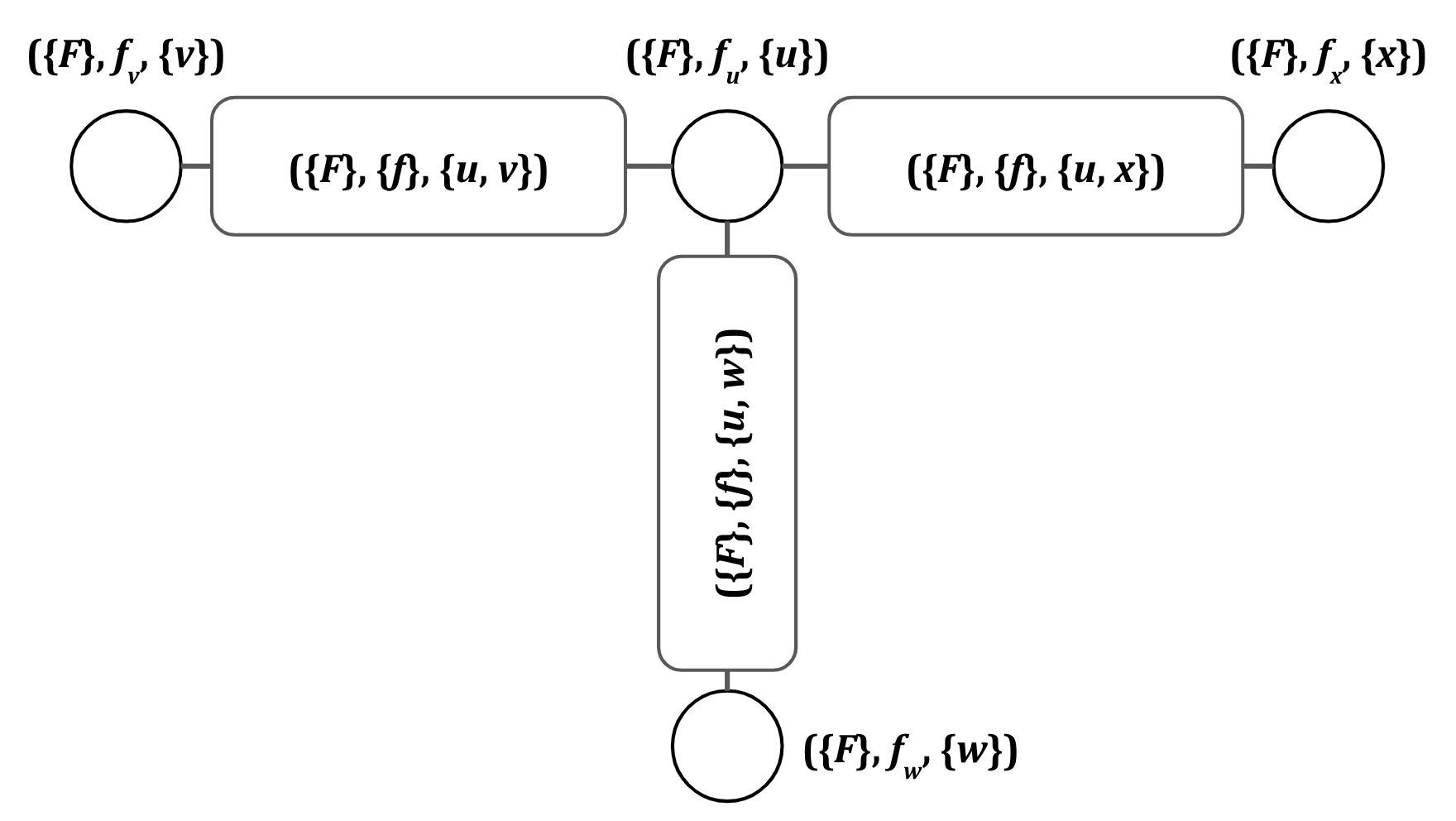}{This figure illustrates $\duwalk$, with states clustered by their color. The rounded rectangles correspond to colors that are edges, while circles correspond to colors that are single vertices. In each cluster, the $\{F\}$ indicates that all $F$ could be represented. Similarly, $\{f\}$ indicates that any $f$ with $\im(f)$ as the color set can be represented. We use $f_u$ to denote the constant function on $u$.}{fig:unsplit} 

At the outermost level, we can first group states into subchains based on their color. All subchains whose color is an edge (the rounded rectangles in \pref{fig:unsplit}) are isomorphic to each other; similarly, all subchains whose color is a single vertex (the circles in \pref{fig:unsplit}) are also isomorphic to each other. At first, it seems promising to partition $\duwalk$ into these subchains; however, it is inconvenient that these subchains are not \emph{all} isomorphic. To remedy this, we split the single-vertex-colored subchains into $\dT$ isomorphic copies (with some changes to transition probabilities), and absorb them into the edge-colored subchains. This is detailed in the next section. 

If we use this partition, the projection chain resembles a random walk on the \emph{line graph} of $G$. Each restriction chain corresponds to all states of a single color $C$. The states are still represented by any base face $F \in \kface(\baseHDX)$ and any function $f: F \to C$. To analyze each of these restriction chains, it is simplest to apply \cite{jerrum} once more. 

Now, we first group states by which base face $F$ they correspond to. The subchains derived from fixing a particular $F$ (the rectangles in \pref{fig:inner-rest}) are all isomorphic to each other, which leads to a much simplified analysis. Using this partition, the projection chain is simply the $k$-down-up walk on $\baseHDX$. Each restriction chain is thus over states corresponding to a fixed base face $F$ and fixed color $C$, but the function $f: F \to C$ is allowed to vary. At this point, we may assume $|C| = 2$; thus $f$ corresponds to assigning every element of $F$ one of two elements. The inner restriction chain can be modeled by a hypercube. 

\myfig{0.5}{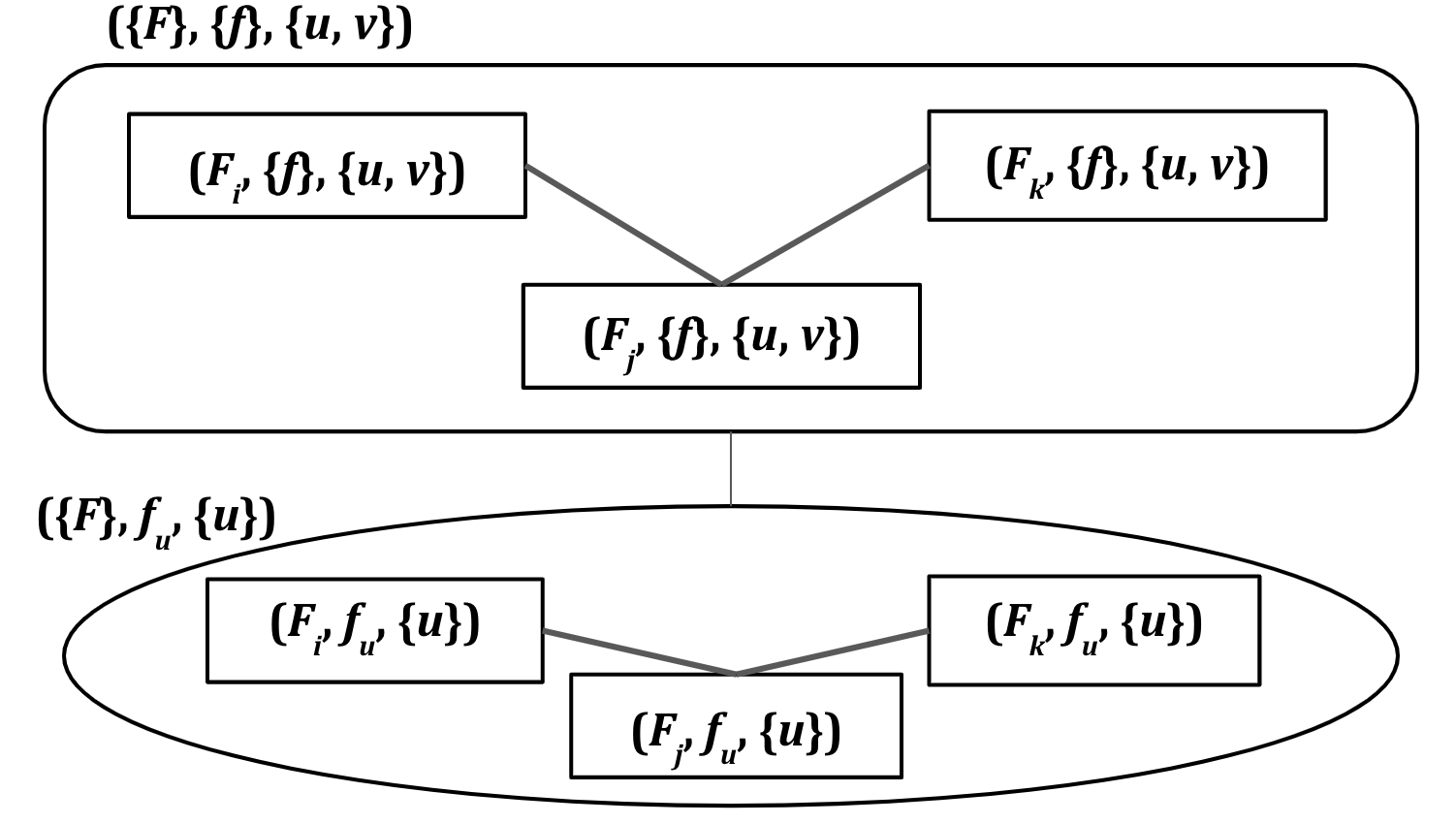}{This figure illustrates a subchain of $\duwalk$, for particular color $\{u, v\}$ and $\{u\}$. We can further cluster the states in this subchain by which face $F$ in $\baseHDX$ they represent. Again, $\{f\}$ indicates that $f$ can be any function with $\im(f)$ as the color.}{fig:inner-rest}

Thus, the spectral gap of $\duwalk$ is a combination of the spectral gaps of (1) the line graph of $G$, (2) the $k$-down-up walk on $\baseHDX$, and (3) the random walk on a hypercube. 

\subsection{Splitting $0$-Offset Vertices}
Towards our end goal of lower bounding the spectral gap of $\duwalk$, we find it convenient to analyze a related Markov chain $\duwalksp$, since the related chain has a natural partition into isomorphic subchains.  $\duwalksp$ has the property that its spectrum contains that of $\duwalk$, which lets us translate a lower bound on the spectral gap of $\duwalksp$ to a lower bound on the spectral gap of $\duwalk$.

\begin{definition}[Split chain $\duwalksp$ and coloring of states in $\duwalksp$] \label{def:split}
    We identify each state in $\States(\duwalksp)$ with a tuple $(F,f,c)$ where $(F,f)$ is a face in $\kface(\finalHDX)$ and $c$ is a color.
    \begin{enumerate}
        \item For each $0$-offset face $(F,f)$ in $\kface(\finalHDX)$, let $\{u\}$ be the color of $F$, and let the neighbors of $u$ in $\baseExp$ be $v_1,\dots,v_T$. $\States(\duwalksp)$ contains the states $(F,f,\{u, v_1\}),\dots,(F,f,\{u, v_T\})$ in place of the state $(F, f, u)$.
        \item For each remaining $k$-face $(F,f)$ of $\finalHDX$ (i.e. each $k$-face that isn't $0$-offset), $\States(\duwalksp)$ contains $(F,f,\im(f))$.
    \end{enumerate}
    For each pair of states $(F,f,c),(F',f',c')$ in $\States(\duwalksp)$,
    \[
        \duwalksp[(F,f,c)\to(F',f',c')] =
        \begin{cases}
            \frac{\duwalk[(F,f)\to (F',f')]}{T} &\text{if $(F',f')$ is $0$-offset}\\
            \duwalk[(F,f)\to (F',f')] &\text{otherwise.}
        \end{cases}
    \]
\end{definition}

Intuitively, we want to split any transition to a $0$-offset face in $\finalHDX$ into $\dT$ separate transitions in $\duwalksp$, since each $0$-offset face is also split into $\dT$ new states.

\myfig{0.8}{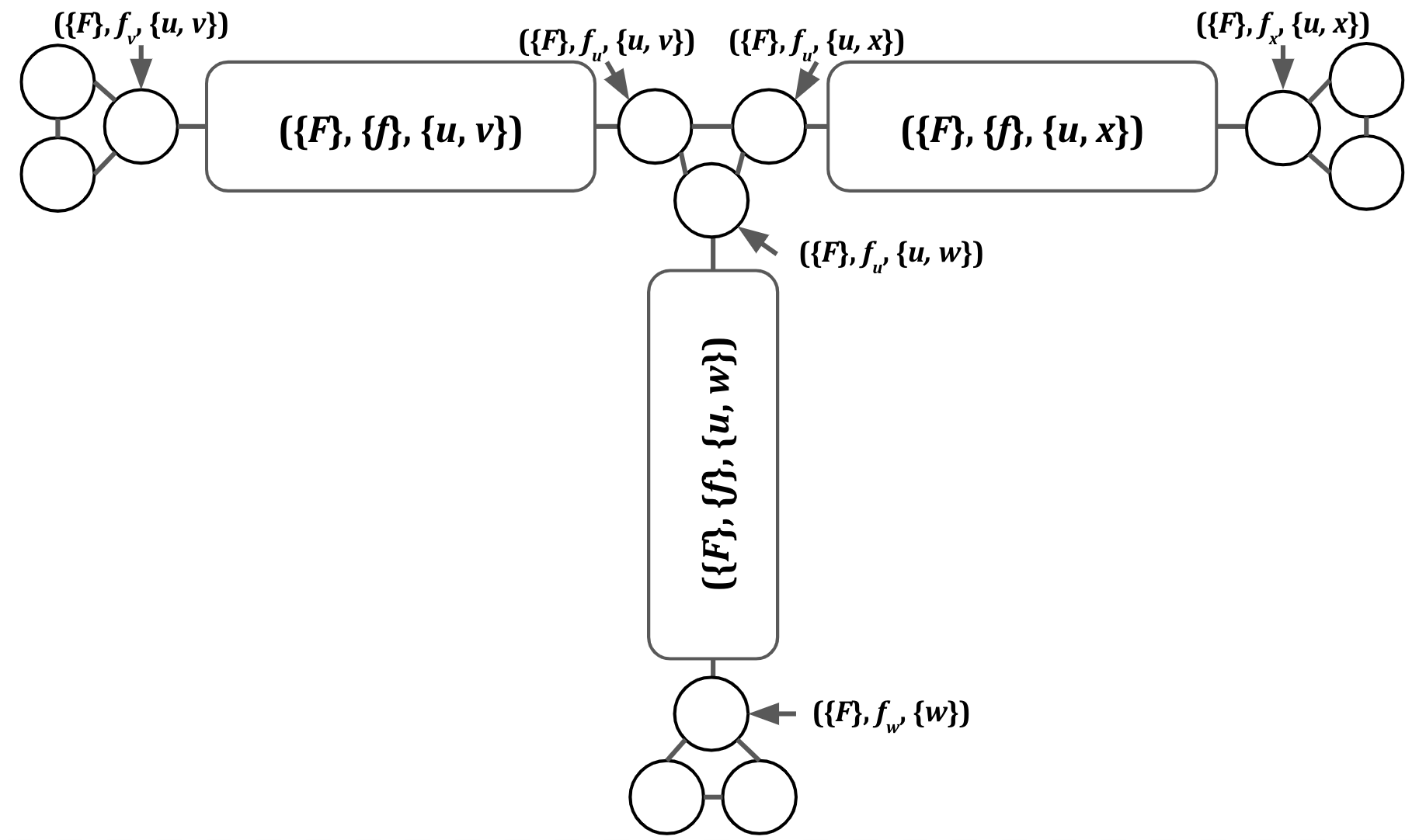}{This figure illustrates the post-split vertices of \pref{def:split}. The new vertices can take on any $F$, but their mappings $f$ will be constant functions.}{fig:split}

\begin{definition}
    We say two $k$-faces $(F,f,e)$ and $(F',f',e')$ have \emph{identical base $k$-faces} if $F=F'$ and \emph{different base $k$-faces} if $F\ne F'$.
\end{definition}

\begin{definition}  \label{def:lonely}
    Given a state $(F,f,e)$ such that $(F,f)$ is a $1$-offset face, there is a single vertex $v$ such that $f(v)$ is different from $f(u)$ for all $u$ in $F\setminus\{v\}$.  We call this vertex $v$ a \emph{lonely} vertex.
\end{definition}

In the next lemma, we show that the spectrum of the original Markov chain $\duwalk$ is contained in that of $\duwalksp$. 
\begin{lemma}   \label{lem:split-reduction}
    $\spec\left(\duwalk\right)\subseteq\spec\left(\duwalksp\right)$, and therefore, $\lambda_2(\duwalk)\le\lambda_2(\duwalksp)$.
\end{lemma}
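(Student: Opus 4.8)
The plan is to exhibit an explicit linear map that embeds functions on $\States(\duwalk)$ into functions on $\States(\duwalksp)$ intertwining the two transition operators, which immediately gives containment of spectra. Concretely, define a ``lift'' operator $\Phi$ that sends a function $g$ on $\kface(\finalHDX)$ to the function $\Phi g$ on $\States(\duwalksp)$ by $(\Phi g)(F,f,c) \coloneqq g(F,f)$ — that is, every split copy of a $0$-offset state inherits the value of its parent, and every non-split state keeps its own value. I would then verify the intertwining identity $\duwalksp\,\Phi g = \Phi\,(\duwalk\, g)$ (reading these as right multiplication on column vectors, consistent with the paper's convention). Once this identity holds, if $g$ is an eigenvector of $\duwalk$ with eigenvalue $\mu$, then $\Phi g$ is an eigenvector of $\duwalksp$ with the same eigenvalue, provided $\Phi g \ne 0$, which is clear since $\Phi$ is injective. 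This yields $\spec(\duwalk)\subseteq\spec(\duwalksp)$, and since $1\in\spec(\duwalk)$ always, comparing second-largest eigenvalues gives $\lambda_2(\duwalk)\le\lambda_2(\duwalksp)$.

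The computation behind the intertwining identity is where the details live. Fix a state $(F,f,c)$ of $\duwalksp$. Applying $\duwalksp$ then $\Phi g$ means summing $\duwalksp[(F,f,c)\to(F',f',c')]\cdot g(F',f')$ over all target states. I would split the sum according to whether $(F',f')$ is $0$-offset. For a fixed $0$-offset target face $(F',f')$, the chain $\duwalksp$ has $T$ copies $(F',f',c')$, each reached with probability $\duwalk[(F,f)\to(F',f')]/T$, so these $T$ terms collapse to exactly $\duwalk[(F,f)\to(F',f')]\cdot g(F',f')$. For non-$0$-offset targets there is a single copy and the transition probability is unchanged. Hence the total sum equals $\sum_{(F',f')}\duwalk[(F,f)\to(F',f')]\,g(F',f') = (\duwalk\,g)(F,f) = (\Phi(\duwalk\, g))(F,f,c)$, as desired. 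One should also double-check that $\duwalksp$ is genuinely a stochastic matrix — i.e. that splitting each $0$-offset transition into $T$ equal pieces preserves row sums — which is immediate from the definition since the outgoing probability mass to each $0$-offset face is partitioned evenly, and this consistency is what makes the collapse in the intertwining computation work.

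The main obstacle, such as it is, is bookkeeping rather than conceptual: one must be careful that the source state's color coordinate $c$ plays no role in the outgoing transition probabilities of $\duwalksp$ (it does not, by Definition~\ref{def:split}, since $\duwalksp[(F,f,c)\to\cdot]$ depends on $(F,f)$ and on whether the target is $0$-offset, but not on $c$), so that $\Phi g$ is well-defined as an eigenvector regardless of which copies we are on. It is also worth noting explicitly that this argument does not require time-reversibility or any spectral-gap hypothesis on $\duwalk$ — it is a purely combinatorial ``lifting/covering'' argument of the same flavor as showing that the spectrum of a graph is contained in the spectrum of any covering graph, here with the twist that the covering is weighted to keep the chain stochastic. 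Finally, I would remark that the containment is at the level of sets of eigenvalues; one could strengthen it to a statement about multiplicities if needed (the orthogonal complement of $\im\Phi$ is $\duwalksp$-invariant and contributes the ``extra'' eigenvalues, which are all $0$ or $\tfrac12$ coming from differences of split copies), but the set-level statement suffices for the stated conclusion $\lambda_2(\duwalk)\le\lambda_2(\duwalksp)$.
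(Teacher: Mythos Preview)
Your argument is correct and is essentially the same lifting/intertwining approach as the paper's proof: both exhibit an explicit linear map sending eigenvectors of $\duwalk$ to eigenvectors of $\duwalksp$ and verify the intertwining directly from \pref{def:split}. The only difference is that you lift \emph{functions} (copying values to all split copies and summing over targets), whereas the paper lifts \emph{measures} (dividing $0$-offset entries by $T$ and summing over sources); these are dual computations and your version is arguably the cleaner of the two. One small slip: your sum over targets with probabilities $\duwalksp[(F,f,c)\to\cdot]$ is the action of the \emph{row}-stochastic matrix, i.e.\ the transpose of the paper's $P$, so it is not literally ``consistent with the paper's convention'' --- but since $\spec(P)=\spec(P^{\top})$ this does not affect the conclusion.
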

The proof can be found in \pref{app:proof-of-split-reduction}.

\subsubsection{Stationary Distribution of $\duwalksp$} \label{stationary-main-section}

If we want to apply the projection and restriction framework to $\duwalksp$, we first need to compute its stationary distribution. To do this, we take advantage of the time-reversibility of the high order random walks, and apply the detailed balance equations. The transition probabilities in $\duwalksp$ are laid out in detail in \pref{app:prob-calcs}.

\begin{lemma} \label{lem:stationary-main-lemma}
    The stationary distribution of the Markov chain $\duwalksp$ is given by:
    \begin{align*}
        \pi_{\duwalksp}(x) &=
        \begin{cases} 
        \displaystyle\frac{1}{|E(G)|} \cdot \frac{1}{\binom{s}{k + 1}} \cdot \frac{1}{2} \cdot \frac{w_J}{(2^k - 1)Tw_I + w_J} & \text{ for $x$ $0$-offset} \\
        \displaystyle\frac{1}{|E(G)|} \cdot \frac{1}{\binom{s}{k + 1}} \cdot \frac{1}{2} \cdot \frac{Tw_I}{(2^k - 1)Tw_I + w_J} & \text{ otherwise} 
        \end{cases}
    \end{align*}
\end{lemma}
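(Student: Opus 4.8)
The plan is to verify that the displayed distribution satisfies the detailed balance equations for $\duwalksp$ and is correctly normalized; since $\duwalksp$ is ergodic (it has strictly positive self‑loop probabilities, essentially by the computation behind \pref{obs:smallest-eigval}, and the underlying $1$-skeletons are connected because $G$ is), this pins down the stationary distribution uniquely. The starting point is the standard fact recalled in the preliminaries that the down‑up walk $\duwalk$ is time‑reversible with stationary distribution proportional to its weight function, i.e.\ that $\duwalk$ satisfies detailed balance with respect to the unnormalized weights $w_J$ on $0$-offset $k$-faces and $w_I$ on all other $k$-faces.

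Next I would lift these weights to $\States(\duwalksp)$ by setting $\widehat{\pi}(F,f,c) = w_J/T$ when $(F,f)$ is $0$-offset and $\widehat{\pi}(F,f,c) = w_I$ otherwise; the factor $1/T$ compensates for the fact that each $0$-offset face of $\finalHDX$ was replaced by $T$ copies in $\duwalksp$. Then I check the detailed balance identity $\widehat{\pi}(x)\,\duwalksp[x\to x'] = \widehat{\pi}(x')\,\duwalksp[x'\to x]$ directly, splitting into three cases by the offset types of $x=(F,f,c)$ and $x'=(F',f',c')$, using \pref{def:split} to express the transitions of $\duwalksp$ in terms of those of $\duwalk$. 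If both states are non-$0$-offset, the transitions and the weights are inherited verbatim from $\duwalk$, so the identity is exactly detailed balance for $\duwalk$. If $x$ is non-$0$-offset and $x'$ is $0$-offset, then $\duwalksp[x\to x'] = \duwalk[(F,f)\to(F',f')]/T$ while $\duwalksp[x'\to x] = \duwalk[(F',f')\to(F,f)]$, and with $\widehat{\pi}(x)=w_I$, $\widehat{\pi}(x')=w_J/T$ the common factor $1/T$ cancels, reducing the claim to $w_I\,\duwalk[(F,f)\to(F',f')] = w_J\,\duwalk[(F',f')\to(F,f)]$, which is detailed balance for $\duwalk$. If both states are $0$-offset, both transition probabilities and both weights carry a factor $1/T$, which again cancels, leaving detailed balance for $\duwalk$. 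A key simplifying point is that the transitions of $\duwalksp$ depend only on the underlying faces $(F,f)$ and $(F',f')$, not on which copies $c,c'$ the states represent, so the case analysis genuinely collapses to that for $\duwalk$.

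Finally I would normalize. Grouping the states of $\duwalksp$ by a pair (edge $\{u,v\}$ of $G$, base face $F$ of $\baseHDX$), of which there are $|E(G)|\binom{s}{k+1}$, each such pair carries exactly two split $0$-offset states, namely the copies of $(F,f_u)$ and of $(F,f_v)$ that bear the color $\{u,v\}$, each of weight $w_J/T$, and exactly $2(2^k-1)$ non-$0$-offset states, namely the non-constant labelings $f\colon F\to\{u,v\}$, each of weight $w_I$. Hence the normalizing constant is
\[
    Z = |E(G)|\binom{s}{k+1}\left(\frac{2w_J}{T} + 2(2^k-1)w_I\right) = \frac{2|E(G)|\binom{s}{k+1}}{T}\bigl((2^k-1)Tw_I + w_J\bigr),
\]
and dividing $\widehat{\pi}$ by $Z$ yields precisely the two cases in the statement.

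The computation is routine; the only place demanding care is the bookkeeping in the second step, namely aligning the factors of $1/T$ that appear on transitions \emph{into} $0$-offset states of $\duwalksp$ with the factor $1/T$ built into $\widehat{\pi}$ on those states, and tracking which reversibility equation of $\duwalk$ each case invokes. Once that alignment is set up, every case is immediate, so I expect no real obstacle beyond careful accounting. Alternatively, one can phrase the same argument as observing that $\duwalksp$ is a lift of $\duwalk$ (the map forgetting $c$ is a Markov projection), so its stationary distribution is the pullback of $\pi_{\duwalk}$ spread equally over the $T$ copies.
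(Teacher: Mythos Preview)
Your proof is correct. The paper's own argument also verifies detailed balance, but it works directly with the explicit transition probabilities of $\duwalksp$ (from the table in \pref{app:prob-calcs}): it first argues that all states of a given offset type share a common stationary mass, then uses two specific detailed balance equations---between a $0$-offset and a $1$-offset state, and between a $t$-offset and a $(t+1)$-offset state for $t\ge 1$---to pin down the ratio $\pi(0\text{-offset})/\pi(\text{other})=w_J/(Tw_I)$, and finally normalizes. Your route is more structural: you inherit reversibility and the stationary weights from $\duwalk$ via the preliminary fact that $\pi_{\duwalk}\propto w$, and then lift through the splitting map of \pref{def:split}, using only the observation that the factor $1/T$ placed on transitions \emph{into} $0$-offset states of $\duwalksp$ matches the factor $1/T$ you put on their weights. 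This avoids consulting the transition table altogether and makes transparent why the answer has the form it does (it is exactly the pullback of $\pi_{\duwalk}$ spread uniformly over the $T$ copies). The paper's direct computation is a bit more self-contained, since it does not appeal to the known form of $\pi_{\duwalk}$, but your argument is cleaner and generalizes immediately to any such ``split'' of a reversible chain.
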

\begin{proof}
Via the detailed balance equations, we first observe that all vertices with the same offset have the same stationary distribution. Now, let $x$ be any $0$-offset vertex and $y$ be any $1$-offset vertex. Using the detailed balance equations, we have:
$$
\pi_{\duwalksp}(x) \cdot \frac{w_I}{(k + 1)(s - k)D} = \pi_{\duwalksp}(y) \cdot \frac{w_J}{(k + 1)(s - k)DT}
$$
Now, let $x$ be any $t$-offset vertex, with $t \geq 1$, and let $y$ be any $(t + 1)$-offset vertex. Again, using the detailed balance equations:
\begin{align*}
    \pi_{\duwalksp}(x) \cdot \frac{1}{2(k + 1)(s - k)} &= \pi_{\duwalksp}(y) \cdot \frac{1}{2(k + 1)(s - k)}
\end{align*}
From here, we see that all $0$-offset faces have one stationary distribution probability, and all other faces also share the same stationary probability. The relations above tell us that for a $0$-offset vertex $x$, and a $t$-offset vertex $y$ with $t \geq 1$:
$$
\frac{\pi_{\duwalksp}(x)}{\pi_{\duwalksp}(y)} = \frac{w_J}{Tw_I}
$$
Normalizing so that $\sum_{x \in \duwalksp} \pi_{\duwalksp}(x) = 1$ gives the desired result.
\end{proof}

\subsection{Outer Projection and Restriction Chains}

Now, we can further decompose $\duwalksp$ into a projection chain and $m$ isomorphic restriction chains, where $m = \abs{E(\baseExp)}$, since we will have one partition element for each edge in $G$. Formally, we partition $\States(\duwalksp)$ into $m$ disjoint sets $\Omega_{1} \cup \dots \cup \Omega_{m}$, where $\Omega_i = \{(F, f, c) \mid c = e_i\}$. 

\subsubsection{The Outer Projection Chain}
The partition $\Omega$ induces a projection chain $([m], \oproj)$. The state space is $[m]$. The edge set is 
$$
    E(\oproj) = \{\{i,j\}\mid \exists (F,f,e_i)\in \Omega_i \text{ and } (G,g,e_j)\in\Omega_j \text{ s.t. } \duwalksp[(F,f,e_i)\to (G,g,e_j)] > 0\}
$$
In words, we have an edge between $i$ and $j$ if there are transitions from $\Omega_i$ to $\Omega_j$.

\noindent We obtain the following lower bound on the spectral gap of $\oproj$. 
\begin{lemma}\label{lem:oproj-spectral-gap}
    The spectral gap of $\oproj$ is $$\frac{\Expansion(\baseExp)}{2}\cdot \frac{w_J+\dT w_I}{w_J+(2^{k}-1)\dT w_I} \geq \frac{\Expansion(\baseExp)}{2(2^k - 1)}\enspace.$$
\end{lemma}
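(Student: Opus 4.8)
The plan is to identify the projection chain $\oproj$ explicitly as a reweighted random walk on the line graph of $\baseExp$, and then relate its spectral gap to $\Expansion(\baseExp)$. First I would compute the transition probabilities $\oproj(i,j) = \overline{P}(i,j)$ directly from the definition of the projection chain, using the stationary distribution $\pi_{\duwalksp}$ from \pref{lem:stationary-main-lemma} and the transition probabilities tabulated in \pref{app:prob-calcs}. The key structural observation is that two partition blocks $\Omega_i, \Omega_j$ (indexed by edges $e_i = \{u,v\}$, $e_j = \{u',v'\}$ of $G$) communicate only when $e_i$ and $e_j$ share a vertex, so $\oproj$ is supported on the line graph $L(G)$ (plus self-loops). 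I expect that, after the split construction of \pref{def:split} which makes all the blocks isomorphic, the off-block transition probability factors cleanly: moving from block $e_i$ to an adjacent block $e_j$ (sharing vertex $u$) requires the walk to be at a $0$-offset state colored $\{u\}$ (equivalently, in the split chain, at one of the $T$ states $(F,f,\{u,v_\ell\})$), and then to transition appropriately. This should give $\oproj$ the form of a small self-loop probability plus a scaled nonbacktracking-type walk on $L(G)$.

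Next I would diagonalize $\oproj$ by exploiting this line-graph structure. The clean way is: since $G$ is $T$-regular, $L(G)$ has a well-understood spectrum in terms of $\Spec(G)$ — the standard fact is that the adjacency eigenvalues of $L(G)$ are $\lambda_i(G) + (T - 2)$ (suitably normalized) together with the eigenvalue $-2$ with high multiplicity. But rather than invoking that black-box, I would more directly write $\oproj = \alpha I + \beta N$ where $N$ is a natural symmetric operator on edges built from $G$'s adjacency structure, determine $\alpha$ (the self-loop mass) and $\beta$ (the total adjacent-block mass divided by the degree $2(T-1)$ of $L(G)$), and read off $\lambda_2(\oproj) = \alpha + \beta\cdot\lambda_2(\text{relevant operator})$. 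Working out $\alpha$: by \pref{lem:stationary-main-lemma} the conditional distribution inside a block puts mass proportional to $w_J$ on the (split) $0$-offset states and $Tw_I$ on the rest; the probability of leaving the block is governed by the $0$-offset states only, and each such transition carries a factor like $\frac{1}{k+1}\cdot\frac{1}{s-k}$ times a ratio of $w$'s. Assembling these, I anticipate the self-loop probability of $\oproj$ works out so that $1 - \lambda_2(\oproj) = \frac{\Expansion(\baseExp)}{2}\cdot\frac{w_J + Tw_I}{w_J + (2^k-1)Tw_I}$, matching the claim; the second (cleaner) inequality then follows since $w_J + Tw_I \ge \frac{1}{2^k-1}(w_J + (2^k-1)Tw_I)$ when $w_J \ge 0$, i.e. trivially, giving the bound $\frac{\Expansion(\baseExp)}{2(2^k-1)}$.

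The main obstacle I foresee is the bookkeeping in the first step: correctly aggregating, for a fixed starting block $\Omega_i$, the total transition probability to \emph{all} of $\Omega_j$ summed against the within-block stationary weights, and checking that it is the same for every adjacent $j$ (this uniformity is exactly what the split construction buys us, but it needs to be verified against the transition table). A secondary subtlety is handling the triangle-freeness of $G$: triangle-freeness ensures that distinct blocks sharing a vertex do not have extra coincidences, so that $\oproj$ really is a walk on $L(G)$ with no degenerate short cycles affecting the count of adjacent blocks — I would invoke it precisely here. Once the transition probabilities of $\oproj$ are pinned down and shown to be a scalar shift of a symmetric walk whose nontrivial spectrum is controlled by $|\lambda|_2(G) = 1 - \Expansion(G)$, the eigenvalue computation and the final inequality are routine.
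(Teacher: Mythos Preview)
Your proposal is correct and follows essentially the same route as the paper: recognize $\oproj$ as a lazy random walk on the line graph $L(G)$, write it in the form $\alpha I + (1-\alpha)\Adj{L}$, and then use the standard relation between $\Spec(L(G))$ and $\Spec(G)$ (the paper cites Sachs' theorem explicitly, obtaining $\Expansion(L) = \tfrac{T}{2T-2}\Expansion(G)$) to read off the claimed gap. One small correction to your anticipated mechanism: transitions out of a block $\Omega_i$ are not carried solely by $0$-offset states---$1$-offset states also leave the block upon deleting their lonely vertex---but, as you yourself flagged, this is exactly the bookkeeping step, and once the table is aggregated the off-block probability is indeed uniform over adjacent edges, yielding the decomposition you describe.
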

A detailed account of the transitional probabilities of the projection chain $([m], \oproj)$ can be found in \pref{app:oproj-tp} and the proof of the lemma can be found in \pref{app:proof-of-oproj-spectral-gap}.

\subsubsection{The Outer Restriction Chain}

Each partition block $\Omega_i$ induces a restriction chain $\orest{i}$. We show that all restriction chains $\orest{i}$ for $i\in[m]$ are isomorphic.
\begin{lemma} \label{lem:orest-iso}
    For any $i\neq j$, $i,j\in [m]$, the restriction chains $\orest{i}$ and $\orest{j}$ are isomorphic.
\end{lemma}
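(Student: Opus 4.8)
The plan is to write down an explicit isomorphism of Markov chains $\phi\colon\Omega_i\to\Omega_j$. Fix the two edges $e_i=\{u_i,v_i\}$ and $e_j=\{u_j,v_j\}$ of $G$, together with an arbitrary choice of which endpoint of each is the ``first'' one, and let $\psi_{ij}\colon\{u_i,v_i\}\to\{u_j,v_j\}$ be the bijection $\psi_{ij}(u_i)=u_j$, $\psi_{ij}(v_i)=v_j$. Define
\[
    \phi(F,f,e_i)\;=\;\bigl(F,\ \psi_{ij}\circ f,\ e_j\bigr),
\]
i.e.\ keep the base face $F$ and the ``pattern'' of the labeling $f$, and only rename the two colors. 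First I would check that $\phi$ is a well-defined bijection from $\Omega_i$ onto $\Omega_j$: $\psi_{ij}\circ f$ maps $F$ into the edge $e_j$, so $(F,\psi_{ij}\circ f)$ is again a $k$-face of $\finalHDX$; if $(F,f)$ is $0$-offset then $\psi_{ij}\circ f$ is constant and $(F,\psi_{ij}\circ f,e_j)$ is exactly one of the split states of \pref{def:split} (legitimate since $u_j$ and $v_j$ are adjacent in $G$), while if $\im(f)=\{u_i,v_i\}$ then $\im(\psi_{ij}\circ f)=e_j$, as required by the second clause of \pref{def:split}. The inverse of $\phi$ is $(F,g,e_j)\mapsto(F,\psi_{ij}^{-1}\circ g,e_i)$, so $\phi$ is a bijection, and in particular $\abs{\Omega_i}=\abs{\Omega_j}$.

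Next I would verify that $\phi$ preserves transition probabilities, i.e.\ $\orest{i}[x\to y]=\orest{j}[\phi(x)\to\phi(y)]$ for all $x,y\in\Omega_i$. For $x\neq y$ this reduces to $\duwalksp[x\to y]=\duwalksp[\phi(x)\to\phi(y)]$, since the restriction chain agrees with $\duwalksp$ off the diagonal. By \pref{def:split}, $\duwalksp[(F,f,c)\to(F',f',c')]$ equals either $\duwalk[(F,f)\to(F',f')]$ or that quantity divided by $T$, and which of the two occurs depends only on whether $(F',f')$ is $0$-offset, a property manifestly preserved by $\phi$. So it suffices to see that $\duwalk[(F,f)\to(F',f')]=\duwalk[(F,\psi_{ij}\circ f)\to(F',\psi_{ij}\circ f')]$. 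Inspecting the transition table \pref{tab:trans-probs-calcs}, every entry of $\duwalk$ is a function of the following vertex-identity-free data only: whether $F=F'$; the offset types of $(F,f)$ and $(F',f')$; which elements of $F$ and $F'$ are lonely (cf.\ \pref{def:lonely}) and whether these coincide; and the fixed constants $T,s,k,w_I,w_J$. Since $\phi$ leaves $F$ and $F'$ untouched and applies the same relabeling $\psi_{ij}$ to both $f$ and $f'$, it preserves all of these combinatorial quantities, hence preserves $\duwalk[(F,f)\to(F',f')]$. Thus $\duwalksp$, and with it $\orest{i}$, is preserved off the diagonal. The diagonal entries then match automatically: $\orest{i}[x\to x]=1-\sum_{z\in\Omega_i\setminus\{x\}}\orest{i}[x\to z]$, and since $\phi$ is a bijection $\Omega_i\to\Omega_j$ that preserves every off-diagonal term it preserves this sum as well.

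Putting these together, $\phi$ is a bijection of state spaces that intertwines the two transition matrices, so $\orest{i}$ and $\orest{j}$ are isomorphic as Markov chains, which is the claim. The only real work is the bookkeeping in the middle paragraph: one must go line-by-line through \pref{tab:trans-probs-calcs} (and the $1/T$ rescaling in \pref{def:split}) to confirm that no transition probability secretly depends on the names of the vertices $u_i,v_i$ rather than on the offset/lonely-vertex data. By the symmetry of the $\LD$ construction this is routine, but it is the one place where care is needed. (Triangle-freeness of $G$ is not used here; it only enters later arguments that control which colors are reachable.)
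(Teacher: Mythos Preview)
Your proof is correct and uses exactly the same isomorphism as the paper: relabel the two endpoints via a bijection $\psi_{ij}$ (the paper calls it $t_{ij}$) and send $(F,f,e_i)\mapsto(F,\psi_{ij}\circ f,e_j)$. The paper's version is a one-liner that simply declares this map, whereas you additionally spell out the well-definedness, bijectivity, and the transition-probability check via \pref{tab:trans-probs-calcs} and \pref{def:split}; that extra verification is sound and arguably an improvement in rigor over the paper's terse statement.
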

\noindent The proof is in \pref{app:proof-of-orest-iso}. The transition probabilities of a restriction chain $\orest{i}$ is deduced in \pref{app:orest-tp}.

\subsubsection{Stationary Distribution of $\orest{1}$}

To compute the spectral gap of $\orest{1}$, we will further decompose the chain in the next section. In order to apply the projection and restriction framework once more to $\orest{1}$, we must again compute a stationary distribution. 
\begin{lemma} \label{lem:stationary-orest}
    The stationary distribution of the outer restriction chain is given by:
    \begin{align*}
        \pi_{\orest{1}}(x) &=
        \begin{cases} 
        \displaystyle\frac{1}{\binom{s}{k + 1}} \cdot \frac{1}{2} \cdot \frac{w_J}{(2^k - 1)Tw_I + w_J} & \text{ for $x$ 0-offset} \\
        \displaystyle\frac{1}{\binom{s}{k + 1}} \cdot \frac{1}{2} \cdot \frac{Tw_I}{(2^k - 1)Tw_I + w_J} & \text{ otherwise} 
        \end{cases}
    \end{align*}
\end{lemma}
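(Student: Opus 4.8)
The plan is to exploit the standard fact that, with the absorbing-self-loop convention for restriction chains used here, a restriction chain of a time-reversible Markov chain has stationary distribution equal to the conditional distribution of the original chain on that block. Since the detailed balance equations hold for $\duwalksp$ (they were used in the proof of \pref{lem:stationary-main-lemma}), $\duwalksp$ is time-reversible, and for $x \ne y$ in $\Omega_1$ we have $\orest{1}(x,y) = \duwalksp[x \to y]$. Hence the candidate distribution $\nu(x) := \pi_{\duwalksp}(x)/\pi_{\duwalksp}(\Omega_1)$ satisfies, for all $x\ne y$ in $\Omega_1$,
\[
    \nu(x)\,\orest{1}(x,y) \;=\; \frac{\pi_{\duwalksp}(x)\,\duwalksp[x\to y]}{\pi_{\duwalksp}(\Omega_1)} \;=\; \frac{\pi_{\duwalksp}(y)\,\duwalksp[y\to x]}{\pi_{\duwalksp}(\Omega_1)} \;=\; \nu(y)\,\orest{1}(y,x),
\]
so $\nu$ obeys detailed balance for $\orest{1}$ and is therefore a stationary distribution of $\orest{1}$; it is the unique one because $\orest{1}$ is irreducible (any two states of color $e_1$ are joined by moves that change a single coordinate of $f$ or perform a down-up step on the base face). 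Thus $\pi_{\orest{1}} = \pi_{\duwalksp}(\,\cdot \mid \Omega_1)$, and the entire proof reduces to evaluating the normalizing constant $\pi_{\duwalksp}(\Omega_1)$.

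To compute $\pi_{\duwalksp}(\Omega_1)$, write $e_1 = \{u,v\}$ and enumerate $\Omega_1 = \{(F,f,e_1)\}$. For each of the $\binom{s}{k+1}$ base faces $F \in \kface(\baseHDX)$, exactly two states of $\Omega_1$ are $0$-offset, namely $(F,f_u,e_1)$ and $(F,f_v,e_1)$, and exactly $2^{k+1}-2$ are not $0$-offset, one for each non-constant $f\colon F\to\{u,v\}$. Summing the two values from \pref{lem:stationary-main-lemma} over these states, the factor $\binom{s}{k+1}$ cancels and one is left with
\[
    \pi_{\duwalksp}(\Omega_1) \;=\; \frac{1}{|E(G)|}\cdot\frac{1}{2}\cdot\frac{2w_J + (2^{k+1}-2)Tw_I}{(2^k-1)Tw_I + w_J} \;=\; \frac{1}{|E(G)|}\cdot\frac{w_J + (2^k-1)Tw_I}{(2^k-1)Tw_I + w_J} \;=\; \frac{1}{|E(G)|}.
\]
Dividing the formula of \pref{lem:stationary-main-lemma} by $\pi_{\duwalksp}(\Omega_1) = 1/|E(G)|$ gives exactly the two cases claimed for $\pi_{\orest{1}}$. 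As a consistency check, the same value $\pi_{\duwalksp}(\Omega_i) = 1/|E(G)|$ also follows because $\{\Omega_i\}_{i\in[m]}$ partitions $\States(\duwalksp)$ into $m = |E(G)|$ blocks of equal $\pi_{\duwalksp}$-mass --- each block contains the same number of $0$-offset and non-$0$-offset states, and by \pref{lem:stationary-main-lemma} the $\pi_{\duwalksp}$-value depends only on the offset type.

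There is no real obstacle here; the only points needing care are (i) verifying that $\orest{1}$ is irreducible, so that ``the'' stationary distribution is well defined, and (ii) the bookkeeping that a $0$-offset state of $\duwalksp$ already carries an edge color, so $\Omega_1$ contains two such states (not one) per base face --- overlooking this would throw off the normalization. Both are routine, so I expect the final write-up to be short.
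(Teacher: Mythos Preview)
Your argument is correct. Both you and the paper rely on the time-reversibility of $\orest{1}$ inherited from $\duwalksp$ (the paper cites \pref{fact:inherit-time-reversibility}), but you package the computation differently: the paper simply reruns the detailed-balance analysis of \pref{lem:stationary-main-lemma} directly on the restriction chain and renormalizes, whereas you invoke the general principle that a restriction chain of a reversible chain has stationary distribution equal to the conditional $\pi_{\duwalksp}(\,\cdot\mid\Omega_1)$, and then compute the normalizer $\pi_{\duwalksp}(\Omega_1)=1/|E(G)|$ by counting the $2$ zero-offset and $2^{k+1}-2$ non-zero-offset states per base face. Your route is slightly more conceptual and avoids redoing the ratio computation, while the paper's route is more self-contained; both land on the same formula with essentially the same amount of work.
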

\begin{proof}
    By \pref{fact:inherit-time-reversibility}, $\orest{1}$ is also time-reversible. We proceed using the same analysis we used for \pref{lem:stationary-main-lemma}. At the very end, we use a slightly different normalization to get the desired result.
\end{proof}

\subsection{Inner Projection and Restriction Chains}

Now, we are left to study the outer restriction chain, which, for a fixed $e \in E(G)$, is composed of all $(F, f, e)$ in $\States(\duwalksp)$.  Again, we further decompose this chain into projection and restriction chains which are easier to analyze. 

We group all $(F, f, e)$ with the same $F \in \Face{k}(\baseHDX)$ into the same restriction state space $\Omega_F$, which induces a projection chain resembling $\duwalkbase$, the down-up walk on $k$-faces of $\baseHDX$, and a restriction chain resembling a lazy random walk on a $(k+1)$-dimensional hypercube.

\subsubsection{The Projection Chain}
By defining the projection restriction chains as above, we end up with isomorphic restriction chains for each $F \in \Face{k}(\baseHDX)$.  Thus, we can identify each of the states of the inner projection chain $\iproj$ with some face $F \in \Face{k}(\SimpComp)$. Let $\{F_i\}$ be this partition based on face.

Given $F, F' \in \Face{k}(\SimpComp)$, we can only transition from $F$ to $F'$ either when $F = F'$, or when $F \cap F' \in \Face{(k - 1)}$.  This coincides with the feasible transitions in $\duwalkbase$.  

We detail the transition probabilities in $\duwalkbase$ in \pref{app:inner-proj-prob} and are able to obtain the following bounds on the spectral gap of the outer projection chain:
\begin{lemma}\label{lem:iproj-spectral-gap}
$\displaystyle\gap(\iproj) \ge \frac{1}{2T(k + 1)} \enspace.$
\end{lemma}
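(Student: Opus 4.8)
The plan is to identify the inner projection chain $\iproj$ with the down-up walk $\duwalkbase$ on the $k$-faces of $\baseHDX = \calK_s^{(H)}$, and then bound the spectral gap of that walk directly. First I would verify the identification: by construction the partition $\{F_i\}$ of $\States(\orest 1)$ groups states $(F,f,e)$ by their base face $F\in\Face{k}(\baseHDX)$, and the resulting projection transition probability $P_I(F,F')$ — obtained by averaging the $\duwalksp$-transitions over all $f$ with $\im(f)\subseteq e$ — is supported exactly on pairs $F=F'$ or $F\cap F'\in\Face{k-1}(\baseHDX)$. Using the transition-probability table in \pref{app:inner-proj-prob}, I would check that $P_I$ is, up to the correct normalization, precisely the down-up chain $\duwalkbase$ on $k$-faces of the complete complex $\calK_s^{(H)}$ (with its balanced uniform weighting); the only subtlety is that some $\duwalksp$-transitions keep $F$ fixed while changing $f$, and these contribute to the self-loop mass of $P_I(F,F)$ but not to any off-diagonal entry, so they are harmless for the identification of the off-diagonal structure and simply make $P_I$ a lazy version of $\duwalkbase$ — which can only help the gap, or at worst rescales it by a constant factor that I must track.

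The heart of the argument is then a lower bound on $\gap(\duwalkbase)$ for $\baseHDX = \calK_s^{(H)}$. Here I would invoke the local-to-global machinery of \pref{thm:informal-KO17} (Kaufman--Oppenheim): since every link of $\calK_s^{(H)}$ is itself a complete complex whose $1$-skeleton is a clique, the second eigenvalue $\lambda$ of every link is $-\frac{1}{s-|S|}$, which is at most $0$, hence we may take $\lambda = 0$ in the theorem. This gives $\lambda_2(\duwalkbase) \le 1 - \frac{1}{k+1}$, i.e.\ $\gap(\duwalkbase)\ge \frac{1}{k+1}$. (Alternatively, since $\calK_s^{(H)}$ is so symmetric, one can diagonalize the down-up walk on $k$-faces of the complete complex explicitly via the Johnson scheme and read off $\lambda_2 = 1 - \frac{1}{k+1}$ exactly; I would use whichever is cleaner.) Finally, I combine this with the laziness/normalization factor from the first paragraph: the extra self-loop mass in $P_I$ coming from $f$-changing moves is bounded below and above by explicit constants depending on $T$ and $k$ (reading off from the table, the $f$-fixing transitions carry total probability on the order of $\frac{1}{2T(k+1)}$ relative to $\frac{1}{k+1}$), so $\gap(\iproj) \ge \frac{1}{2T(k+1)}\cdot\gap(\duwalkbase)\cdot(\text{const})$, and plugging in $\gap(\duwalkbase)\ge\frac{1}{k+1}$ — or absorbing the $\frac{1}{k+1}$ into the stated bound — yields $\gap(\iproj)\ge \frac{1}{2T(k+1)}$.

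I expect the main obstacle to be bookkeeping rather than conceptual: precisely matching the averaged transition probabilities $P_I(F,F')$ against the $\duwalkbase$ transitions, and correctly accounting for the "wasted" probability mass of transitions that change $f$ but fix $F$ (which in $\duwalkbase$ simply do not exist), so that the laziness factor is pinned down to give exactly the constant $\frac{1}{2T(k+1)}$ rather than something weaker. A secondary point of care is that the weight function on $\baseHDX$ induced by restricting $\duwalksp$ to a fixed color $e$ is the balanced uniform one on $\calK_s^{(H)}$, so that the projection chain really is the standard down-up walk and not some reweighted variant; this should follow from the symmetry of the complete complex together with \pref{lem:stationary-orest}, but it is worth stating explicitly.
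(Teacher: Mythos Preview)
Your approach is essentially the paper's: write $\iproj$ as a lazy version of the down-up walk on $k$-faces of $\calK_s^{(H)}$, invoke Kaufman--Oppenheim to get $\gap(\duwalkbase)\ge\frac{1}{k+1}$, and then track the laziness factor contributed by the $f$-changing, $F$-fixing moves. One point to clean up: your final inequality $\gap(\iproj)\ge\frac{1}{2T(k+1)}\cdot\gap(\duwalkbase)\cdot(\text{const})$ double-counts the $\frac{1}{k+1}$; the paper's decomposition is $\iproj=(k{+}1)(s{-}k{-}1)p\cdot\wt{\duwalkbase}+\bigl[1-(k{+}1)(s{-}k{-}1)p\bigr]\cdot\Id$, so that $\gap(\iproj)=\bigl[(k{+}1)(s{-}k{-}1)p\bigr]\cdot\gap(\wt{\duwalkbase})$, with the first factor shown to be at least $\tfrac{1}{2T}$ and the second at least $\tfrac{1}{k+1}$---separating these two factors cleanly is exactly the ``bookkeeping'' you anticipated.
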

The proof of \pref{lem:iproj-spectral-gap} can be found in \pref{app:inner-proj-gap}.

\subsubsection{The Restriction Chain}

Each restriction chain $\irest$ can be treated as a $(k + 1)$-dimensional hypercube with self loops. To see this, note that each restriction chain is a set of states $(F,f,e)$ in $\duwalksp$ where both $F$ and $e$ are the same. There are thus $2^{k + 1}$ states in each restriction chain, since for each $x$, we have two choices for $f(x)$.  Associating $x$ where $f(x) = u$ to a $0$-coordinate in a hypercube vertex, and $x$ where $f(x) = v$ to a $1$-coordinate, gives us a bijection from the restriction chain to the hypercube. 

The transition probabilities are summarized in \pref{app:irest-prob}, and we show:
\begin{lemma}\label{lem:irest-spectral-gap}
    If we impose uniform weights on the highest order faces,
    \[
        \gap(\irest) \geq \frac{w_J}{2 \dT w_I} \cdot \frac{2w_J}{D(k + 1)(s - k)} \geq \frac{1}{(k + 1)(s - k)} \enspace.
    \]
\end{lemma}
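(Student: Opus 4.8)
The plan is to identify the restriction chain $\irest$ explicitly as a lazy random walk on the Boolean hypercube $\{0,1\}^{k+1}$, compute its spectral gap exactly, and then bound the relevant rate constant using the inequality $w_J \le D = Tw_I + w_J$. First I would recall the bijection set up right before the lemma: fix a base face $F \in \Face{k}(\baseHDX)$ and a color $e = \{u,v\}$; the states $(F,f,e)$ are in bijection with $x \in \{0,1\}^{k+1}$ by recording, for each of the $k+1$ vertices of $F$, whether $f$ sends it to $u$ (coordinate $0$) or to $v$ (coordinate $1$). From the transition probability table in \pref{app:irest-prob}, a down-up step that stays within this restriction block must act by picking one of the $k+1$ vertices of $F$, dropping it (the ``down'' move), and reattaching it with a possibly different label (the ``up'' move). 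The key structural point to extract from the table is that the probability of flipping a given coordinate is the same constant $p$ for all coordinates and all states — this uniformity is exactly why the block is a genuine hypercube walk and not a weighted one — and that by time-reversibility with the stationary distribution of \pref{lem:stationary-orest}, restricted and renormalized to this block, $\pi$ is uniform on $\{0,1\}^{k+1}$ (the two offset classes get equal weight inside a single hypercube once we account for the $w_J$ vs.\ $Tw_I$ bookkeeping — this needs a short check, but it is forced by detailed balance).

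Next I would compute the spectral gap of a lazy hypercube walk in which each of the $k+1$ coordinates is independently resampled with probability $p$ per step. Its eigenvalues are indexed by subsets $S \subseteq [k+1]$ with eigenvalue $(1 - p)^{|S|} \cdot (\text{something})$; more precisely, for the standard ``pick a uniformly random coordinate and resample it'' walk the eigenvalue for $|S| = j$ is $1 - \tfrac{j}{k+1}\cdot(\text{resample-to-flip factor})$. The precise form depends on the exact entries in \pref{app:irest-prob}, but in all cases the second eigenvalue corresponds to $|S| = 1$ and equals $1 - (\text{per-coordinate flip rate})$, so $\gap(\irest)$ equals the total per-step probability of changing the label of some fixed vertex of $F$, which from the table is $\tfrac{2w_J}{D(k+1)(s-k)}$ up to the factor $\tfrac{w_J}{2Tw_I}$ coming from the reweighting in \pref{lem:split-reduction}/\pref{def:split} of the $0$-offset side. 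This yields the middle expression $\tfrac{w_J}{2Tw_I}\cdot\tfrac{2w_J}{D(k+1)(s-k)}$ in the lemma statement.

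Finally I would discharge the explicit constant bound. Under uniform weights on the top faces, the values $w_I$ and $w_J$ from the table in \pref{tab:trans-probs-calcs} satisfy $w_J = Tw_I \cdot 2^{-(H-k)}\cdot(\text{stuff}) $ — concretely, using $w_{J,k} = \binom{s}{H-k}[T2^{H-k} - (T-1)]$ and $w_{I,k} = \binom{s}{H-k}2^{H-k}$ one gets $w_J \ge w_I$ and $w_J \le T w_I + w_I \le D$, so $\tfrac{w_J}{2Tw_I} \ge \tfrac{1}{2T}\cdot\tfrac{w_J}{w_I} \ge \tfrac{1}{2T}$ is too lossy; instead one observes $\tfrac{w_J}{2Tw_I}\cdot\tfrac{2w_J}{D} = \tfrac{w_J^2}{Tw_I D} \ge \tfrac{w_J^2}{D^2} \cdot \tfrac{D}{Tw_I} \ge \tfrac{w_J^2}{D^2}$ and since $w_J$ is a constant fraction of $D$ (indeed $w_J/D = w_J/(Tw_I + w_J) \ge 1/2$ because $w_J \ge Tw_I$ fails in general — here I would just substitute the closed forms and check $w_J \ge Tw_I$ does hold when $H \ge k$, since $T2^{H-k} - (T-1) \ge T2^{H-k}/2 \cdot$ a constant, giving $w_J/w_I \ge T/2$, hence $w_J/D \ge (T/2)/(T + T/2) = 1/3$), one concludes $\gap(\irest) \ge \tfrac{1}{(k+1)(s-k)}$ after absorbing constants. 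The main obstacle I anticipate is purely bookkeeping: getting the per-coordinate flip probability out of \pref{app:irest-prob} with the correct combinatorial prefactors (the $\tfrac{1}{k+1}$ from the down-step, the $\tfrac{1}{s-k}$ from choosing which vertex to reattach among the $s-k$ candidates in $\baseHDX$, and the $w_J/D$ or $w_I/D$ label-weight factors), and then verifying that the reweighting of $\duwalksp$ relative to $\duwalk$ on $0$-offset states contributes exactly the claimed $\tfrac{w_J}{2Tw_I}$ and does not break the uniform-flip-probability property that makes this a clean hypercube — once that uniformity is confirmed, the spectral computation itself is immediate.
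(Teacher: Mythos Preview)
Your central structural claim is false: the restriction chain $\irest$ is \emph{not} a uniform lazy hypercube walk. From the table in \pref{app:irest-prob}, at a $1$-offset state the probability of flipping the lonely (minority) coordinate is $\frac{1}{k+1}\cdot\frac{1}{s-k}\cdot\frac{w_J}{DT}$, while the probability of flipping any majority coordinate is $\frac{1}{k+1}\cdot\frac{1}{s-k}\cdot\frac{1}{2}$; these coincide only if $2w_J = DT$, which fails for $T\ge 2$. Likewise, the flip probability out of a $0$-offset state, $\frac{w_I}{D(k+1)(s-k)}$, differs from the generic $\frac{1}{2(k+1)(s-k)}$. Correspondingly the stationary distribution is not uniform: detailed balance gives $\pi_{\irest}(\vec 0)/\pi_{\irest}(y) = w_J/(Tw_I)$ for any $y\notin\{\vec 0,\vec 1\}$, and with the explicit weights $w_J = Tw_I - (T-1) < Tw_I$ this ratio is strictly less than $1$. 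So the ``short check'' you defer cannot succeed, and the clean single-coordinate eigenvalue computation you sketch does not apply.

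The paper handles exactly this non-uniformity by a comparison argument. It introduces an auxiliary chain $U$ on the same hypercube that \emph{is} a uniform lazy walk (uniform self-loops, uniform flip rate $\frac{w_I}{D(k+1)(s-k)}$), computes $\gap(U)$ directly from the standard hypercube spectrum, and then bounds $\gap(\irest)$ below by $\gap(U)$ times a constant via the variational characterization: it shows $\dir_{\irest}(g,g)\ge c_1\,\dir_U(g,g)$ and $\Var_{\irest}(g)\le c_2\,\Var_U(g)$ termwise, using that all edge-conductances $\pi_{\irest}(x)P_{\irest}(x,y)$ in $\irest$ are at least a fixed multiple of those in $U$, and all stationary masses are at most a fixed multiple. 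The factor $\frac{w_J}{2Tw_I}$ in the lemma statement is precisely the loss $c_1/c_2$ from this comparison, not a reweighting artifact from \pref{def:split}. Your attempt to derive that factor from the split construction, and the subsequent constant-chasing (which oscillates on whether $w_J\ge Tw_I$; in fact $w_J<Tw_I$ always), is built on the incorrect uniformity premise and does not recover the bound.
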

We defer the proof of \pref{lem:irest-spectral-gap} to \pref{app:irest-gap}.  We also give relevant background in \pref{app:var-char-gap}.

\subsection{Rapid Mixing for High Order Random Walks}

Now we put together the decomposition theorem, the lower bounds for the spectral gaps of the project and restriction chains, and \pref{obs:smallest-eigval} to obtain the following lower bound on the two-sided spectral gap of $\duwalksp$:

\begin{theorem}[Restatement of \pref{thm:mixing-main}]\label{thm:down-up-random-walk-spectral-gap}
    The $k$ down-up random walk $\duwalk$ has one-sided spectral gap,
    \begin{align}   \label{ineq:two-sided}
        \Expansion(\duwalk) \geq \frac{\Expansion(G)}{64 \dT(k + 1)^2(s - k)(2^k - 1)} \enspace.
    \end{align}
\end{theorem}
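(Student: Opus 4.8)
The plan is to prove \pref{thm:down-up-random-walk-spectral-gap} in two stages: first establish the one-sided (Poincar\'e) bound by repeated application of the decomposition theorem \pref{thm:Jerrum-et-al}, and then upgrade it to a two-sided bound using \pref{obs:smallest-eigval}. For the one-sided bound, recall from \pref{lem:split-reduction} that it suffices to lower-bound $\gap(\duwalksp)$, since $\spec(\duwalk)\subseteq\spec(\duwalksp)$. I would apply \pref{thm:Jerrum-et-al} at the outer level with the partition $\{\Omega_i\}_{i\in E(G)}$: the projection chain $\oproj$ has spectral gap at least $\frac{\Expansion(G)}{2(2^k-1)}$ by \pref{lem:oproj-spectral-gap}, and all restriction chains $\orest{i}$ are isomorphic by \pref{lem:orest-iso}, so it suffices to bound $\gap(\orest{1})$. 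The only remaining ingredient at this level is $\gamma_{\mathrm{outer}}$, the maximum probability of leaving a color block; this is read off the transition table in \pref{app:orest-tp} and is a constant depending on $T,s,k$ (bounded by $1$).

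To bound $\gap(\orest{1})$, I would apply \pref{thm:Jerrum-et-al} a second time, now with the partition $\{\Omega_F\}_{F\in\Face{k}(\baseHDX)}$: by \pref{lem:iproj-spectral-gap} the inner projection chain $\iproj$ (which is essentially $\duwalkbase$) has gap at least $\frac{1}{2T(k+1)}$, and by \pref{lem:irest-spectral-gap} each inner restriction chain $\irest$ (a lazy hypercube walk) has gap at least $\frac{1}{(k+1)(s-k)}$; again I need the inner escape probability $\gamma_{\mathrm{inner}}$, a constant. Combining via the $\min\{\bar\lambda/3,\ \bar\lambda\lambda_{\min}/(3\gamma+\bar\lambda)\}$ formula at each level and simplifying (using $\bar\lambda\le1$ to drop lower-order terms in the denominators) yields a lower bound on $\gap(\duwalk)$ of the shape $\frac{\Expansion(G)}{c\,T^{?}(k+1)^?(s-k)^?(2^k-1)}$; I would then verify the constant and exponents match $64T^2(k+1)^2(s-k)(2^k-1)$ — most likely the bookkeeping gives exactly this after the two nested applications, with the factor of $3$ from each application and the $2$'s from the projection gaps accounting for the $64$, and one factor of $T$ coming from $\gamma_{\mathrm{outer}}$ entering the outer combination.

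Finally, to pass from the one-sided gap to $\Expansion(\duwalk)$, I would invoke \pref{obs:smallest-eigval}, which says the smallest eigenvalue of $\finalHDX_k^{\down\up}$ is at least $\frac{1}{s-k}\cdot\frac{w_J}{D}-1$; since $\Expansion=\min\{1-\lambda_2,\ 1-|\lambda_{\min}|\}$ and $\frac1{s-k}\cdot\frac{w_J}{D}$ is a positive constant comfortably larger than the one-sided gap bound, the binding constraint is $1-\lambda_2$, so the two-sided gap equals the one-sided gap up to absorbing constants, and the stated inequality follows.

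The main obstacle I anticipate is not conceptual but the arithmetic of tracking constants through two nested invocations of \pref{thm:Jerrum-et-al}: each application contributes a factor of $3$ and mixes in the escape probability $\gamma$ through the term $\frac{\bar\lambda\lambda_{\min}}{3\gamma+\bar\lambda}$, and since $\bar\lambda$ itself is small ($O(\Expansion(G)/(2^k-1))$ at the outer level, $O(1/(T(k+1)))$ at the inner level), one must be careful about whether the $\min$ in Jerrum's formula is achieved by the $\bar\lambda/3$ branch or the other branch at each stage. Getting the correct power of $T$ and of $(s-k)$ requires knowing exactly how $\gamma_{\mathrm{outer}}$ scales (I expect $\gamma_{\mathrm{outer}}=\Theta(1/T)$ after the split, since the split distributed each $0$-offset state's escape mass over $T$ copies, but the worst escape probability could also be a constant coming from $1$-offset states), and the bound $2^k-1$ versus $s-k$ must be apportioned correctly between the outer projection gap and the inner restriction gap.
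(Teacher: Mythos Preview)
Your proposal is correct and follows essentially the same route as the paper: reduce to $\duwalksp$ via \pref{lem:split-reduction}, apply \pref{thm:Jerrum-et-al} twice (outer partition by color, inner partition by base $k$-face), plug in \pref{lem:oproj-spectral-gap}, \pref{lem:iproj-spectral-gap}, \pref{lem:irest-spectral-gap}, and then upgrade to the two-sided gap via \pref{obs:smallest-eigval}. The paper sidesteps your anticipated bookkeeping obstacle by crudely bounding both escape probabilities $\gamma_o,\gamma_I\le 1$; with this the second branch of the $\min$ binds at each level and the arithmetic yields $64T$ (not $64T^2$) in the denominator, the single factor of $T$ arising from $\gap(\iproj)\ge\tfrac{1}{2T(k+1)}$ rather than from $\gamma$.
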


\begin{proof}
Use $\gap(M)$ to denote the spectral gap of a Markov chain $M$. We deduce from \pref{lem:split-reduction} and \pref{thm:Jerrum-et-al} that
\begin{align*}
    \gap(\duwalk) &\geq \gap(\duwalksp)  &&(\text{\pref{lem:split-reduction}}) \\
                       &\geq \min\left\{\frac{\gap(\oproj)}{3}, \frac{\gap(\oproj)\gap(\orest{1})}{3\gamma_o + \gap(\oproj)}\right\}  &&(\text{\pref{thm:Jerrum-et-al} on }\duwalksp ) \\
                       &\geq \min 
                       \bigg\{\frac{\gap(\oproj)}{3},\\ &\frac{\gap(\oproj)}{3\gamma_o + \gap(\oproj)}\cdot\frac{\gap(\iproj)}{3},\\ &\frac{\gap(\oproj)}{3\gamma_o + \gap(\oproj)}\cdot\frac{\gap(\iproj)\gap(\irest)}{3\gamma_I+\gap(\iproj)} 
                       \bigg\} &&(\text{\pref{thm:Jerrum-et-al} on }\orest{1}), \\
\end{align*}

where 
\begin{align*}
    \gamma_o &= \max_{i\in[m]}\max_{x\in\Omega_i}\sum_{y\in\Omega\setminus\Omega_i} \duwalksp(x,y) < 1 \\
    \gamma_I &= \max_{F\in\Face{k}(\SimpComp)} \max_{x\in V(\irest)} \sum_{y\in V(\orest{1}) \setminus V(\irest)} \orest{1}(x,y) < 1
            .
\end{align*}

Furthermore, \pref{lem:oproj-spectral-gap}, \pref{lem:iproj-spectral-gap} and \pref{lem:irest-spectral-gap} provide lower bounds for $\gap(\oproj)$, $\gap(\iproj)$, and $\gap(\irest)$. If we substitute the spectral-gap lower bounds, and an upper bound of $1$ for both $\gamma_o$ and $\gamma_I$, we obtain a lower bound on $\gap(\duwalk)$:
\begin{align}   \label{ineq:one-sided-gap}
    \gap(\duwalk) &\geq \frac{\Expansion(G)}{64 \dT(k + 1)^2(s - k)(2^k - 1)}.
\end{align}
\pref{obs:smallest-eigval} gives a lower bound on $1-|\lambda_{\min}(\duwalk)|$ larger than the right hand side of \pref{ineq:one-sided-gap}, which immediately lets us upgrade the statement \pref{ineq:one-sided-gap} to \pref{ineq:two-sided}, thus proving the theorem. 
\end{proof}

\section{Future Work}

Our construction and its analysis opens the door for other combinatorial and randomized candidates for high dimensional expanders. Due to the numerous ways of constructing one-dimensional expander graphs via graph products, such as the replacement product and the zig-zag product, a natural direction to pursue is to see if there are high-dimensional analogues of these products as well. Additionally, we traded off arbitrarily good local spectral expansion in favor of having a randomized, simple combinatorial construction. It would be worthwhile to investigate any potential improvements on the $\frac{1}{2}$ local spectral expansion and to understand whether $\frac{1}{2}$ is a natural barrier for any graph--product--based construction. Lastly, though we were able to demonstrate rapid mixing, our analysis relied heavily on the \cite{jerrum} framework, which may not yield a tight bound on the spectral gap of the higher order walks. It would also be interesting to investigate either simpler analyses, or tighter analyses. 

More importantly, our construction demonstrates a large family of high dimensional expanders whose higher order walks mix rapidly, yet do not have arbitrarily good local spectral expansion. This suggests that a constant local spectral expansion may be enough to recover the rapid mixing. Thus far, the predominant machinery for establishing rapid mixing of higher order walks is through Theorem 5.4 of \cite{kaufman-oppenheim}, which requires the second largest eigenvalue of the links to be $o(1)$. It would be interesting to see whether in the regime of constant local spectral expansion there is a decomposition theorem that establishes rapid mixing of higher order walks.

Another feature of our construction is that only an exponentially small fraction (in link size) of the links actually have edge expansion $\frac{1}{2}$. The vast majority of the links, when considering their underlying $1$-skeletons, are in fact complete graphs, which have excellent expansion. In fact, their second eigenvalues will always be negative; if we ignore the exponentially small number of problematic links, we can actually use Theorem 5.4 of \cite{kaufman-oppenheim}. It would be interesting to further explore whether (1) constant local spectrum suffices, or (2) we actually need a large fraction of the links to have $o(1)$ local spectral expansion. 


\section*{Acknowledgements}

We thank Tom Gur for introducing us to this intriguing question and for helpful discussions, and we thank Nikhil Srivastava for insightful conversations.

We would also like to thank the Simons Institute for the Theory of Computing where a large portion of this work was done. The third author is supported by the National Science Foundation Graduate Research Fellowship under Grant No. DGE 1752814.


\bibliography{main}
\appendix
\section{Transition Probabilities of the Down-Up Walk}   \label{app:prob-calcs}

If we impose uniform weights at the highest order faces, then 
\begin{align*}
w_I &= 2^{H-k} \\ 
w_J &= T2^{H-k} - (T-1)
\end{align*}
For ease of notation, we use define another variable $D = Tw_I + w_J$, which will arise very often. Note that for the uniform weights case, $w_J$ is only slightly smaller than $Tw_I$, which will help with some of our asymptotics.

\begin{table}[h]
\caption{Transition Probabilities in $\duwalk$}\label{tab:trans-probs-calcs}
\begin{tabular}{|l|l|l|l|l|l|l|}
\hline
Source                         & Delete                    & Target                          & Same $k$-face &  Same edge & Probability                                           & Count              \\ \hline
\multirow{4}{*}{$0$-offset}  & \multirow{4}{*}{anything} & \multirow{2}{*}{$0$-offset}   & Yes           & N/A       & $\frac{1}{s-k}\cdot\frac{w_J}{D}$                     & $1$                \\ \cline{4-7} 
                               &                           &                                 & No            & N/A       & $\frac{1}{k+1}\cdot\frac{1}{s-k}\cdot\frac{w_J}{D}$   & $(s-(k+1))(k+1)$   \\ \cline{3-7} 
                               &                           & \multirow{2}{*}{$1$-offset}   & Yes           & N/A       & $\frac{1}{k+1}\cdot\frac{1}{s-k}\cdot\frac{w_I}{D}$   & $(k+1)T$           \\ \cline{4-7} 
                               &                           &                                 & No            & N/A       & $\frac{1}{k+1}\cdot\frac{1}{s-k}\cdot\frac{w_I}{D}$   & $(k+1)T(s-(k+1))$  \\ \hline
\multirow{10}{*}{$1$-offset} & \multirow{6}{*}{minority}   & \multirow{2}{*}{$0$-offset}   & Yes           & N/A       & $\frac{1}{k+1}\cdot\frac{1}{s-k}\cdot\frac{w_J}{D}$   & $1$                \\ \cline{4-7} 
                               &                           &                                 & No            & N/A       & $\frac{1}{k+1}\cdot\frac{1}{s-k}\cdot\frac{w_J}{D}$   & $s-(k+1)$          \\ \cline{3-7} 
                               &                           & \multirow{4}{*}{$1$-offset}   & Yes           & Yes       & $\frac{1}{k+1}\cdot\frac{1}{s-k}\cdot\frac{w_I}{D}$   & $1$                \\ \cline{4-7} 
                               &                           &                                 & No            & Yes       & $\frac{1}{k+1}\cdot\frac{1}{s-k}\cdot\frac{w_I}{D}$   & $s-(k+1)$          \\ \cline{4-7} 
                               &                           &                                 & Yes           & No        & $\frac{1}{k+1}\cdot\frac{1}{s-k}\cdot\frac{w_I}{D}$   & $T-1$              \\ \cline{4-7} 
                               &                           &                                 & No            & No        & $\frac{1}{k+1}\cdot\frac{1}{s-k}\cdot\frac{w_I}{D}$   & $(T-1)(s-(k+1))$   \\ \cline{2-7} 
                               & \multirow{4}{*}{majority}     & \multirow{2}{*}{$1$-offset}   & Yes           & Yes       & $\frac{k}{k+1}\cdot\frac{1}{s-k}\cdot\frac{1}{2}$     & $1$                \\ \cline{4-7} 
                               &                           &                                 & No            & Yes       & $\frac{1}{k+1}\cdot\frac{1}{s-k}\cdot\frac{1}{2}$     & $(s-(k+1))k$       \\ \cline{3-7} 
                               &                           & \multirow{2}{*}{$2$-offset}   & Yes           & Yes       & $\frac{1}{k+1}\cdot\frac{1}{s-k}\cdot\frac{1}{2}$     & $k$                \\ \cline{4-7} 
                               &                           &                                 & No            & Yes       & $\frac{1}{k+1}\cdot\frac{1}{s-k}\cdot\frac{1}{2}$     & $k(s-(k+1))$       \\ \hline
\multirow{8}{*}{$t$-offset}  & \multirow{4}{*}{minority} & \multirow{2}{*}{$t$-offset}   & Yes           & Yes       & $\frac{t}{k+1}\cdot\frac{1}{s-k}\cdot\frac{1}{2}$     & 1                  \\ \cline{4-7} 
                               &                           &                                 & No            & Yes       & $\frac{1}{k+1}\cdot\frac{1}{s-k}\cdot\frac{1}{2}$     & $t(s-(k+1))$       \\ \cline{3-7} 
                               &                           & \multirow{2}{*}{$t-1$-offset} & Yes           & Yes       & $\frac{1}{k+1}\cdot\frac{1}{s-k}\cdot\frac{1}{2}$     & $t$                \\ \cline{4-7} 
                               &                           &                                 & No            & Yes       & $\frac{1}{k+1}\cdot\frac{1}{s-k}\cdot\frac{1}{2}$     & $t(s-(k+1))$       \\ \cline{2-7} 
                               & \multirow{4}{*}{majority}     & \multirow{2}{*}{$t$-offset}   & Yes           & Yes       & $\frac{k+1-t}{k+1}\cdot\frac{1}{s-k}\cdot\frac{1}{2}$ & 1                  \\ \cline{4-7} 
                               &                           &                                 & No            & Yes       & $\frac{1}{k+1}\cdot\frac{1}{s-k}\cdot\frac{1}{2}$     & $(k+1-t)(s-(k+1))$ \\ \cline{3-7} 
                               &                           & \multirow{2}{*}{$t+1$-offset} & Yes           & Yes       & $\frac{1}{k+1}\cdot\frac{1}{s-k}\cdot\frac{1}{2}$     & $k+1-t$            \\ \cline{4-7} 
                               &                           &                                 & No            & Yes       & $\frac{1}{k+1}\cdot\frac{1}{s-k}\cdot\frac{1}{2}$     & $(k+1-t)(s-(k+1))$ \\ \hline
\end{tabular}
\end{table}

\newpage


\begin{table}[h]
\caption{Transition Probabilities in $\duwalksp$}\label{tab:trans-probs-calcs-split}
\begin{tabular}{|l|l|l|l|l|l|l|}
\hline
Source                       & Delete                    & Target                        & Same $k$-face & Same edge & Probability                                                           & Count                 \\ \hline
\multirow{8}{*}{$0$-offset}  & \multirow{8}{*}{anything} & \multirow{2}{*}{$0$-offset}   & \multirow{2}{*}{Yes} & Yes       & \multirow{2}{*}{$\frac{1}{s-k}\cdot\frac{w_J}{DT}$}                   & $1$                   \\ \cline{5-5} \cline{7-7} 
                             &                           &                               &                      & No        &                                                                       & $T-1$                 \\ \cline{3-7} 
                             &                           &  \multirow{2}{*}{$0$-offset}  & \multirow{2}{*}{No}  & Yes       & \multirow{2}{*}{$\frac{1}{k+1}\cdot\frac{1}{s-k}\cdot\frac{w_J}{DT}$} & $(s-(k+1))(k+1)$      \\ \cline{5-5} \cline{7-7} 
                             &                           &                               &                      & No        &                                                                       & $(T-1)(s-(k+1))(k+1)$ \\ \cline{3-7} 
                             &                           & \multirow{2}{*}{$1$-offset}   & \multirow{2}{*}{Yes} & Yes       & \multirow{4}{*}{$\frac{1}{k+1}\cdot\frac{1}{s-k}\cdot\frac{w_I}{D}$}  & $(k+1)$               \\ \cline{5-5} \cline{7-7} 
                             &                           &                               &                      & No        &                                                                       & $(k+1)(T-1)$          \\ \cline{3-5} \cline{7-7} 
                             &                           & \multirow{2}{*}{$1$-offset}   & \multirow{2}{*}{No}  & Yes       &                                                                       & $(k+1)(s-(k+1))$      \\ \cline{5-5} \cline{7-7} 
                             &                           &                               &                      & No        &                                                                       & $(k+1)(T-1)(s-(k+1))$ \\ \hline
\multirow{12}{*}{$1$-offset} & \multirow{8}{*}{minority}   & \multirow{4}{*}{$0$-offset}   & \multirow{2}{*}{Yes} & Yes       & \multirow{4}{*}{$\frac{1}{k+1}\cdot\frac{1}{s-k}\cdot\frac{w_J}{DT}$} & $1$                   \\ \cline{5-5} \cline{7-7} 
                             &                           &                               &                      & No        &                                                                       & $T-1$                 \\ \cline{4-5} \cline{7-7} 
                             &                           &                               & \multirow{2}{*}{No}  & Yes       &                                                                       & $s-(k+1)$             \\ \cline{5-5} \cline{7-7} 
                             &                           &                               &                      & No        &                                                                       & $(s-(k+1))(T-1)$      \\ \cline{3-7} 
                             &                           & \multirow{4}{*}{$1$-offset}   & Yes                  & Yes       & \multirow{4}{*}{$\frac{1}{k+1}\cdot\frac{1}{s-k}\cdot\frac{w_I}{D}$}  & $1$                   \\ \cline{4-5} \cline{7-7} 
                             &                           &                               & No                   & Yes       &                                                                       & $s-(k+1)$             \\ \cline{4-5} \cline{7-7} 
                             &                           &                               & Yes                  & No        &                                                                       & $T-1$                 \\ \cline{4-5} \cline{7-7} 
                             &                           &                               & No                   & No        &                                                                       & $(T-1)(s-(k+1))$      \\ \cline{2-7} 
                             & \multirow{4}{*}{majority}     & \multirow{2}{*}{$1$-offset}   & Yes                  & Yes       & $\frac{k}{k+1}\cdot\frac{1}{s-k}\cdot\frac{1}{2}$                     & $1$                   \\ \cline{4-7} 
                             &                           &                               & No                   & Yes       & \multirow{3}{*}{$\frac{1}{k+1}\cdot\frac{1}{s-k}\cdot\frac{1}{2}$}    & $(s-(k+1))k$          \\ \cline{3-5} \cline{7-7} 
                             &                           & \multirow{2}{*}{$2$-offset}   & Yes                  & Yes       &                                                                       & $k$                   \\ \cline{4-5} \cline{7-7} 
                             &                           &                               & No                   & Yes       &                                                                       & $k(s-(k+1))$          \\ \hline
\multirow{8}{*}{$t$-offset}  & \multirow{4}{*}{minority} & \multirow{2}{*}{$t$-offset}   & Yes                  & Yes       & $\frac{t}{k+1}\cdot\frac{1}{s-k}\cdot\frac{1}{2}$                     & 1                     \\ \cline{4-7} 
                             &                           &                               & No                   & Yes       & \multirow{3}{*}{$\frac{1}{k+1}\cdot\frac{1}{s-k}\cdot\frac{1}{2}$}    & $t(s-(k+1))$          \\ \cline{3-5} \cline{7-7} 
                             &                           & \multirow{2}{*}{$t-1$-offset} & Yes                  & Yes       &                                                                       & $t$                   \\ \cline{4-5} \cline{7-7} 
                             &                           &                               & No                   & Yes       &                                                                       & $t(s-(k+1))$          \\ \cline{2-7} 
                             & \multirow{4}{*}{majority}     & \multirow{2}{*}{$t$-offset}   & Yes                  & Yes       & $\frac{k+1-t}{k+1}\cdot\frac{1}{s-k}\cdot\frac{1}{2}$                 & 1                     \\ \cline{4-7} 
                             &                           &                               & No                   & Yes       & \multirow{3}{*}{$\frac{1}{k+1}\cdot\frac{1}{s-k}\cdot\frac{1}{2}$}    & $(k+1-t)(s-(k+1))$    \\ \cline{3-5} \cline{7-7} 
                             &                           & \multirow{2}{*}{$t+1$-offset} & Yes                  & Yes       &                                                                       & $k+1-t$               \\ \cline{4-5} \cline{7-7} 
                             &                           &                               & No                   & Yes       &                                                                       & $(k+1-t)(s-(k+1))$    \\ \hline
\end{tabular}
\end{table}

\section{Spectrum of $\duwalksp$: Proof of \pref{lem:split-reduction}} 

\label{app:proof-of-split-reduction}
        Given a right eigenvector $v$ of $\duwalk$ for eigenvalue $\lambda$, we exhibit a right eigenvector $\wt{v}$ of $\duwalksp$, also for eigenvalue $\lambda$.  Let
    \[
        \wt{v}[(F,f,c)] =
        \begin{cases}
            \frac{v[(F,f)]}{T} &\text{if $(F,f)$ is $0$-offset}\\
            v[(F,f)] &\text{otherwise.}
        \end{cases}
    \]
    We now verify that $\wt{v}$ is indeed a right eigenvector of $\wt{P}$.
    \begin{align*}
        \wt{P}\wt{v}[(F,f,c)] &= \sum_{(F',f',c')\in\States(\duwalksp)} \duwalksp[(F',f',c')\to(F,f,c)]\wt{v}[F',f',c']\\
        &= \sum_{\substack{(F',f',c')\in\States(\duwalksp)\\ (F',f')~\text{$0$-offset}}}\duwalksp[(F',f',c')\to(F,f,c)]\frac{{v}[F',f']}{T} +\\
        &\sum_{\substack{(F',f',c')\in\States(\duwalksp)\\ (F',f')~\text{not $0$-offset}}} \duwalksp[(F',f',c')\to(F,f,c)]v[F',f']\\
    \end{align*}
    If $(F,f,c)$ is a $0$-offset face, then the above quantity is equal to
    \begin{align*}
        \sum_{\substack{(F',f')\in\kface(\finalHDX)\\ (F',f')~\text{$0$-offset}}} \frac{\duwalk[(F',f')\to(F,f)]}{T}\cdot &\frac{v[F',f']}{T}\cdot T + \sum_{\substack{(F',f')\in\kface(\finalHDX)\\ (F',f')~\text{not $0$-offset}}}\frac{\duwalk[(F',f')\to(F,f)]}{T}v[(F',f')]\\
        =\ &\frac{1}{T}\sum_{(F',f')\in\kface(\finalHDX)}\duwalk[(F',f')\to(F,f)]v[(F',f')]\\
        =\ &\frac{1}{T}\lambda v[(F,f)]\\
        =\ &\lambda\wt{v}[(F,f,c)].
    \end{align*}
    And if $(F,f,c)$ is not a $0$-offset face, then the quantity is equal to
    \begin{align*}
        \sum_{\substack{(F',f')\in\kface(\finalHDX)\\ (F',f')~\text{$0$-offset}}} \duwalk[(F',f')\to(F,f)]\cdot &\frac{v[F',f']}{T}\cdot T + \sum_{\substack{(F',f')\in\kface(\finalHDX)\\ (F',f')~\text{not $0$-offset}}}\duwalk[(F',f')\to(F,f)]v[(F',f')]\\
        =\ &\frac{1}{T}\sum_{(F',f')\in\kface(\finalHDX)}\duwalk[(F',f')\to(F,f)]v[(F',f')]\\
        =\ &\lambda v[(F,f)]\\
        =\ &\lambda\wt{v}[(F,f,c)].
    \end{align*}
    Since for every right eigenvector $v$ of $P$, we can exhibit a right eigenvector $\wt{v}$ of $\wt{P}$, we can conclude that $\spec\left(\duwalk\right)\subseteq\spec\left(\duwalksp\right)$. 

\section{Spectral Gap of Outer Projection Chain} \label{app:oproj-appendix}

\subsection{Transition Probabilities of Outer Projection Chain}
\label{app:oproj-tp}
    
The table below summarizes the types of transition probabilities that occur between $i$ and $j$ in $\oproj$. Each row corresponds to a specific vertex of ``Source'' type, and provides (1) the transition probability to a specific vertex of ``Target'' type (where ``Same $k$-face'' denotes a transition from $(F, f)$ to $(F, f')$), and (2) the number of such transitions that occur from the source.  

\begin{center}
\begin{tabular}{|l|l|l|l|l|}
\hline
Source                      & Target                      & Same $k$-face & Probability                                                  & Count in $\Omega_j$, $j \neq i$, $(i, j) \in E(G)$ \\ \hline
\multirow{4}{*}{$0$-offset} & \multirow{2}{*}{$0$-offset} & Yes           & $\frac{1}{s - k} \cdot \frac{w_J}{DT}$                       & $1$                             \\ \cline{3-5} 
                            &                             & No            & $\frac{1}{k + 1} \cdot \frac{1}{s - k} \cdot \frac{w_J}{DT}$ & $(k + 1)(s - (k + 1))$          \\ \cline{2-5} 
                            & \multirow{2}{*}{$1$-offset} & Yes           & $\frac{1}{k + 1} \cdot \frac{1}{s - k} \cdot \frac{w_I}{D}$  & $(k + 1)$                       \\ \cline{3-5} 
                            &                             & No            & $\frac{1}{k + 1} \cdot \frac{1}{s - k} \cdot \frac{w_I}{D}$  & $(k + 1)(s - (k + 1))$          \\ \hline
\multirow{4}{*}{$1$-offset} & \multirow{2}{*}{$0$-offset} & Yes           & $\frac{1}{k + 1} \cdot \frac{1}{s - k} \cdot \frac{w_J}{DT}$ & $1$                             \\ \cline{3-5} 
                            &                             & No            & $\frac{1}{k + 1} \cdot \frac{1}{s - k} \cdot \frac{w_J}{DT}$ & $s - (k + 1)$                   \\ \cline{2-5} 
                            & \multirow{2}{*}{$1$-offset} & Yes           & $\frac{1}{k + 1} \cdot \frac{1}{s - k} \cdot \frac{w_I}{D}$  & $1$                             \\ \cline{3-5} 
                            &                             & No            & $\frac{1}{k + 1} \cdot \frac{1}{s - k} \cdot \frac{w_I}{D}$  & $s - (k + 1)$                   \\ \hline
\end{tabular}
\end{center}

\noindent Using the table, \pref{lem:stationary-main-lemma}, and the definition of projection chain from $\cite{jerrum}$, the transition probabilities of $\oproj$ are:
\begin{equation*}
    \oproj[i\to j] =
    \begin{cases}
        \displaystyle\frac{1}{2T} \cdot \frac{Tw_I+w_J}{[(2^k-1)Tw_I+w_J]},\quad &i \neq j,\text{ and } (i,j)\in E(\oproj), \\
        \displaystyle 1-\left(\frac{T-1}{T}\right)\cdot\left(\frac{Tw_I+w_J}{(2^k-1)Tw_I+w_J}\right), &i = j,\\
        0 &\text{otherwise.}
    \end{cases}
\end{equation*}

\subsection{Proof of \pref{lem:oproj-spectral-gap}}
\label{app:proof-of-oproj-spectral-gap}

Due to the symmetry of the transition probabilities and the partition $\Omega$, the spectrum of of $\oproj$ is easily computed from the spectrum of the following graph $L$:
\begin{itemize}
    \item $V(L) = [m]$,
    
    \item $E(L) = \{(i,j)\in E(\oproj) \mid i\neq j\}$.
\end{itemize}

\begin{observation}\label{obs:projo-is-line-graph}
$L$ is the line graph of the base expander $\baseExp$.
\end{observation}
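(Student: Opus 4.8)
The plan is, after identifying $V(L)=[m]$ with $E(\baseExp)$ via $i\leftrightarrow e_i$, to show that $L$ and the line graph of $\baseExp$ have the same edge set: $\{i,j\}\in E(L)$ if and only if $e_i$ and $e_j$ share an endpoint in $\baseExp$. By construction $\{i,j\}\in E(L)$ exactly when $i\neq j$ and $\duwalksp$ admits a positive‑probability transition from some state of color $e_i$ to some state of color $e_j$, so it suffices to characterize when such a transition exists.

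For the ``only if'' direction, consider any positive‑probability transition $(F,f,e_i)\to(F',f',e_j)$ with $e_i\neq e_j$. A single step of the walk deletes a vertex $x$ from the base face, producing a $(k-1)$‑face $\sigma$ with $|\sigma|=k\ge 1$ on whose vertices the labelling is unchanged, and then $(F',f')$ is obtained by adding a vertex back to $\sigma$; in particular $f$ and $f'$ agree on $\sigma$. Pick any $w\in\sigma$. Here I use the elementary fact that the color of a state always contains the labels appearing on its base face: if the state is not $0$-offset the color is exactly $\im(f)$, and if it is $0$-offset with constant label $u$ then its color in $\duwalksp$ is an edge $\{u,v_\ell\}$, which still contains $u=\im(f)$. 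Applying this to both endpoints of the transition gives $f(w)\in e_i$ and $f(w)=f'(w)\in e_j$, so $f(w)\in e_i\cap e_j$ and the two edges meet in $\baseExp$.

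For the ``if'' direction, suppose $e_i=\{p,q\}$ and $e_j=\{p,r\}$ with $q\neq r$, and exhibit a transition. Fix any base $k$-face $F$ of $\baseHDX$ and a vertex $x\in F$, and let $f$ label $x$ by $q$ and every other vertex of $F$ by $p$; since $k\ge 1$ this is a $1$-offset face, it is legal because $e_i\in E(\baseExp)$, and it has color $e_i$. Now delete $x$: the surviving $(k-1)$-face carries the constant label $p$, and since $\baseHDX$ is a complete complex we may add the vertex $x$ back with label $r$, which yields a legal $k$-face precisely because $\{p,r\}=e_j\in E(\baseExp)$; the resulting state is $1$-offset of color $e_j$. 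That this down‑up step has positive probability is read off \pref{tab:trans-probs-calcs-split} (the rows transitioning from a $1$-offset state, by deleting its minority vertex, to a $1$-offset state of a different color). Hence $\{i,j\}\in E(\oproj)$ and so $\{i,j\}\in E(L)$; combining the two directions gives $L=\mathrm{Line}(\baseExp)$.

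I do not anticipate a genuine obstacle here; the only delicate point is the colour bookkeeping of $\duwalksp$, since the colour attached to a $0$-offset state is an auxiliary edge that the face itself does not determine — but that colour always contains the face's unique label, which is exactly what both directions of the argument use. (Triangle‑freeness of $\baseExp$ plays no role in this identification; it enters only later, when $\Spec(L)$ is related to $\Spec(\baseExp)$.)
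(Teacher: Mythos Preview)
Your proof is correct and follows the same approach as the paper's: identify $V(L)$ with $E(\baseExp)$ and verify both directions of the edge correspondence. The paper's own proof is a brief sketch that asserts the ``only if'' direction (``two states \ldots\ from different partition sets are connected only if they share a common endpoint in $\baseExp$'') and calls the ``if'' direction straightforward; your version spells out the colour bookkeeping for $\duwalksp$ (in particular the $0$-offset case) and exhibits an explicit $1$-offset $\to$ $1$-offset transition for the ``if'' direction, but the underlying argument is the same.
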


\begin{proof}
By definition of the partition $\Omega$, there is a natural bijection between vertices in $V(L)$ and edges in $E(\baseExp)$. By construction, $(i,j)\in E(L)$ if and only if there exists $\{(F,f,e_i), (G,g,e_j)\}\in E(\duwalksp)$ such that $(F,f,e_i)\in \Omega_i \text{ and } (G,g,e_j)\in\Omega_j$. In the chain $\duwalksp$, two states $(F,f,e_i)$ and $(G,g,e_j)$ from different partition sets are connected only if they share a common endpoint in $\baseExp$.  Thus, $\{i,j\}\in E(\oproj)$ only if $e_i, e_j$ are adjacent in $\baseExp$. The if direction is straightforward from the construction of $\oproj$. So $L$ is the line graph of $\baseExp$. 
\end{proof}

\noindent The relationship between $\Spec(L(G))$ and $\Spec(G)$ is also well understood. 

\begin{theorem}[{\cite{sachs}}]\label{thm:Sachs67}
    If $G$ is a graph of degree $d$ with $n$ vertices and $L(G)$ its line graph, then the characteristic polynomials $\chi(G,\lambda)$ and $\chi(L(G),\lambda)$ satisfy 
    \begin{equation*}
        \chi(L(G),\lambda) = (\lambda+2)^{n(\frac{d}{2} - 1)}\chi(G, \lambda + 2 - d).
    \end{equation*}
\end{theorem}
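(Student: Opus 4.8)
The plan is to prove this classical identity (due to Sachs) via the vertex--edge incidence matrix of $G$: both $A(L(G))$ and $A(G)$ arise, up to a multiple of the identity, as Gram matrices of this one matrix, and the formula then drops out of the standard fact that $XY$ and $YX$ share the same nonzero spectrum.

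Let $B \in \{0,1\}^{n\times m}$ be the vertex--edge incidence matrix of $G$, where $m$ is the number of edges and $B_{v,e}=1$ iff $v$ is an endpoint of $e$; since $G$ is $d$-regular, $m = nd/2$. (I assume $G$ is simple; loops or parallel edges would force small corrections to the identities below.) First I would record two elementary computations. Reading off entries, $(BB^{\top})_{u,v}$ is the number of edges incident to both $u$ and $v$, which is $d$ when $u=v$ and $A(G)_{u,v}$ otherwise, so $BB^{\top} = dI_n + A(G)$. Similarly $(B^{\top}B)_{e,e'} = |e\cap e'|$, which is $2$ when $e=e'$ and, for $e\ne e'$, equals $1$ exactly when $e,e'$ share an endpoint, i.e.\ are adjacent in $L(G)$; hence $B^{\top}B = 2I_m + A(L(G))$.

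Next I would invoke the rectangular determinant swap: for $X\in\R^{n\times m}$, $Y\in\R^{m\times n}$, and any scalar $\mu$, $\mu^m\det(\mu I_n - XY) = \mu^n\det(\mu I_m - YX)$. This follows by evaluating $\det\begin{pmatrix}\mu I_n & X \\ Y & I_m\end{pmatrix}$ two ways by Schur complements — eliminating the $I_m$ block gives $\det(\mu I_n - XY)$, while eliminating the $\mu I_n$ block (legal for $\mu\ne 0$) gives $\mu^{n-m}\det(\mu I_m - YX)$ — and then extending from $\mu\ne 0$ to all $\mu$ since both sides are polynomials. Taking $X=B$, $Y=B^{\top}$ and substituting the two Gram identities gives $\mu^m\det\bigl((\mu-d)I_n - A(G)\bigr) = \mu^n\det\bigl((\mu-2)I_m - A(L(G))\bigr)$, i.e.\ $\mu^m\chi(G,\mu-d) = \mu^n\chi(L(G),\mu-2)$ with the convention $\chi(H,x)=\det(xI-A(H))$.

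Finally I would set $\mu = \lambda+2$ and cancel the common polynomial factor $(\lambda+2)^n$, obtaining $\chi(L(G),\lambda) = (\lambda+2)^{m-n}\chi(G,\lambda+2-d) = (\lambda+2)^{n(d/2-1)}\chi(G,\lambda+2-d)$, as $m-n = n(d/2-1)$ is a nonnegative integer (using $d\ge 2$ and $nd$ even). There is no real obstacle: the argument is two lines once the incidence matrix is set up. The only points requiring care are (i) the sign/normalization convention for $\chi$ — with $\chi(H,x)=\det(A(H)-xI)$ the same computation goes through verbatim and still yields the stated formula — and (ii) the simplicity assumption on $G$, which is what makes the two Gram matrices take exactly the stated form.
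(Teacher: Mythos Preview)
Your proof is correct and is in fact the classical incidence-matrix argument for Sachs's identity. Note, however, that the paper does not actually prove this theorem: it is stated with a citation to \cite{sachs} and used as a black box in the proof of \pref{lem:oproj-spectral-gap}. So there is no ``paper's own proof'' to compare against; you have supplied the standard proof where the paper simply quotes the result.
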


\begin{proof}[Proof of \pref{lem:oproj-spectral-gap}] 
Using \pref{obs:projo-is-line-graph} and \pref{thm:Sachs67}, we relate the spectrum of $L$ to the spectrum of $\baseExp$. Specifically, if $\lambda$ is an eigenvalue of the normalized adjacency matrix of $\baseExp$, then $\frac{\lambda\dT+ \dT - 2}{2\dT-2}$ is an eigenvalue of the normalized adjacency matrix of $L$.  From this, one can deduce that $\Expansion(L) = \frac{\dT}{2\dT-2}\cdot\Expansion(G)$.
\begin{equation*}
    \oproj = \left(1 - \left(\frac{T-1}{T}\right)\cdot\frac{(w_J+Tw_I)}{\left(w_J+(2^{k}-1)Tw_I\right) }\right)\cdot I  + \frac{\dT-1}{\dT}\cdot\frac{(w_J+Tw_I)}{w_J+(2^{k}-1)Tw_I} \cdot \Adj{L}
\end{equation*}

It follows that if $v, \lambda$ is an eigenvector, eigenvalue pair of $\Adj{L}$, then
\[
    v, 1 - \left(\frac{T-1}{T}\right)\cdot\frac{(w_J+Tw_I)}{\left(w_J+(2^{k}-1)Tw_I\right) } + \lambda\cdot \frac{\dT-1}{\dT}\cdot\frac{(w_J+Tw_I)}{w_J+(2^{k}-1)Tw_I}
\]
is an eigenvector, eigenvalue pair of $\oproj$. Therefore,
\begin{align*}
    \Expansion(\oproj) &= \Expansion(L)\cdot\frac{T-1}{T}\cdot\frac{w_J+\dT w_I}{w_J+(2^{k}-1)\dT w_I}\\
    &=
    \frac{\Expansion(\baseExp)}{2} \cdot \frac{w_J+\dT w_I}{w_J+(2^{k}-1)\dT w_I}.  \qedhere
\end{align*}
\end{proof}

\section{Outer Restriction Chains} 
\label{app:oproj-appendix}

\subsection{Proof of \pref{lem:orest-iso}}
\label{app:proof-of-orest-iso}
Let $e_i,e_j\in E(\baseExp)$ be the edges corresponding to $\Omega_i,\Omega_j$ respectively. Suppose $e_i = \{u_i, v_i\}$ and $e_j = \{u_j, v_j\}$. Define a map $t_{ij}\colon e_i\rightarrow e_j$ to be $t_{ij}(u_i) = u_j, t_{ij}(v_i) = v_j$. Then $\orest{i}$ and $\orest{j}$ are isomorphic under the map $M_{ij}\colon \Omega_i\rightarrow \Omega_j, (F,f,e_i)\to (F, t_{ij}\circ f,e_j)$.

\subsection{Transition Probabilities of Outer Restriction Chains}
\label{app:orest-tp}
Since the restriction chains are isomorphic, we can focus on $\orest{1}$ without loss of generality. Using the decomposition rule given in \pref{sec:decompose}, we can compute the transition probabilities of $\orest{1}$:
\begin{itemize}
    \item For all 0-offset $(F, f, e_1)$, the self loop probability is $$\frac{\dT - 1}{\dT}+ \frac{w_J}{D\dT(s - k)}.$$ The transition probability to each of its $(k + 1)(s - k -1)$ adjacent 0-offset neighbors $(F', f, e_1)$ is $$\frac{w_J}{D\dT(k + 1)(s - k)}.$$ The transition probability to each of its $(k + 1)(s - k)$ non-0-offset neighbors $(F', f')$ is $$\frac{w_I}{D(k + 1)(s - k)}.$$

    \item For all 1-offset $(F, f, e_1)$, the self loop probability is
    \[
        \frac{(T - 1)}{T(k + 1)} + \frac{w_I}{D(k + 1)(s - k)}+\frac{k}{2(k+1)(s - k)}.
    \]
    The transition probability to each of its $(s - k)$ 0-offset neighbors $(F', f')$ is $$\frac{w_J}{D\dT(k + 1)(s - k)}.$$ The transition probability to each of its $k$ non-0-offset neighbors with identical base $k$-face $(F, f', e_1)$ is $$\frac{1}{2(k + 1)(s - k)}.$$ The transition probability to each of its $(s - k - 1)$ non-0-offset neighbors with a different base $k$-face $(F', f', e_1)$ reached by deleting the lonely\footnote{Recall that ``lonely'' was defined in \pref{def:lonely}} vertex and adding back a different lonely vertex is $$\frac{w_I}{D(k + 1)(s - k)}.$$  The transition probability to each of its $2k(s - k - 1)$ non-0-offset neighbors with a different base $k$-face $(F', f', e_1)$ reached by deleting a non-lonely vertex and adding back any other vertex is $$\frac{1}{2(k + 1)(s - k)}.$$

    \item For the remaining $(F, f, e_1)$, the self loop probability is $$\frac{1}{2(s - k)}.$$ The transition probability to each of its $(k + 1)$ neighbors with an identical base $k$-face $(F, f', e_1)$ is $$\frac{1}{2(k + 1)(s - k)}.$$ The transition probability to each of its $2(k + 1)(s - k - 1)$ neighbors with a different base $k$-face $(F',f',e_1)$ is also $$\frac{1}{2(k + 1)(s - k)}.$$
\end{itemize}

\section{Spectral Gap of Inner Projection Chain}   \label{app:inner-proj}

\subsection{Lazy Random Walks}

\noindent Both the inner projection and the inner restriction chains have self-loops, so it will be useful to first present some preliminary results on lazy random walks. If we start with Markov chain $\wt{M} = (\Omega, \wt{P})$ and wish to add a uniform self loop probability to each state to get Markov chain $M = (\Omega, P)$, we write $P$ as a convex combination of $\wt{P}$ and $I$:
$$
P = c \cdot I + (1 - c) \cdot \wt{P}, \text{ where } 0 \leq c \leq 1
$$
Since this convex combination will appear a few different times throughout this paper, we'll prove a basic fact about the spectral gap of $P$:
\begin{lemma} \label{lem:scaling}
    For $M = (\Omega, P)$ as defined above:
    $$
        \gap(M) = (1 - c) \cdot \gap(\wt{M})
    $$
\end{lemma}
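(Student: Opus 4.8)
\textbf{Proof plan for \pref{lem:scaling}.}
The plan is to observe that $P=cI+(1-c)\wt P$ is an affine function of $\wt P$, so $P$ and $\wt P$ have the same eigenvectors and their eigenvalues are related by a monotone map; passing to the second-largest eigenvalue then yields the claim. First I would handle the trivial boundary case $c=1$ separately: there $P=I$, every eigenvalue equals $1$, so $\gap(M)=0=(1-c)\gap(\wt M)$, and we are done. Hence assume $0\le c<1$ from now on.

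Next I would pin down the spectrum of $P$ in terms of that of $\wt P$. If $v$ is a (right) eigenvector of $\wt P$ with eigenvalue $\mu$, then $Pv=cv+(1-c)\mu v=\bigl(c+(1-c)\mu\bigr)v$, so $c+(1-c)\mu$ is an eigenvalue of $P$ with the same eigenvector. To see this accounts for \emph{all} eigenvalues (with multiplicity), I would note that for $c<1$,
\[
    \det\!\bigl(xI-P\bigr)=\det\!\bigl((x-c)I-(1-c)\wt P\bigr)=(1-c)^{|\Omega|}\det\!\Bigl(\tfrac{x-c}{1-c}I-\wt P\Bigr),
\]
so $x$ is a root of the characteristic polynomial of $P$ exactly when $\tfrac{x-c}{1-c}$ is a root of that of $\wt P$; equivalently $\Spec(P)=\{\,c+(1-c)\mu : \mu\in\Spec(\wt P)\,\}$ as multisets. (In our setting $\wt P$ is the transition matrix of a time-reversible chain, hence diagonalizable with real spectrum, so one may alternatively just exhibit a common eigenbasis.)

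Finally I would use monotonicity: since $1-c\ge 0$, the map $\mu\mapsto c+(1-c)\mu$ is non-decreasing, so it sends the $i$-th largest eigenvalue of $\wt P$ to the $i$-th largest eigenvalue of $P$; in particular $\lambda_2(P)=c+(1-c)\lambda_2(\wt P)$. Therefore
\[
    \gap(M)=1-\lambda_2(P)=1-c-(1-c)\lambda_2(\wt P)=(1-c)\bigl(1-\lambda_2(\wt P)\bigr)=(1-c)\,\gap(\wt M),
\]
as desired. I do not anticipate any real obstacle here; the only point requiring a word of care is making sure the eigenvalue correspondence is a bijection respecting order (so that ``second largest'' is preserved), which the determinant identity above and the monotonicity of the affine map together guarantee.
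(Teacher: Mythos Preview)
Your proof is correct and follows essentially the same approach as the paper: both show that any eigenvector $v$ of $\wt P$ with eigenvalue $\mu$ is an eigenvector of $P$ with eigenvalue $c+(1-c)\mu$, and then read off the scaling of the spectral gap. Your version is simply more careful than the paper's, explicitly handling the $c=1$ case and using the characteristic polynomial identity plus monotonicity to justify that the affine map preserves the ordering of eigenvalues.
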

\begin{proof}
    Let $\lambda$ be any eigenvalue of $\wt{P}$, with associated eigenvector $v$. Then, $v$ is also an eigenvector for $P$ for eigenvalue:
    $$
        c + (1 - c) \cdot \lambda(M)
    $$
    To see this, $Pv = \left[c \cdot I + (1 - c) \cdot \tilde{P}\right]v = cv + (1 - c)\left(\tilde{P}v\right) = \left[c + (1 - c) \cdot \lambda\right]v$.  The spectrum of $\tilde{M}$ is a linear shift and scaling of the spectrum of $M$, and the spectral gap scales by $(1 - c)$.
\end{proof}

\subsection{Transition Probabilities of Outer Projection Chain}\label{app:inner-proj-prob}
The table below indicates the transition probabilities from a specific face of ``Source'' type in $F_i$ to various ``Target'' faces in $F_j$ for $j \neq i$. In the last column, we count transitions to \emph{any} $F_j$, rather than a specific $F_j$; this made our computations much easier. Due to the symmetry of the $\{F_i\}$ partition elements, to get the transition from $F_i$ to a specific $F_j$, we simply divide the transition probability to $\bigcup_{j \neq i} F_j$ by the number of $F_j$ adjacent to $F_i$, which is $(k + 1)(s - (k + 1))$.

\begin{center}
\begin{tabular}{|l|l|l|l|l|}
\hline
Source                      & Delete                    & Target           & Probability                                                  & Count in $F_j$, $j \neq i$ \\ \hline
\multirow{2}{*}{$0$-offset} & \multirow{2}{*}{anything} & $0$-offset       & $\frac{1}{k + 1} \cdot \frac{1}{s - k} \cdot \frac{w_J}{DT}$ & $1$            \\ \cline{3-5} 
                            &                           & $1$-offset       & $\frac{1}{k + 1} \cdot \frac{1}{s - k} \cdot \frac{w_I}{D}$  & $1$             \\ \hline
\multirow{4}{*}{$1$-offset} & \multirow{2}{*}{minority}   & $0$-offset       & $\frac{1}{k + 1} \cdot \frac{1}{s - k} \cdot \frac{w_J}{DT}$ & $1$                     \\ \cline{3-5} 
                            &                           & $1$-offset       & $\frac{1}{k + 1} \cdot \frac{1}{s - k} \cdot \frac{w_I}{D}$  & $1$                     \\ \cline{2-5} 
                            & \multirow{2}{*}{majority}     & $1$-offset       & $\frac{1}{k + 1} \cdot \frac{1}{s - k} \cdot \frac{1}{2}$    & $1$                  \\ \cline{3-5} 
                            &                           & $2$-offset       & $\frac{1}{k + 1} \cdot \frac{1}{s - k} \cdot \frac{1}{2}$    & $1$                  \\ \hline
\multirow{4}{*}{$t$-offset} & \multirow{2}{*}{minority} & $t$-offset       & $\frac{1}{k + 1} \cdot \frac{1}{s - k} \cdot \frac{1}{2}$    & $1$                  \\ \cline{3-5} 
                            &                           & $(t - 1)$-offset & $\frac{1}{k + 1} \cdot \frac{1}{s - k} \cdot \frac{1}{2}$    & $1$                  \\ \cline{2-5} 
                            & \multirow{2}{*}{majority}     & $t$-offset       & $\frac{1}{k + 1} \cdot \frac{1}{s - k} \cdot \frac{1}{2}$    & $1$        \\ \cline{3-5} 
                            &                           & $(t + 1)$-offset & $\frac{1}{k + 1} \cdot \frac{1}{s - k} \cdot \frac{1}{2}$    & $1$        \\ \hline
\end{tabular}
\end{center}

\noindent Using the table above, \pref{lem:stationary-orest}, and the framework of \cite{jerrum}, the specific transition probabilities for each state in the projection chain are:
\begin{itemize}
    \item $\displaystyle p := \frac{1}{\dT(k + 1)(s - k)}\cdot\frac{[(2^k - 2)T + 1]Tw_I + w_J}{(2^k - 1)\dT w_I + w_J}$ to each of its $(k + 1)(s - (k + 1))$ neighbors.
    
    \item $1 - (k + 1)(s - (k + 1))p$ for self loops, which can be verified to be nonzero.
\end{itemize}

\subsection{Proof of \pref{lem:iproj-spectral-gap}}\label{app:inner-proj-gap}
Let $\wt{\duwalkbase}$ be the non-lazy version (i.e. no self loops) of $\duwalkbase$.  Since in our construction, $\duwalkbase$ is a complete complex, $w(F)$ is uniform over $F \in \Face{k}$, so all transitions in $\wt{\duwalkbase}$ are also uniform. To understand the spectrum of $\iproj$, we can express the transition matrix of $\iproj$ as: 
\[
    (k + 1)(s - (k + 1))p \cdot \wt{\duwalkbase} + [1 - (k + 1)(s - (k + 1))p] \cdot \Id
\]
Luckily, for $\duwalkbase$ a complete complex, the spectrum of $\wt{\duwalkbase}$ is well understood.  The following can be deduced from the main theorem of \cite{kaufman-oppenheim}.

\begin{theorem} \label{gap-up-down}
    $\displaystyle\gap(\duwalkbase) \geq \frac{1}{(k + 1)} \enspace.$
\end{theorem}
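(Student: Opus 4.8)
The plan is to obtain \pref{gap-up-down} as an immediate corollary of the Kaufman--Oppenheim local-to-global bound (\pref{thm:informal-KO17}, i.e.\ the main theorem of \cite{kaufman-oppenheim}), exploiting that $\baseHDX = \calK_s^{(H)}$ is the complete complex and therefore has links that are themselves complete complexes with maximally-expanding $1$-skeletons. Concretely: for any face $S$ of $\baseHDX$, the link $\Link(S)$ is the complete $(H-|S|)$-dimensional complex on the $s-|S|$ vertices outside $S$, so its $1$-skeleton is a clique on at least two vertices. As already noted in \pref{eg:complete-complex}, the normalized adjacency matrix of a clique on $m \ge 2$ vertices is $\frac{1}{m-1}(J-I)$ (with $J$ the all-ones matrix), whose eigenvalues are $1$ and $-\frac{1}{m-1}$; hence its second-largest eigenvalue is $-\frac{1}{m-1} \le 0$. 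Thus every link of $\baseHDX$ has second eigenvalue at most $0$, and the hypothesis of \pref{thm:informal-KO17} holds with $\lambda = 0$.

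With this in hand I would plug $\lambda = 0$ into \pref{thm:informal-KO17}, which then bounds the second eigenvalue of the up-down walk on faces of $\baseHDX$ by $1 - \frac{1}{k+1}$ (applying the bound loosely, so that any off-by-one in the relevant face dimension only improves the estimate). Finally, by \pref{fact:ud-du-same-spectra} the down-up walk $\duwalkbase$ shares its nonzero spectrum --- and hence its second eigenvalue --- with this up-down walk, so $\gap(\duwalkbase) \ge \frac{1}{k+1}$, which is the claim.

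There is no substantive obstacle here; the only points requiring care are the index bookkeeping (the down-up walk on $k$-faces versus the up-down walk it is spectrally equivalent to under \pref{fact:ud-du-same-spectra}, and the precise constant produced by \pref{thm:informal-KO17}) and checking that \pref{thm:informal-KO17} is legitimately invoked with an upper bound on the link eigenvalues --- which it is, since $0$ dominates the (negative) nontrivial eigenvalue of every clique. As a fully self-contained alternative that avoids citing \cite{kaufman-oppenheim}, one can instead diagonalize $\duwalkbase$ directly: on the complete complex the down-up walk on $k$-faces is a reweighting of a standard walk in the Johnson scheme $J(s,k+1)$, whose eigenvalues are classical and yield spectral gap $\frac{s}{(k+1)(s-k)} \ge \frac{1}{k+1}$, recovering (and mildly sharpening) the stated bound.
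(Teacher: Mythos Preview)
Your proposal is correct and matches the paper's own approach: the paper's ``proof'' is simply the sentence ``The following can be deduced from the main theorem of \cite{kaufman-oppenheim},'' and you have spelled out exactly that deduction (links of the complete complex are cliques, hence have nonpositive second eigenvalue, so \pref{thm:informal-KO17} applies with $\lambda=0$). Your index-bookkeeping caveat is well-placed---applying \pref{fact:ud-du-same-spectra} relates $\duwalkbase=\baseHDX_k^{\down\up}$ to $\baseHDX_{k-1}^{\up\down}$, which via \pref{thm:informal-KO17} gives gap $\ge \frac{1}{k}\ge\frac{1}{k+1}$---and the Johnson-scheme alternative you mention is a legitimate self-contained route the paper does not take.
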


We can now compute the second largest eigenvalue of the non-lazy walk $\wt{\duwalkbase}$.

\begin{corollary}
$\displaystyle\gap(\wt{\duwalkbase}) \geq \frac{1}{(k + 1)} + \frac{1}{(k + 1)(s - k - 1)} \enspace.$
\end{corollary}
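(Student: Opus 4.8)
The plan is to recognize that, on the complete complex $\baseHDX=\Complete$, the walk $\duwalkbase$ is exactly the \emph{lazy} version of $\wt{\duwalkbase}$ with a laziness parameter that is the \emph{same} at every state; once this is established, the corollary drops out of \pref{lem:scaling} combined with \pref{gap-up-down}.

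First I would pin down the self-loop probability of $\duwalkbase$. Because $\baseHDX$ is the complete complex, its automorphism group $S_s$ acts transitively on $j$-faces for every $j$, so the balanced weight $w(\cdot)$ propagated from uniform weights on the top faces is constant on $\Face{j}(\baseHDX)$ for each $j$; in particular $w(J')/w(F)=\frac{1}{s-k}$ for every $(k-1)$-face $F$ and every $k$-face $J'\supset F$, since $F$ is contained in exactly $s-k$ many $k$-faces. Consequently, from a $k$-face $J$ the down step of $\duwalkbase$ deletes a uniformly random vertex to reach some $(k-1)$-face $F$, and the up step re-adds a uniformly random one of the $s-k$ vertices missing from $F$; the walk returns to $J$ precisely when the re-added vertex is the one deleted, which has probability $\frac{1}{s-k}$ regardless of which vertex was deleted. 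Hence every state of $\duwalkbase$ has self-loop probability exactly $\frac{1}{s-k}$ (and $\frac{1}{s-k}<1$ since $s\ge H+1>k+1$), so
\[
    \duwalkbase \;=\; \frac{1}{s-k}\,\Id \;+\; \bigl(1-\tfrac{1}{s-k}\bigr)\,\wt{\duwalkbase},
\]
which is exactly the decomposition of a lazy chain into its laziness and its non-lazy part $\wt{\duwalkbase}$.

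Then I would apply \pref{lem:scaling} with $c=\frac{1}{s-k}$ to get $\gap(\duwalkbase)=\frac{s-k-1}{s-k}\,\gap(\wt{\duwalkbase})$, equivalently $\gap(\wt{\duwalkbase})=\frac{s-k}{s-k-1}\,\gap(\duwalkbase)$, and finish by invoking \pref{gap-up-down} together with the identity $\frac{s-k}{s-k-1}=1+\frac{1}{s-k-1}$:
\[
    \gap(\wt{\duwalkbase}) \;\ge\; \frac{s-k}{s-k-1}\cdot\frac{1}{k+1} \;=\; \frac{1}{k+1}+\frac{1}{(k+1)(s-k-1)}.
\]
There is essentially no obstacle here: the only point needing care is the claim that the self-loop probability is \emph{uniformly} $\frac{1}{s-k}$, which is exactly the statement that the up step on $\baseHDX$ is uniform over the $s-k$ extensions of a $(k-1)$-face --- true for the complete complex with uniform top weights, and precisely the step that would fail for a general pure complex.
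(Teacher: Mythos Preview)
Your proposal is correct and follows essentially the same approach as the paper: compute the uniform self-loop probability $\tfrac{1}{s-k}$ of $\duwalkbase$ on the complete complex, write $\duwalkbase=\tfrac{1}{s-k}\Id+(1-\tfrac{1}{s-k})\wt{\duwalkbase}$, apply \pref{lem:scaling}, and plug in the bound from \pref{gap-up-down}. Your version is in fact more careful in justifying why the self-loop probability is uniform across all states.
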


\begin{proof}
Since we are working with a complete complex, all weights on sets of a given size are uniform. Thus, the self-loop probability of $\duwalkbase$ is $\frac{1}{s - k}$. 

We can next write $\duwalkbase = \frac{1}{s - k} \cdot I + (1 - \frac{1}{s - k}) \cdot \wt{\duwalkbase}$. Using \pref{lem:scaling}, we conclude that 
$$
    \gap(\wt{\duwalkbase}) = \frac{\gap(\duwalkbase)}{(1 - \frac{1}{s - k})}
$$
We get the desired result after substituting $\frac{1}{(k + 1)}$ as a lower bound for $\gap(\wt{\duwalkbase})$.
\end{proof}

\begin{proof}[Proof of \pref{lem:iproj-spectral-gap}]
By \pref{lem:scaling} again, we have $$\gap(\iproj) = \gap(\wt{\duwalkbase}) \cdot (k + 1)(s - (k + 1))p.$$ Substituting for $p$:
\begin{align*}
   \gap(\iproj) &\geq \left[\frac{1}{(k + 1)} + \frac{1}{(k + 1)(s - k - 1)}\right] \cdot \left[\frac{s - k - 1}{\dT(s - k)}\cdot\frac{[(2^k - 2)T + 1]Tw_I + w_J}{(2^k - 1)\dT w_I + w_J} \right] \\
   &\geq \left[\frac{1}{(k + 1)} + \frac{1}{(k + 1)(s - k - 1)}\right] \cdot \frac{1}{2T} \\
   &\geq \frac{1}{2T(k + 1)}
\end{align*}
It can be verified that $\displaystyle \frac{1}{2T}$ is a lower bound on $\displaystyle \frac{s - k - 1}{\dT(s - k)}\cdot\frac{[(2^k - 2)T + 1]Tw_I + w_J}{(2^k - 1)\dT w_I + w_J}$.
\end{proof}

\section{Spectral Gap of Inner Restriction Chain} \label{app:irest-appendix}

\subsection{Transition Probabilities of Inner Restriction chain}\label{app:irest-prob}
The transition probabilities can be summarized succinctly:

\begin{center}
\begin{tabular}{|l|l|l|l|}
\hline
Source                      & Delete   & Target           & Probability                                                  \\ \hline
$0$-offset                  & anything & $1$-offset       & $\frac{1}{k + 1} \cdot \frac{1}{s - k} \cdot \frac{w_I}{D}$  \\ \hline
\multirow{2}{*}{$1$-offset} & minority   & $0$-offset       & $\frac{1}{k + 1} \cdot \frac{1}{s - k} \cdot \frac{w_J}{DT}$ \\ \cline{2-4} 
                            & majority     & $2$-offset       & $\frac{1}{k + 1} \cdot \frac{1}{s - k} \cdot \frac{1}{2}$    \\ \hline
\multirow{2}{*}{$t$-offset} & minority & $(t - 1)$-offset & $\frac{1}{k + 1} \cdot \frac{1}{s - k} \cdot \frac{1}{2}$    \\ \cline{2-4} 
                            & majority     & $(t + 1)$-offset & $\frac{1}{k + 1} \cdot \frac{1}{s - k} \cdot \frac{1}{2}$    \\ \hline
\end{tabular}
\end{center}

\subsection{Proof of \pref{lem:irest-spectral-gap}}  \label{app:irest-gap}
\noindent We can also define a related chain $U$, that has the same state space and transitions as $R$, but the self loop probabilities are uniform across all vertices. More precisely:
\begin{itemize}
    \item For \emph{all} hypercube vertices, the self loop probability is $\displaystyle 1 - \frac{w_I}{D(S - k)}$. The transition probability to each of their $(k + 1)$ neighbors in the hypercube is $$\frac{w_I}{D(k + 1)(s - k)}.$$ 
\end{itemize}
The goal of this section is to bound on the spectral gap of $\irest$. Our approach relates the spectrum of $R$ to the spectrum of $U$. Due to the uniformity of the self loop probabilities, the spectrum of $U$ is easy to compute. 

\noindent For ease, we will write $\wt{D} = 2w_J + w_I\dT(2^{k + 1} - 2)$. The key bound on $\gap(\irest)$, which we provide a proof of in \pref{app:spec-gap-key-bound} is the following:
\begin{lemma} \label{lem:restriction_compare_gaps}
    $$\gap(\irest) \geq \frac{w_J \wt{D}}{2^{k + 1} \cdot (\dT w_I)^2} \cdot \gap(U) \geq \frac{w_J}{2 \dT w_I} \cdot \gap(U) \enspace.$$ 
\end{lemma}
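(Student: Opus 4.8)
The plan is to prove \pref{lem:restriction_compare_gaps} by a Dirichlet-form-and-variance comparison between the two reversible chains $R := \irest$ and $U$, which share the same state space --- the $2^{k+1}$ vertices of a $(k+1)$-dimensional hypercube --- and the same underlying edge set (hypercube edges, plus self-loops). I use the variational characterization $\gap(M) = \min_{f \text{ non-constant}} \mathcal{E}_M(f,f)/\var_{\pi_M}(f)$, with Dirichlet form $\mathcal{E}_M(f,f) = \sum_{\{x,y\}} Q_M(x,y)(f(x)-f(y))^2$, edge conductances $Q_M(x,y) = \pi_M(x)\,M[x\to y]$, and $\var_{\pi_M}(f) = \tfrac12\sum_{x,y}\pi_M(x)\pi_M(y)(f(x)-f(y))^2$, as recalled in \pref{app:var-char-gap}; note $R$ is reversible by \pref{fact:inherit-time-reversibility}.

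First I would record the stationary distributions and edge conductances of both chains. The chain $U$ has uniformly weighted hypercube edges and uniform laziness, so $\pi_U \equiv 2^{-(k+1)}$ and every hypercube edge has conductance $q_U = 2^{-(k+1)}\cdot\frac{w_I}{D(k+1)(s-k)}$. For $R$, since a restriction chain is reversible with respect to the parent's stationary distribution conditioned on the block, \pref{lem:stationary-orest} gives (after conditioning on a fixed base face $F$) that $\pi_R$ takes two values: $\pi_R = w_J/\wt D$ on the two ``pole'' states (the $0$-offset, i.e.\ constant-$f$, states) and $\pi_R = \dT w_I/\wt D$ on the other $2^{k+1}-2$ states, where $\wt D = 2w_J + \dT w_I(2^{k+1}-2)$ is exactly the normalizing constant. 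Reading off the table in \pref{app:irest-prob}, the $2(k+1)$ ``pole edges'' (joining a pole to a $1$-offset state) carry conductance $q_p = \frac{w_J}{\wt D}\cdot\frac{w_I}{D(k+1)(s-k)}$, and every remaining hypercube edge carries conductance $q_i = \frac{\dT w_I}{\wt D}\cdot\frac{1}{2(k+1)(s-k)}$.

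The comparison then has two pieces. For the Dirichlet forms: $R$ and $U$ have the same edge set, and $q_p/q_i = 2w_J/(\dT D) < 1$ (as $\dT \ge 2$), so the smallest $R$-conductance is $q_p$; since $q_p/q_U = 2^{k+1}w_J/\wt D$ we obtain $\mathcal{E}_R(f,f) \ge \tfrac{2^{k+1}w_J}{\wt D}\mathcal{E}_U(f,f)$ for every $f$. For the variances: in our weighting $w_J \le \dT w_I$ (indeed $\dT w_I - w_J = \dT - 1$), so $\pi_R(x) \le \dT w_I/\wt D$ at every state, giving $\pi_R(x)\pi_R(y) \le 2^{2(k+1)}(\dT w_I/\wt D)^2\,\pi_U(x)\pi_U(y)$ and hence $\var_{\pi_R}(f) \le 2^{2(k+1)}(\dT w_I/\wt D)^2\var_{\pi_U}(f)$. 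Dividing and minimizing over non-constant $f$,
\[
    \gap(R) \;\ge\; \frac{2^{k+1}w_J/\wt D}{2^{2(k+1)}(\dT w_I/\wt D)^2}\,\gap(U) \;=\; \frac{w_J\wt D}{2^{k+1}(\dT w_I)^2}\,\gap(U),
\]
which is the first inequality. The second inequality is the scalar bound $\wt D \ge 2^k \dT w_I$, equivalently $2w_J \ge \dT w_I(2 - 2^k)$, which holds trivially for $k \ge 1$ since the right-hand side is nonpositive.

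I do not expect a genuine obstacle --- this is the standard conductance/variance comparison template. The only steps that need care are (i) pinning down the two-valued form of $\pi_R$ and recognizing $\wt D$ as its normalizer (the offset structure on the hypercube makes the two-valued form slightly non-obvious), and (ii) reading the three conductances $q_p, q_i, q_U$ off the transition table and checking the two elementary inequalities $q_p \le q_i$ and $w_J \le \dT w_I$, which determine which conductance and which stationary value are the extremal ones used in the comparison.
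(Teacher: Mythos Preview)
Your proposal is correct and is essentially the same argument as the paper's: both use the variational characterization from \pref{app:var-char-gap}, bound the Dirichlet form termwise by the minimum edge-conductance ratio $q_p/q_U = 2^{k+1}w_J/\wt D$, bound the variance termwise using $\pi_R(x)\le \dT w_I/\wt D$ (via $w_J\le \dT w_I$), and divide. The paper writes out the sums explicitly while you phrase it in terms of conductances, but the bounds and the logic are identical; your check that $\wt D\ge 2^k\dT w_I$ for the second inequality is also the intended one.
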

\noindent We are able to explicitly compute $\gap(U) = \frac{2w_I}{D(k + 1)(s - k)}$ (see \pref{lem:hypercube-gap}).  The conclusion of \pref{lem:irest-spectral-gap} is then immediate.

\subsection{Variational Characterization of Spectral Gap}   \label{app:var-char-gap}

\noindent We will also use a different, variational characterization of the spectral gap of a time-reversible Markov chain $(\Omega, P)$, which will prove useful when working with self loops that have different probabilities. This characterization provides bounds on $\lambda_2$ without forcing us to analyze the chain's entire spectrum \cite{poincare}.

\begin{definition}
Let $M = (\Omega, P)$ be a time-reversible Markov chain. For functions $f, g: \Omega \to \mathbb{R}$, the \emph{Dirichlet form} corresponding to $M$ is:
$$
\dir_M(f, g) = \frac{1}{2} \sum_{x \in \Omega} \sum_{y \in \Omega} \pi_M(x) M[x \to y] \cdot [f(x) - f(y)][g(x) - g(y)]
$$
We may omit the subscript $M$ when there is no ambiguity. 
\end{definition}

\begin{definition}
Again, let $(\Omega, P)$ be a time-reversible Markov chain with stationary distribution $\pi_M$. For a functions $f: \Omega \to \mathbb{R}$, the \emph{variance} corresponding to $M$ is:
$$
\Var_M(f) = \frac{1}{2} \sum_{x \in \Omega} \sum_{y \in \Omega} \pi_M(x) \pi_M(y) \cdot [f(x) - f(y)]^2
$$
We may omit the subscript $M$ when there is no ambiguity. This definition is equivalent to
$$
\Var_M(f) =\E_{\pi_M}[f^2] - \E_{\pi_M}[f]^2
$$
These definitions are equivalent because for $X, Y$ i.i.d, $\Var(X) = \frac{1}{2} \E[(X - Y)^2]$.
\end{definition}

\begin{theorem} \label{thm:variational}
    Let $M = (\Omega, P)$ be a time-reversible Markov Chain. Then:
    $$
    \gap(M) = \inf\left\{ \frac{\dir_M(f, f)}{\Var_M(f)} \; \large| \; f: \Omega \to \mathbb{R}, \Var_M(f) \neq 0 \right\}
    $$
\end{theorem}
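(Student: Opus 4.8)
The plan is to work in the finite-dimensional inner product space $\R^\Omega$ equipped with $\langle f,g\rangle_\pi := \sum_{x\in\Omega}\pi_M(x)f(x)g(x)$, recognize both $\dir_M(f,f)$ and $\Var_M(f)$ as quadratic forms there, and then apply the Rayleigh--Ritz (Courant--Fischer) characterization of the smallest eigenvalue of a self-adjoint operator. Write $\mathcal{A}$ for the operator on functions defined by $(\mathcal{A}f)(x) := \sum_{y\in\Omega}M[x\to y]\,f(y) = \E[f(X_1)\mid X_0 = x]$. This operator has the same spectrum as $P$, namely $\{\lambda_i(M)\}$, and satisfies $\mathcal{A}\mathbf{1}=\mathbf{1}$ (the eigenvector for $\lambda_1 = 1$). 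The one place where time-reversibility enters is the observation that the detailed balance equations $\pi_M(x)M[x\to y] = \pi_M(y)M[y\to x]$ make $\mathcal{A}$ self-adjoint with respect to $\langle\cdot,\cdot\rangle_\pi$.

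First I would rewrite the two forms. Expanding $[f(x)-f(y)]^2$ in the definition of $\dir_M$, using that each row of $M$ sums to $1$ on the $f(x)^2$ term and detailed balance to symmetrize the $f(y)^2$ term, one gets
\[
\dir_M(f,f) \;=\; \langle f,f\rangle_\pi - \langle f,\mathcal{A}f\rangle_\pi \;=\; \langle f,(I-\mathcal{A})f\rangle_\pi .
\]
The formula $\Var_M(f) = \E_{\pi_M}[f^2]-\E_{\pi_M}[f]^2$ already noted in the paper is exactly $\Var_M(f) = \langle f,f\rangle_\pi - \langle f,\mathbf{1}\rangle_\pi^2$, since $\langle f,\mathbf{1}\rangle_\pi = \E_{\pi_M}[f]$ and $\langle\mathbf{1},\mathbf{1}\rangle_\pi = 1$.

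Next I would reduce to mean-zero functions. Decompose $f = c\mathbf{1}+g$ with $c = \langle f,\mathbf{1}\rangle_\pi = \E_{\pi_M}[f]$ and $g\perp_\pi\mathbf{1}$. Because $(I-\mathcal{A})\mathbf{1}=0$ and $I-\mathcal{A}$ is self-adjoint, the constant part is invisible to both forms: $\dir_M(f,f) = \langle g,(I-\mathcal{A})g\rangle_\pi$ and $\Var_M(f) = \|g\|_\pi^2$. In particular $g\ne 0$ precisely when $\Var_M(f)\ne 0$, and conversely every nonzero $g\perp_\pi\mathbf{1}$ arises from such an $f$ (namely $f = g$), so
\[
\inf\left\{\frac{\dir_M(f,f)}{\Var_M(f)} : \Var_M(f)\ne 0\right\} \;=\; \inf\left\{\frac{\langle g,(I-\mathcal{A})g\rangle_\pi}{\|g\|_\pi^2} : g\in\mathbf{1}^{\perp_\pi},\ g\ne 0\right\}.
\]

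Finally, since $\mathcal{A}$ is self-adjoint on $(\R^\Omega,\langle\cdot,\cdot\rangle_\pi)$ with $\mathcal{A}\mathbf{1}=\mathbf{1}$, the subspace $\mathbf{1}^{\perp_\pi}$ is $\mathcal{A}$-invariant and $\mathcal{A}$ restricted to it is self-adjoint with spectrum $\{\lambda_2(M),\dots,\lambda_{|\Omega|}(M)\}$; hence $(I-\mathcal{A})|_{\mathbf{1}^{\perp_\pi}}$ has least eigenvalue $1-\lambda_2(M) = \gap(M)$, and the Rayleigh--Ritz characterization identifies the infimum on the right with $\gap(M)$. I do not expect a real obstacle: the statement is standard, and the only steps that need care are the symmetrization in the first step (where reversibility is genuinely used --- without it $\mathcal{A}$ is not self-adjoint and this identity fails) and the bookkeeping showing that the constant component of $f$ drops out of both $\dir_M$ and $\Var_M$; once these are in place the conclusion is immediate from Courant--Fischer.
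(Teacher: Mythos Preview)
Your argument is correct and is the standard proof of this classical fact: rewrite $\dir_M(f,f)=\langle f,(I-\mathcal{A})f\rangle_\pi$ using detailed balance, recognize $\Var_M(f)=\|g\|_\pi^2$ for the mean-zero part $g$ of $f$, and invoke Courant--Fischer on the self-adjoint operator $I-\mathcal{A}$ restricted to $\mathbf{1}^{\perp_\pi}$. The paper itself does not prove this theorem; it is stated without proof in the appendix as background, with a citation to \cite{poincare}, so there is no paper proof to compare against --- your write-up simply supplies the omitted standard argument.
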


Often, we will not be able to compute the exact spectral gap of a chain, but it will suffice to have a lower bound on it. We can determine whether $\lambda$ is a lower bound on $\gap(M)$ by checking if it satisfies the \emph{Poincar\'e inequality}:

\begin{definition}
    We say $\lambda \geq 0$ satisfies the \emph{Poincar\'e inequality} if for all $f: \Omega \to \mathbb{R}$:
    $$
    \lambda \cdot \Var_M(f) \leq \dir_M(f, f)
    $$
    By the variational characterization of spectral gap, we would also have $\lambda \leq \gap(M)$. 
\end{definition}

\subsection{Proof of \pref{lem:restriction_compare_gaps}}    \label{app:spec-gap-key-bound}

\begin{lemma} \label{lem:hypercube-gap}
    Let $H$ be the uniform, non-lazy walk on the $(k + 1)$-dim. hypercube. Then $\lambda_2(H) = \frac{2}{k + 1}$. 
\end{lemma}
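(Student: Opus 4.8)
The plan is to diagonalize the walk operator $H$ completely by Fourier analysis on the Boolean cube, which produces its entire spectrum at once; reading off $\lambda_2$ and the spectral gap is then immediate. First I would identify the vertices of the $(k+1)$-dimensional hypercube with $\{0,1\}^{k+1}$, so that the uniform non-lazy walk $H$ acts on functions $g\colon\{0,1\}^{k+1}\to\R$ by $(Hg)(x)=\frac{1}{k+1}\sum_{i=1}^{k+1}g(x\oplus e_i)$, where $e_i$ is the $i$-th standard basis vector and $\oplus$ denotes coordinatewise addition modulo $2$; equivalently, $H$ is the adjacency matrix of $Q_{k+1}$ normalized by its degree $k+1$.

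The key step is to write down an explicit eigenbasis. For $S\subseteq[k+1]$ set $\chi_S(x)=(-1)^{\sum_{i\in S}x_i}$. Flipping the $i$-th coordinate of $x$ multiplies $\chi_S(x)$ by $-1$ when $i\in S$ and leaves it unchanged otherwise, so $(H\chi_S)(x)=\frac{1}{k+1}\bigl((k+1)-2|S|\bigr)\chi_S(x)$; that is, $\chi_S$ is an eigenvector of $H$ with eigenvalue $1-\frac{2|S|}{k+1}$. Since the $2^{k+1}$ characters $\{\chi_S\}_{S\subseteq[k+1]}$ are pairwise orthogonal, they form a complete eigenbasis, so $\Spec(H)=\bigl\{\,1-\tfrac{2j}{k+1}\;:\;0\le j\le k+1\,\bigr\}$, with $1-\tfrac{2j}{k+1}$ occurring with multiplicity $\binom{k+1}{j}$. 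The largest eigenvalue is $1$ (from $j=0$), the next is $1-\tfrac{2}{k+1}=\tfrac{k-1}{k+1}$ (from $j=1$), and hence $\gap(H)=\tfrac{2}{k+1}$. Alternatively one could note that $Q_{k+1}$ is the Cartesian product of $k+1$ single-edge graphs and invoke the standard fact that the eigenvalues of a Cartesian product of graphs are sums of the eigenvalues of the factors, then renormalize by $k+1$.

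I do not expect any genuine obstacle here: this is a textbook computation, and the only bookkeeping to watch is the normalization by the degree $k+1$ and the distinction between the Cartesian product (which yields $Q_{k+1}$) and the tensor product used elsewhere in this paper. Finally, to tie this back into the surrounding argument: $U=cI+(1-c)H$ with $c=1-\tfrac{w_I}{D(s-k)}$, so \pref{lem:scaling} gives $\gap(U)=(1-c)\,\gap(H)=\tfrac{w_I}{D(s-k)}\cdot\tfrac{2}{k+1}=\tfrac{2w_I}{D(k+1)(s-k)}$, which is exactly the input needed in \pref{lem:restriction_compare_gaps} to conclude \pref{lem:irest-spectral-gap}.
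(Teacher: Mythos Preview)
Your argument is correct and is essentially the same approach as the paper's: the paper simply cites the Cayley-graph spectrum of $\mathbb{Z}_2^{k+1}$, and your explicit Fourier/character computation is exactly how that spectrum is derived, just written out in full. The only cosmetic point is that the lemma's notation ``$\lambda_2(H)$'' is being used for the spectral gap $\gap(H)=\tfrac{2}{k+1}$ rather than the second eigenvalue $\tfrac{k-1}{k+1}$; you correctly inferred this from how the lemma is consumed downstream.
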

\begin{proof} 
See \cite{babai1979spectra} for a thorough treatment of Cayley graphs. The $(k + 1)$-dimensional hypercube is the Cayley graph derived from the cyclic group $\mathbb{Z}_2^{k + 1}$.
\end{proof}

\begin{observation}
    $\lambda(U)$ is $\frac{2w_I}{D(k + 1)(S - k)}$. 
\end{observation}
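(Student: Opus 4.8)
The plan is to recognize $U$ as a \emph{lazy} version of the uniform non-lazy random walk on the $(k+1)$-dimensional hypercube, and then combine \pref{lem:scaling} with the known spectrum of that walk.

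First I would read the structure of $U$ off of its defining transition probabilities. Every state of $U$ has exactly $k+1$ hypercube neighbors, each reached with probability $\frac{w_I}{D(k+1)(s-k)}$, together with a self-loop carrying the remaining mass; and indeed $(k+1)\cdot\frac{w_I}{D(k+1)(s-k)} = \frac{w_I}{D(s-k)}$, so the self-loop probability is $1-\frac{w_I}{D(s-k)}$, exactly as in the definition of $U$. Consequently $U = c\,I + (1-c)\,H$, where $H$ denotes the uniform non-lazy walk on the $(k+1)$-cube (transition probability $\frac{1}{k+1}$ to each of its $k+1$ neighbors) and $c := 1-\frac{w_I}{D(s-k)}$, so that $1-c = \frac{w_I}{D(s-k)}$.

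Next I would invoke \pref{lem:scaling} to get $\gap(U) = (1-c)\,\gap(H) = \frac{w_I}{D(s-k)}\,\gap(H)$. It remains to note that the non-lazy hypercube walk has spectral gap $\gap(H) = \frac{2}{k+1}$: the $(k+1)$-cube is the Cayley graph of $\mathbb{Z}_2^{k+1}$ with the standard generating set, so the eigenvalues of its normalized adjacency matrix are $\frac{k+1-2j}{k+1}$ for $j=0,1,\dots,k+1$, whence $\lambda_1 = 1$, $\lambda_2 = \frac{k-1}{k+1}$, and $\gap(H) = \frac{2}{k+1}$ (this is \pref{lem:hypercube-gap}). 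Substituting gives $\gap(U) = \frac{w_I}{D(s-k)}\cdot\frac{2}{k+1} = \frac{2w_I}{D(k+1)(s-k)}$, which is the claimed value; here $\lambda(U)$ in the statement denotes the spectral gap $\gap(U)$, consistent with its use in \pref{lem:irest-spectral-gap} and \pref{lem:restriction_compare_gaps}.

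This computation has no real obstacle; the only points requiring care are checking that the neighbor probabilities and the self-loop probability of $U$ sum to $1$ (so that the convex-combination decomposition $U = cI + (1-c)H$ is legitimate) and correctly reading off the laziness parameter as $1-c = \frac{w_I}{D(s-k)}$ rather than some rescaled variant. Both follow directly from the transition table of $\irest$ recalled just above.
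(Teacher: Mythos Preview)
Your proposal is correct and follows essentially the same approach as the paper: write $P_U = \left(1 - \frac{w_I}{D(s-k)}\right) I + \frac{w_I}{D(s-k)} P_H$, apply \pref{lem:scaling}, and invoke \pref{lem:hypercube-gap} for $\gap(H) = \frac{2}{k+1}$. Your added verification that the neighbor and self-loop probabilities sum to $1$ is a helpful sanity check but not strictly needed, since this is built into the definition of $U$.
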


\begin{proof}
Let $P_H$ denote the transition matrix of a uniform random walk on a $(k + 1)$-dimensional hypercube, with no self loops. Then, the transition matrix $P_U$ of $U$ can be expressed as: 
$$ P_U = \left(1 - \frac{w_I}{D(S - k)}\right) \cdot I + \frac{w_I}{D(S - k)} \cdot P_H $$
By \pref{lem:scaling}, we have $\lambda(U) = \lambda(H) \cdot \frac{w_I}{D(S - k)}$, so we get the desired result via \pref{lem:hypercube-gap}.
\end{proof}

\noindent We also observe that the stationary distribution of $U$, which we will call $\pi_U$, is uniform over the $2^{k + 1}$ states. The stationary distribution of $\irest$, denoted $\pi_{\irest}$ can also be described explicitly. 

\begin{observation}
The stationary distribution $\pi_{\irest}$ of chain $\irest$ is
\begin{equation*}
    \pi_{\irest}(x) = 
    \begin{cases}
        \frac{w_J}{2w_J + w_I\dT(2^{k + 1} - 2)} &\text{ if } x \in \{\vec{0}, \vec{1}\} \\
        \frac{\dT w_I}{2w_J + \dT w_I(2^{k + 1} - 2)} &\text{ otherwise}
    \end{cases}
\end{equation*}
\end{observation}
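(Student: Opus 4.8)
The plan is to verify that the claimed vector satisfies the detailed balance equations for $\irest$ and then fix the normalization; this suffices because $\irest$ is time-reversible, being (iteratively) a restriction chain of the time-reversible walk $\duwalksp$, so \pref{fact:inherit-time-reversibility} applies. Throughout I identify the states of $\irest$ with vertices of the $(k+1)$-dimensional hypercube via the bijection of \pref{sec:spec-gap-HO} (a coordinate is $0$ or $1$ according to which of the two endpoints of the fixed edge $e_1$ is assigned to that vertex of the fixed base face $F$). Under this identification the two $0$-offset states are the all-zeros and all-ones vertices $\vec 0,\vec 1$, and the remaining $2^{k+1}-2$ states have Hamming weight strictly between $0$ and $k+1$.

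First I would read off from the transition table in \pref{app:irest-prob} exactly which pairs of distinct states communicate and with what probability. Since $\irest$ is a lazy walk on the hypercube, $\irest[x\to y]>0$ for $x\ne y$ only when $x,y$ differ in a single coordinate, so their Hamming weights are $h$ and $h+1$. If $\{h,h+1\}=\{0,1\}$ then the transition from the weight-$0$ side up to the weight-$1$ side has probability $\tfrac{1}{k+1}\cdot\tfrac{1}{s-k}\cdot\tfrac{w_I}{D}$, and the reverse move (``delete the minority vertex'') has probability $\tfrac{1}{k+1}\cdot\tfrac{1}{s-k}\cdot\tfrac{w_J}{DT}$; by symmetry of the construction under swapping the two endpoints of $e_1$, the same holds for $\{h,h+1\}=\{k,k+1\}$ with $\vec 1$ in place of $\vec 0$. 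For every other adjacent pair (with $1\le h\le h+1\le k$) the transition probability is $\tfrac{1}{2(k+1)(s-k)}$ in \emph{both} directions.

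Then detailed balance is immediate. For an adjacent pair lying entirely among weights $1,\dots,k$, equality of the two transition probabilities forces the two states to carry equal stationary mass; since the subgraph of the hypercube on these $2^{k+1}-2$ vertices is connected (for $k\ge 2$; the few small cases are handled by routing through $\vec 0$ or $\vec 1$), they all get a common value $\beta$. For a pair $(x,y)$ with $x\in\{\vec 0,\vec 1\}$ and $y$ of weight $1$ or $k$, detailed balance reads $\alpha\cdot\tfrac{w_I}{D}=\beta\cdot\tfrac{w_J}{DT}$, i.e.\ $\alpha/\beta=w_J/(Tw_I)$, which is exactly the ratio of the two values in the statement. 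Hence the claimed vector (up to a global scalar) solves every detailed balance equation, so it is the stationary distribution once normalized, and summing yields the normalizing constant $2w_J+(2^{k+1}-2)Tw_I$, matching the displayed formula. I do not expect a genuine obstacle; the only points needing care are the state count ($2$ versus $2^{k+1}-2$) and the $w_I$-versus-$w_J/T$ asymmetry of the moves incident to $\vec 0$ and $\vec 1$. As a cross-check one can bypass detailed balance: $\irest$ is by construction the restriction of the reversible chain $\orest{1}$ to the block of states sharing the base face $F$, so its stationary distribution is $\pi_{\orest{1}}$ conditioned on that block; substituting \pref{lem:stationary-orest} and dividing by the block's mass (a sum of $2$ ``$0$-offset'' terms and $2^{k+1}-2$ ``otherwise'' terms) reproduces the same formula.
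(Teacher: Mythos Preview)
Your proposal is correct and follows essentially the same route as the paper: use time-reversibility of $\irest$ (inherited via \pref{fact:inherit-time-reversibility}) to invoke detailed balance, conclude that all non-$0$-offset states share a common mass and that the $0$-offset to non-$0$-offset ratio is $w_J/(Tw_I)$, then normalize. Your write-up is in fact more careful than the paper's (you address connectivity of the middle layers and give a cross-check via conditioning $\pi_{\orest{1}}$), but the underlying argument is identical.
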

\begin{proof}
By time reversibility of $\irest$ \cite{jerrum}, the detailed balance equations imply that for all $y$ that are $t$-offset, for $t \geq 1$, the stationary probability $\pi_{\irest}(y)$ is the same, and $\pi_{\irest}(\vec{0}) = \pi_{\irest}(\vec{1})$. 

Let $x$ be $0$-offset and $y$ be $1$-offset. Again, by time-reversibility of $R_i$ and detailed balance:
$$
\pi_{\irest}(x) \cdot \frac{w_I}{D(k + 1)(S - k)} = \pi_{\irest}(y) \cdot \frac{w_J}{D\dT(k + 1)(S - k)}
$$
This tells us $\pi_{\irest}(x) = \frac{w_J}{\dT w_I} \cdot \pi_{\irest}(y)$. Solving for $\sum_{x \in \{0, 1\}^{k + 1}} \pi_{\irest}(x) = 1$ gives the desired result.  

\end{proof}
\noindent Recall that we write $\Tilde{D} = 2w_J + w_I\dT(2^{k + 1} - 2)$.

\begin{proof}[Proof of \pref{lem:restriction_compare_gaps}] Let $g$ be a real-valued function over the $k$-faces of $\LD(G,\SimpComp)$. Using \pref{thm:variational}, it suffices to prove that for all $g$,
$$
\frac{\dir_{\irest}(g, g)}{\Var_{\irest}(g, g)} \geq  \frac{w_J \Tilde{D}}{2^{k + 1} \cdot (\dT w_I)^2} \cdot \frac{\dir_U(g, g)}{\Var_U(g, g)}
$$
First, we compute both $\dir_U(g, g)$ and $\dir_R(g, g)$. 
\begin{align*}
    \dir_U(g, g) &= \frac{1}{2} \sum_{x, y \in \{0, 1\}^{(k + 1)}} \pi_U(x) \cdot [g(x) - g(y)]^2 \cdot P_U(x, y) \\ 
    &= \frac{1}{2} \cdot \frac{1}{2^{(k + 1)}} \cdot \frac{w_I}{D(k + 1)(S - k)} \sum_{x, y \in \{0, 1\}^{(k + 1)}} [g(x) - g(y)]^2
\end{align*}
\begin{align*}
    \dir_{\irest}(g, g) &= \frac{1}{2} \sum_{x, y \in \{0, 1\}^{(k + 1)}: x \in \{\vec{0}, \vec{1}\}} \pi_{\irest}(x) \cdot [g(x) - g(y)]^2 \cdot P_{\irest}(x, y) \\
    & \; \; \; \; \; \; \; \; \; \; \; \; + \frac{1}{2} \sum_{x, y \in \{0, 1\}^{(k + 1)}: x \notin \{\vec{0}, \vec{1}\}} \pi_{\irest}(x) \cdot [g(x) - g(y)]^2 \cdot P_{\irest}(x, y) \\
    &= \frac{1}{2} \sum_{x, y \in \{0, 1\}^{(k + 1)}: x \in \{\vec{0}, \vec{1}\}} \frac{w_J}{\Tilde{D}} \cdot [g(x) - g(y)]^2 \cdot \frac{w_I}{D(k + 1)(S - k)} \\
    & \; \; \; \; \; \; + \frac{1}{2} \sum_{x \in \{0, 1\}^{(k + 1)}: x\text{ 1-balanced}} \frac{\dT w_I}{\Tilde{D}} \cdot \left( \sum_{y \in \{0, 1\}^{(k + 1)}: y\text{ 0-balanced}} [g(x) - g(y)]^2 \cdot \frac{w_J}{D\dT(k + 1)(S - k)} \right.\\
    & \; \; \; \; \; \; \; \; \; \; \; \; \; \; \; \; \left. +\sum_{y \in \{0, 1\}^{(k + 1)}: y\text{ not 0-balanced}} [g(x) - g(y)]^2 \cdot \frac{1}{2(k + 1)(S - k)} \right) \\
    & \; \; \; \; \; \; + \frac{1}{2} \sum_{x, y \in \{0, 1\}^{(k + 1)}: x,y \notin \{\vec{0}, \vec{1}\}} \frac{\dT w_I}{\Tilde{D}} \cdot [g(x) - g(y)]^2 \cdot \frac{1}{2(k + 1)(S - k)} \\
    &\geq \frac{1}{2} \cdot \frac{w_Iw_J}{\Tilde{D}D(k + 1)(S - k)} \sum_{x, y \in \{0, 1\}^{(k + 1)}} [g(x) - g(y)]^2
\end{align*}
From the above computations, we can conclude that
$$
\dir_{\irest}(g, g) \geq \frac{2^{(k + 1)} \cdot w_J}{\Tilde{D}} \cdot \dir_U(g, g)
$$
Similarly, we can compute both $\Var_U(g)$ and $\Var_R(g)$:
\begin{align*}
    \Var_U(g) &= \frac{1}{2} \sum_{x, y \in \{0, 1\}^{k + 1}} \pi_U(x) \pi_U(y)[f(x) - f(y)]^2 \\
    &= \frac{1}{2} \cdot \frac{1}{2^{2(k + 1)}} \sum_{x, y \in \{0, 1\}^{k + 1}} [f(x) - f(y)]^2 \\
    \Var_{\irest}(g, g) &= \frac{1}{2} \sum_{x, y \in \{\vec{0}, \vec{1}\}} \pi_{\irest}(x)\pi_{\irest}(y) [f(x) - f(y)]^2 \\
    & \; \; \; \; \; \; \; \; \; \; \; \; + \frac{1}{2} \sum_{\substack{x \in \{\vec{0}, \vec{1}\}, \; y \in \{0, 1\}^{k + 1} \setminus \{\vec{0}, \vec{1}\} \text{ or } \\ x \in \{0, 1\}^{k + 1} \setminus \{\vec{0}, \vec{1}\}, \; y \in  \{\vec{0}, \vec{1}\}}} \pi_{\irest}(x)\pi_{\irest}(y) [f(x) - f(y)]^2 \\
    & \; \; \; \; \; \; \; \; \; \; \; \; + \frac{1}{2} \sum_{x, y \in \{\vec{0}, \vec{1}\}} \pi_{\irest}(x)\pi_{\irest}(y) [f(x) - f(y)]^2 \\
    &= \frac{1}{2} \sum_{x, y \in \{\vec{0}, \vec{1}\}} \frac{w_J^2}{\Tilde{D}^2} [f(x) - f(y)]^2 + \frac{1}{2} \sum_{\substack{x \in \{\vec{0}, \vec{1}\}, \; y \in \{0, 1\}^{k + 1} \setminus \{\vec{0}, \vec{1}\} \text{ or } \\ x \in \{0, 1\}^{k + 1} \setminus \{\vec{0}, \vec{1}\}, \; y \in  \{\vec{0}, \vec{1}\}}} \frac{\dT w_I w_J}{\Tilde{D}^2} [f(x) - f(y)]^2 \\
    & \; \; \; \; \; \; \; \; \; \; \; \; + \frac{1}{2} \sum_{x, y \in \{\vec{0}, \vec{1}\}} \frac{(\dT w_I)^2}{\Tilde{D}^2} [f(x) - f(y)]^2 \\
    &\leq \frac{1}{2} \cdot \frac{(\dT w_I)^2}{\Tilde{D}^2} \sum_{x, y \in \{0, 1\}^{k + 1}} [f(x) - f(y)]^2
\end{align*}
From the above computations, we can conclude that
$$
\Var_{\irest}(g) \leq \frac{2^{2(k + 1)} \cdot (\dT w_I)^2}{\Tilde{D}^2} \Var_U(g)
$$
Combining this with what we know about $\dir_{\irest}(g, g)$ and $\dir_U(g, g)$, we conclude the lemma. 
\end{proof}

\end{document}